\def\showkeys{0}
\def\showcolorlinks{1}
\newcommand{\nfrac}{\nicefrac}
\def\FullBox{\hbox{\vrule width 6pt height 6pt depth 0pt}}
\def\qed{\ifmmode\qquad\FullBox\else{\unskip\nobreak\hfil
\penalty50\hskip1em\null\nobreak\hfil\FullBox
\parfillskip=0pt\finalhyphendemerits=0\endgraf}\fi}
\newenvironment{proof}{\begin{trivlist} \item {\bf Proof.~~}}
   {\qed\end{trivlist}}
\newtheorem{theorem}{Theorem}[section]
\newtheorem{lemma}[theorem]{Lemma}
\newtheorem{fct}[theorem]{Fact}
\newtheorem{defn}[theorem]{Definition}
\newtheorem{remark}[theorem]{Remark}
\newtheorem{conjecture}[theorem]{Conjecture}
\newcommand{\E}{{\mathbb E}}
\newcommand{\bmu}{{\mu}}
\newcommand{\calF}{{\cal F}}
\newcommand{\calU}{{\cal U}}
\newcommand{\calP}{{\cal P}}
\newcommand{\Real}{{\mathbb R}}
\newcommand{\x}{x}
\newcommand{\y}{y}
\renewcommand{\v}{v}
\renewcommand{\u}{u}
\newcommand{\z}{z}
\newcommand{\V}{{\mathbf{V}}}
\newcommand{\GLC}{{$(\ell_2^2,\ell_1,O(1))$-Conjecture}}
\renewcommand{\epsilon}{\varepsilon}
\renewcommand{\v}{{v}}
\newcommand{\w}{{w}}
\renewcommand{\u}{{u}}
\newcommand{\PM}{{\{-1,1\}^N}}
\renewcommand{\cal}{\mathcal}
\newcommand{\MC}{{\sc MaxCut}}
\newcommand{\MUC}{{\sc MinUncut}}
\newcommand{\SC}{{\sc SparsestCut}}
\newcommand{\BS}{{\sc BalancedEdge-Separator}}
\newcommand{\UG}{{\sc UniqueGames}}
\newcommand{\VC}{{\sc VertexCover}}
\newcommand{\QP}{{\sc  QuadraticProgramming}}
\newcommand{\defeq}{\stackrel{\textup{def}}{=}}
\begin{document}

\title{\bf The Unique Games Conjecture, Integrality Gap for Cut Problems and
Embeddability of Negative Type Metrics into $\ell_1$\footnote{A preliminary version of this paper appeared in
FOCS 2005, see \cite{KhotV05}.}}

\author{Subhash A. Khot\thanks{Subhash A. Khot. New York University, NY, USA. Email: khot@cims.nyu.edu}  \and Nisheeth K. Vishnoi
\thanks{Microsoft Research, Bangalore, India. Email: nisheeth.vishnoi@gmail.com.}}

\date{}
\maketitle

\begin{abstract}

 In this paper, we disprove a conjecture of Goemans \cite{GoemansSurvey} and Linial \cite{LinialSurvey} (also see \cite{ARV,MatousekBook}); namely,  that
every negative
type metric embeds into $\ell_1$ with constant distortion.
 We show that for an arbitrarily small constant
$\delta
> 0$, for all large enough $n$, there is an $n$-point
negative type metric which requires distortion at least  $(\log\log n)^{\nfrac{1}{6}-\delta}$ to embed into $\ell_1.$

 Surprisingly, our construction is inspired by the Unique Games
Conjecture (UGC) of Khot \cite{KhotUCSP}, establishing a
previously  unsuspected connection between probabilistically checkable proof systems (PCPs)
and the
theory of metric embeddings. We first prove that the {UGC} implies a
super-constant hardness result for the (non-uniform) {\SC} problem.
Though this hardness result relies  on the {UGC}, we
demonstrate, nevertheless, that the corresponding {PCP}
reduction can be used to construct an  ``integrality gap
instance'' for \SC. Towards this, we first
construct an integrality gap instance for a natural {SDP}
relaxation of {\UG}. Then we ``simulate'' the {PCP}
reduction and ``translate'' the integrality gap instance of
{\UG} to an integrality gap instance of {\SC}.
This enables us to prove a $(\log \log n)^{\nfrac{1}{6}-\delta}$
integrality gap for  {\SC}, which is known to be equivalent to
the metric embedding lower bound.

\end{abstract}

\newpage
\tableofcontents

\newpage

\section{Introduction}\label{sec:intro}
\subsection{Metric Embeddings and their Algorithmic Applications}
In recent years, the theory of metric embeddings has played an
increasing role in algorithm design. The best approximation algorithms
for several {NP}-hard problems rely on techniques (and theorems)
used to embed one metric space into another while preserving all pairwise distances up to
a certain {\it not too large} factor, known as the {\it distortion} of the embedding.

\smallskip
Perhaps, the most well-known application of this paradigm is the {\SC} problem. Given an $n$-vertex graph along with a set of {\it demand pairs}, one seeks to find a  non-trivial partition of the graph that
minimizes the {\it sparsity}, i.e., the ratio of the number of edges cut to the number of demand pairs cut. Strictly speaking,
the problem thus defined is the
{\it non-uniform} version of {\SC} and in the absence of a qualification, we  always mean the non-uniform version.
In contrast, the {\it uniform} version refers to
the special case when the set of demand pairs consists of all possible $\binom{n}{2}$ vertex pairs.
In the uniform
version, the sparsity is the same (up to a factor $2$ and a normalization factor of $n$) as the ratio of the number of edges cut to
the size of the smaller side of the partition. A closely related problem is the {\BS} problem where one desires a partition
that cuts a constant fraction of  demand pairs and minimizes the number of the edges cut. In its uniform version,
one desires a {\it balanced} partition, say a $(\nfrac{1}{3}, \nfrac{2}{3})$-partition,\footnote{In the uniform case, for a parameter $b \in (0, \nfrac{1}{2}],$ a partition of the vertex set is said to be a  $(b,1-b)$ partition if each side of the partition contains at least $b$ fraction of the vertices.} that minimizes the number of the edges cut.

\smallskip
Bourgain \cite{BourgainMet} showed that every $n$-point metric
embeds into $\ell_2$  (and, hence, into $\ell_1$ since every $n$-point subset of $\ell_2$ isometrically embeds into
$\ell_1$) with distortion
$O(\log n)$.  Aumann and Rabani \cite{AR} and Linial, London and Rabinovich \cite{LLR}  independently gave a
striking application of Bourgain's theorem: An $O(\log n)$
approximation algorithm for {\SC}.
The approximation ratio is
exactly the distortion incurred in Bourgain's theorem. This gave
an alternate approach to the seminal work of Leighton and Rao
\cite{LR}, who obtained an $O(\log n)$ approximation  algorithm
for {\SC} via a linear programming (LP) relaxation based on multi-commodity flows.\footnote{In fact, algorithms based on
metric embeddings work for the more general {\it non-uniform}
version of {\SC}. The Leighton-Rao
algorithm worked only for the uniform version. }
 It is well-known that an $f(n)$ factor algorithm for {\SC}  can be
 used iteratively  to design an $O(f(n))$ factor  algorithm for
{\BS}. In particular, in the uniform case, given a graph that has a $(\nfrac{1}{2},
\nfrac{1}{2})$-partition cutting an $\alpha$ fraction of the edges,
the algorithm produces a $(\nfrac{1}{3}, \nfrac{2}{3})$-partition
that cuts  at most $O(f(n)\alpha)$ fraction of the edges. Such
partitioning algorithms are very useful as  sub-routines in
the design of graph theoretic algorithms via the divide-and-conquer paradigm.

\smallskip
The  results of \cite{AR,LLR} are based on the {\it metric {LP}
relaxation} of {\SC}. Given an instance $G(V,E)$ of {\SC}, let
$d_G$ be the $n$-point metric obtained as a solution to this {LP}. The metric $d_G$ is then embedded into $\ell_1$ via
Bourgain's  theorem. Since $\ell_1$ metrics are non-negative
linear combinations of {\it cut metrics}, an embedding into $\ell_1$
essentially gives the desired sparse cut (up to an $O(\log n)$
approximation factor). Subsequent to this result, it
was realized that one could write a {semi-definite programming} (SDP) relaxation of
{\SC} with the so-called {\it triangle inequality constraints} and enforce an additional condition that the metric $d_G$
belongs to a special subclass of metrics called the {\it negative
type  metrics} (denoted by  $\ell_2^2$). Clearly, if $\ell_2^2$
embeds into $\ell_1$ with distortion $g(n)$, then one gets a $g(n)$ approximation to {\SC} via
this {SDP} (and in particular the same upper bound on the {\it integrality gap} of the {SDP}).

\smallskip
The results of  \cite{AR,LLR} led  to the
conjecture that $\ell_2^2$  embeds into $\ell_1$
with distortion $C,$ where $C$ is an absolute constant. This conjecture has been attributed to Goemans \cite{GoemansSurvey} and Linial \cite{LinialSurvey}, see  \cite{ARV, MatousekBook}.
This conjecture, which we  henceforth refer to as  the \GLC, if
true, would have had tremendous algorithmic applications (apart
from being an important mathematical result). Several problems,
specifically cut problems (see \cite{DLBook}),  can be formulated
as optimization problems over the class of $\ell_1$ metrics, and
optimization over $\ell_1$ is an {NP}-hard problem in general.
However, one can optimize over $\ell_2^2$  metrics in
polynomial time via {SDP}s (and since $\ell_1 \subseteq \ell_2^2$, this is indeed a relaxation).
Hence, if $\ell_2^2$ metrics were embeddable into $\ell_1$ with
constant  distortion, one would get a computationally efficient constant factor
approximation to $\ell_1$ metrics.

\smallskip
However, no better embedding of $\ell_2^2$   into $\ell_1,$
other  than Bourgain's $O(\log n)$ embedding (that works for all
metrics), was known. A breakthrough result of
Arora, Rao and Vazirani ({ARV}) \cite{ARV} gave an
$O(\sqrt{\log n})$  approximation to
 (uniform) {\SC} by showing that the integrality gap of the
{SDP} relaxation is $O(\sqrt{\log n})$ (see also \cite{NRS}
for an alternate perspective on {ARV}).  Subsequently, {ARV} techniques were used by Chawla, Gupta and R\"acke \cite{CGR}
to give an $O(\log^{\nfrac{3}{4}} n)$ distortion embedding of $\ell_2^2$
metrics into $\ell_2$  and, hence, into $\ell_1.$ This result was
further improved to $O(\sqrt{\log n} ~\log\log n)$ by Arora, Lee
and Naor \cite{ALN}.\footnote{This implies, in particular, that
every $n$-point $\ell_1$ metric embeds into $\ell_2$ with
distortion $O(\sqrt{\log n} ~\log\log n)$, almost matching decades
old $\Omega(\sqrt{\log n})$ lower bound due to Enflo \cite{Enflo}.}
Techniques from {ARV} have also been applied to obtain
an $O(\sqrt{\log n})$ approximation to {\MUC} and  related problems
\cite{ACMM},  to {\sc VertexSeparator} \cite{FHL}, and to obtain
a  $2-O(\nfrac{1}{\sqrt{\log n}})$ approximation to {\VC} \cite{Karakostas}. It was conjectured in the {ARV}
paper that the integrality gap of the {SDP} relaxation of (uniform)
{\SC}   is bounded from above by an absolute constant.\footnote{The {\GLC}
implies the same also for the non-uniform version.} Thus, if the {\GLC} and/or the {ARV}-Conjecture were true, one would
potentially get a constant factor approximation to a host of
problems, and perhaps, an algorithm for {\VC} with an
approximation factor better than $2.$

\subsection{Our Contribution} \label{sec:our-results}
The main contribution of this paper is the disproval of the \GLC.
This is an immediate corollary of the following theorem  which proves the existence of an  appropriate integrality gap instance for non-uniform {\BS}. See Section \ref{sec:prelim} for a formal description of the \GLC,
the non-uniform {\BS} problem and its {SDP} relaxation, and how constructing an integrality gap for non-uniform {\BS}
 implies an integrality gap for non-uniform {\SC} and, thus,
disproves the \GLC.

\begin{theorem}[Integrality Gap Instance for Balanced Edge-Separator]\label{thm:main2} Non-uniform {\BS}
has  an integrality gap of
at least $(\log\log n)^{\nfrac{1}{6}-\delta},$ where $\delta > 0$ is an
arbitrarily small constant.  The integrality gap holds for a standard {SDP} relaxation with the
triangle inequality constraints.
\end{theorem}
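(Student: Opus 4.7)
The plan is to obtain the integrality gap for non-uniform \BS{} by a two-stage argument: first construct an integrality gap instance for a standard SDP relaxation of \UG, and then simulate the (hypothetical) UGC-based hardness reduction from \UG{} to \BS{} at the SDP level. The completeness of the reduction will be used to translate a good SDP solution for \UG{} into a good SDP solution for \BS{} that satisfies the triangle inequality, while the soundness analysis of the reduction will be used to rule out any genuine near-balanced sparse cut in the resulting graph. Since the non-uniform \SC{} integrality gap is equivalent (up to constants) to the $\ell_2^2 \hookrightarrow \ell_1$ distortion lower bound, this will disprove the \GLC.

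\textbf{Step 1: Integrality gap for \UG.} I would first exhibit, for each sufficiently large $R$, a \UG{} instance on label set $[R]$ with (i) a feasible SDP solution of value $1-o(1)$, yet (ii) no labeling satisfying more than an $\eta(R) \to 0$ fraction of the constraints. A natural candidate is to take the vertex set to be a quotient of a Boolean hypercube and define constraints via cyclic shifts in $\Z_R$, so that the vector solution is built from the standard Fourier basis of the hypercube (which automatically gives inner products compatible with the UG constraints and hence small SDP value), while the soundness (no good labeling) follows from a discrete harmonic-analytic/hypercontractive argument: any labeling would yield a bounded Boolean function with unrealistically large low-degree Fourier mass.

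\textbf{Step 2: Long-code style reduction to \BS.} I would then blow up each \UG{} vertex $v$ into a ``long code'' block of $2^R$ points indexed by $\{-1,1\}^R$, with intra- and inter-block edges defined by a noise operator with parameter $\rho$ and the \UG{} permutations, exactly as in the known UGC $\Rightarrow$ \BS{} reduction. From the \UG{} SDP vectors $\{u_{v,i}\}$ I would assemble \BS{} vectors by a tensor-product construction of the form (schematically) $w_{(v,x)} = \bigotimes_{i=1}^{R} x_i\, u_{v,i}$, appropriately symmetrized. The verification is essentially algebraic: the inner products $\langle w_{(v,x)}, w_{(v',x')}\rangle$ can be computed in closed form from the UG inner products and must be shown (a) to satisfy $\ell_2^2$-triangle inequalities and (b) to give small SDP objective on the \BS{} edges, while (c) the global configuration is nearly balanced.

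\textbf{Step 3: Soundness, i.e., no actual small balanced cut.} This is where I expect the main difficulty. Suppose the \BS{} instance had a balanced cut with only a small fraction of edges cut; the cut induces, for each UG vertex $v$, a Boolean function $f_v$ on $\{-1,1\}^R$. Small edge expansion plus balance translates into small noise sensitivity of each $f_v$, and then a Bourgain/KKL/Majority-Is-Stablest style theorem produces, for each $v$, a short list of coordinates of non-negligible influence. Mapping each $v$ to such an ``influential coordinate'' yields a labeling of the \UG{} instance satisfying a non-trivial fraction of the constraints, contradicting property (ii) of Step 1 once the parameters are tuned appropriately.

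\textbf{Obstacles and parameter balancing.} The main obstacle is the quantitative decoding step: the influence-based decoder only returns a label when the function has sufficiently small noise sensitivity, so the integrality gap we ultimately obtain scales with $R$ only polylogarithmically. Coupled with the fact that the number of vertices after the long-code blowup is roughly $|V| \cdot 2^R$, and that $|V|$ itself grows super-exponentially in $R$ for the \UG{} integrality gap in Step 1, the optimal choice of $R$ produces an $(\log\log n)^{\nfrac{1}{6} - \delta}$ gap. A subtler technical point is to verify that the tensored vectors from Step 2 really are negative-type (i.e., satisfy all triangle inequalities), not merely feasible for a weaker SDP; this requires exploiting the specific structure of the \UG{} instance from Step 1 rather than an arbitrary SDP solution.
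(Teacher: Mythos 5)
Your high-level strategy is essentially the paper's: build a \UG{} integrality gap from a small-set-expanding noisy-hypercube structure with SDP vectors drawn from the Fourier basis, simulate the Long-Code/noise-operator reduction to \BS{}, reuse the PCP soundness (via a Bourgain-type junta/influence theorem) to rule out actual sparse balanced cuts, and reuse completeness (the closeness of neighboring \UG{} SDP vectors) to show the translated \BS{} SDP solution is cheap. The parameter trade-off you sketch ($R$ only polylogarithmic in the number of vertices after the $|V|\cdot 2^R$ blowup, yielding the $(\log\log n)^{1/6-\delta}$ exponent) is also the paper's.

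However, your Step 2 vector construction is not just schematic, it is the wrong shape, and this is the crux of the theorem. You propose $w_{(v,x)} = \bigotimes_{i=1}^R x_i u_{v,i}$, i.e.\ a tensor \emph{product} of $R$ vectors. Its inner products across an edge $e\{v,w\}$ (after aligning indices by $\pi_e$) are $\prod_{i} x_i y_i \langle u_{v,i}, u_{w,i}\rangle \approx (1-\eta)^R$, which tends to $0$ since $\eta R \to \infty$ in the required parameter regime. That would make the \BS{} SDP objective close to $1$ rather than $O(\eta+\epsilon)$, destroying the gap. What the paper needs is a \emph{sum} of tensors of fixed constant order, namely $\V_{v,x} = \frac{1}{\sqrt N}\sum_{i=1}^N x_i\, u_{v,i}^{\otimes 8}$ (raised to a further odd tensor power $t$), for which $\langle \V_{v,x}, \V_{w,y}\rangle \approx (1-\eta)^8\cdot(1-2\Delta(x\circ\pi_e,y))$, giving inner products $\geq 1-O(\eta+\epsilon)$ on edges while preserving antipodality and, after the outer $\otimes t$, the triangle inequality. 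Also be careful in Step 1: the paper's group is $\mathbb{F}_2^k$, not $\Z_R$, and this is not cosmetic --- the $\pm 1$-valued characters $\chi_S$ are exactly what put all \UG{} SDP vectors on the (scaled) hypercube, which is what makes the triangle inequality for the \UG{} solution free and is repeatedly exploited (Matching and Closeness properties) in the brute-force triangle-inequality case analysis for the \BS{} vectors. With cyclic-shift constraints over $\Z_R$, the natural characters are complex and that structure is lost.
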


\begin{theorem}[{\GLC} is False]\label{cor:gl-disprove}  For an arbitrarily small constant $\delta > 0$,  for all sufficiently large
$n$, there is an $n$-point $\ell_2^2$ metric which cannot be
embedded into $\ell_1$ with distortion less than $(\log\log
n)^{\nfrac{1}{6}-\delta}.$
\end{theorem}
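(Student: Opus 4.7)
The plan is to derive Theorem \ref{cor:gl-disprove} as an essentially immediate corollary of Theorem \ref{thm:main2} via the standard equivalence between SDP integrality gaps for cut problems and distortion lower bounds for embedding $\ell_2^2$ into $\ell_1$. First, I would apply Theorem \ref{thm:main2} to obtain, for each small constant $\delta > 0$ and all sufficiently large $n$, an $n$-vertex instance of non-uniform {\BS} together with a feasible solution $\{v_i\}_{i \in V}$ of the SDP with the triangle inequality constraints whose objective value is smaller than the optimal integer value by a factor of at least $(\log\log n)^{\nfrac{1}{6} - \delta}$. Setting $d(i,j) := \|v_i - v_j\|_2^2$ produces the required candidate $n$-point $\ell_2^2$ (i.e.\ negative type) metric, since by construction $\{v_i\}$ satisfies the triangle inequality on squared distances.

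Second, I would invoke the reduction sketched in Section \ref{sec:prelim} that converts an integrality gap for non-uniform {\BS} into one (up to constants) for non-uniform {\SC}, on essentially the same vertex set and with essentially the same negative type metric. This reduction is soft: either iterate the {\SC}-style rounding to extract a sparse cut from any balanced separator, or augment the demand graph so that the balance constraint is automatic; in either case the SDP value and the integer value change only by constant factors, and the vector solution is preserved.

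Third, I would apply the Linial--London--Rabinovich / Aumann--Rabani characterization. For any graph $G(V,E)$ with demand pairs $\{(s_k,t_k)\}$ and any semimetric $d$ on $V$, one has
\[
\min_{S \subsetneq V} \frac{|E(S,\overline S)|}{|\{k : |S \cap \{s_k,t_k\}| = 1\}|} \;\le\; c_1(d) \cdot \frac{\sum_{(u,v) \in E} d(u,v)}{\sum_k d(s_k,t_k)},
\]
where $c_1(d)$ denotes the least distortion of any embedding of $d$ into $\ell_1$. The inequality is proved by writing an optimal $\ell_1$ embedding as a non-negative combination of cut metrics and averaging, which turns a low-distortion embedding into an integral cut with small sparsity. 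Combining this with the integrality gap from the previous step immediately forces $c_1(d) \ge (\log\log n)^{\nfrac{1}{6} - \delta}$ for the metric $d$ extracted from the SDP solution, which is exactly Theorem \ref{cor:gl-disprove}.

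All of the heavy lifting therefore sits inside Theorem \ref{thm:main2}; the present statement is a packaging step. The only point deserving care is the {\BS}-to-{\SC} transfer of the integrality gap, to confirm that the same vector solution (and hence the same $\ell_2^2$ metric) witnesses the gap for non-uniform {\SC} without more than a constant-factor loss; but no new analytic or geometric argument about metric embeddings beyond the cut-cone decomposition of $\ell_1$ is required.
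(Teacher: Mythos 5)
Your proposal is correct and follows essentially the same route as the paper: extract the negative type metric from the SDP solution certified by the integrality gap construction, then observe that a low-distortion embedding into $\ell_1$ would (via the cut-cone decomposition and iterated sparse-cut rounding) yield a balanced cut of small edge weight, contradicting the soundness guarantee. The only cosmetic difference is that the paper derives Theorem~\ref{cor:gl-disprove} directly from Theorem~\ref{thm:graph-construction} and packages the ``embed $\Rightarrow$ balanced cut'' direction as Lemma~\ref{lem:existence-cut} (handling the piecewise-balance bookkeeping via a short Cauchy--Schwarz computation), whereas you phrase the same step as a BS-to-SC transfer followed by the LLR/Aumann--Rabani bound; these are the same argument in different clothing.
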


\noindent
A surprising aspect of our integrality gap construction is that it proceeds via the Unique Games Conjecture ({UGC}) of Khot \cite{KhotUCSP}  (see Section \ref{sec:UGC} for
the statement of the conjecture). 
 We first prove that the {UGC} implies a
super-constant hardness result for  non-uniform {\BS}. 

\begin{theorem}[UG-Hardness for Balanced Edge-Separator]\label{thm:bs-main} Assuming the Unique Games Conjecture, non-uniform {\BS}  is {NP}-hard to
approximate within any constant factor.
\end{theorem}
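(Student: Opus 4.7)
The plan is to carry out a long-code PCP reduction from Unique Games to non-uniform \BS. Given a UG instance $\Phi=(V,E,[k],\{\pi_{uv}\})$, a small noise rate $\epsilon>0$, and a large alphabet size $k$, construct a weighted graph $G$ with demand pairs as follows. The vertex set is $V\times\{-1,1\}^k$; think of each $V_v=\{v\}\times\{-1,1\}^k$ as a ``cloud.'' For every UG edge $(u,v)$, every $x\in\{-1,1\}^k$, and every $\eta\in\{-1,1\}^k$ drawn from the $\epsilon$-noise distribution (each coordinate independently $-1$ with probability $\epsilon$), place a graph edge between $(u,x)$ and $(v,\pi_{uv}(x)\cdot\eta)$, where $\pi_{uv}$ acts on $x$ by permuting coordinates and $\cdot$ is coordinate-wise product; normalize so the total edge weight is $1$. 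The demand pairs are $\{((v,x),(v,-x)) : v\in V,\,x\in\{-1,1\}^k\}$, normalized to total weight $1$; separating these pairs ``folds'' the cut indicator on each $V_v$ into an odd function.

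\textbf{Completeness.} If $L:V\to[k]$ satisfies a $(1-\gamma)$-fraction of UG edges, the ``dictator cut'' that places $(v,x)$ on side $+1$ iff $x_{L(v)}=1$ separates every demand pair. For each satisfied UG edge $(u,v)$, writing $i=L(v)$ and $j=L(u)$ with $\pi_{uv}(j)=i$, an edge between $(u,x)$ and $(v,\pi_{uv}(x)\cdot\eta)$ is cut iff $x_j\ne x_j\cdot\eta_i$, i.e., iff $\eta_i=-1$, with probability $\epsilon$. Hence the edge weight cut is at most $\epsilon+O(\gamma)$ while the demand weight cut is $1$, giving sparsity $O(\epsilon)$.

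\textbf{Soundness.} Suppose a cut $S$ separates at least a constant $b$-fraction of demand pairs while cutting at most $\delta$-fraction of edge weight. Let $f_v:\{-1,1\}^k\to\{-1,1\}$ be $\pm1$ on $S\cap V_v$. The demand condition gives $\E_v\Pr_x[f_v(x)\ne f_v(-x)]\ge b$, equivalently that $f_v$ has significant mass on odd Fourier coefficients on average. The edge condition gives low cross-cloud noise sensitivity $\E_{(u,v)\in E}\Pr_{x,\eta}[f_u(x)\ne f_v(\pi_{uv}(x)\cdot\eta)]\le\delta$. For each $v$ define $\mathrm{Cand}(v)\subseteq[k]$ to be the coordinates $i$ whose low-degree influence $\mathrm{Inf}_i^{\le d}(f_v)$ exceeds a threshold $\tau$, with $d,\tau$ depending only on $\epsilon,b$. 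Using a Bourgain-style junta theorem (small noise sensitivity implies closeness to a junta) together with the odd Fourier mass lower bound, one argues $|\mathrm{Cand}(v)|\le O(1/\tau)$ and $\mathrm{Cand}(v)\ne\emptyset$ for a constant fraction of $v$. A standard influence-matching argument then shows that for a constant fraction of UG edges $(u,v)$, $\mathrm{Cand}(u)\cap\pi_{uv}^{-1}(\mathrm{Cand}(v))\ne\emptyset$; assigning labels uniformly from $\mathrm{Cand}(\cdot)$ therefore satisfies a constant fraction of UG edges, contradicting UG soundness.

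\textbf{Main obstacle.} The crux is the soundness argument: quantitatively translating ``low average noise sensitivity across UG edges'' into ``shared low-degree influential coordinates.'' The tension is that a single dictator already has noise sensitivity $\epsilon$, exactly matching completeness, so the soundness proof must exploit the rigid UG structure (possibly in a smoothed form) to rule out a globally consistent family of dictator-like functions when the UG instance is far from satisfiable. The resulting inapproximability factor is $\delta/\epsilon$; sending $\epsilon\to 0$ while tuning $k$ and the UG soundness accordingly drives this ratio to any desired constant $M$, yielding the claimed arbitrary-constant hardness.
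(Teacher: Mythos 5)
Your proposal follows essentially the same route as the paper: a Long Code PCP verifier applying a noise test across UG-edge-matched hypercubes, completeness via the dictator cut, and soundness via Bourgain's Junta Theorem followed by an influence-decoding of the label. The one genuine (though minor) deviation is the demand structure --- you place unit demand only between antipodal pairs $((v,x),(v,-x))$ and extract an odd-Fourier-mass condition, whereas the paper places unit demand between \emph{all} pairs within a block and extracts a piecewise-balance condition $\E_v|\widehat{A}^v_\emptyset|\leq\nfrac{5}{6}$; both serve the same role (bounding the constant Fourier coefficient) and either works, but beware that your ``low cross-cloud noise sensitivity'' must first be converted into a bound on each $f_v$'s own low-degree Fourier concentration (the paper does this via a Cauchy--Schwarz step before invoking Bourgain), a step that your sketch glosses over under ``one argues.''
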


\noindent
This particular result was also proved independently
by Chawla {et al.}
\cite{CKKRS}.
Note that this result leads to the following implication:
If the {UGC} is true and P $\not=$ NP, then the {\GLC} must be false!  This
is a rather peculiar situation, because the {UGC} is still
unproven,  and may very well be false.
Nevertheless, we are able to disprove the {\GLC} {\em
unconditionally}.
Indeed, the {UGC} plays a crucial role in our
disproval. Let us outline the  high-level approach we take. First, we
build an integrality gap instance for a natural {SDP}
relaxation of {\UG} (see Figure \ref{sdp:ugc}). We then  {\it translate} this integrality gap instance
into an integrality gap instance of non-uniform {\BS}. This translation {\it mimics} the {PCP} reduction from the
{UGC} to this problem.

\medskip
The integrality gap instance for the
{\UG} {SDP} relaxation (see Figure \ref{sdp:ugc}) is stated below and is one of our main contributions. Here,
we choose to provide an informal description  of this construction
(the reader should be able to understand this construction without even
looking at the {SDP} relaxation).

\begin{theorem}[Integrality Gap Instance for Unique Games- {\em Informal Statement}] \label{thm:ugc-igap} Let $N$ be an integer and
$\eta > 0$ be a parameter (think of $N$ as large and $\eta$ as
tiny). There is a graph $G(V,E)$ of size ~$\nfrac{2^N}{N}$ with the
following properties: Every vertex $u \in V$ is assigned a set of
unit vectors $B(u) \defeq \{\u_1, \ldots, \u_N\}$ that form an
orthonormal basis for the space $\Real^N$. Further,
\begin{enumerate}
\item For every edge $e\{u,v\} \in E$, the sets of vectors $B(u)$
and
      $B(v)$ are almost the same up to some small perturbation. To
      be precise, there is a permutation $\pi_e : [N] \mapsto
      [N],$ such that $\forall \ 1 \leq i \leq N$, $ \  \langle \u_{\pi_e(i)} , \v_{i} \rangle \geq
      1- \eta$. ~In other words, for every edge $(u,v) \in E$, the basis $B(u)$
      moves {\em smoothly/continuously} to the basis $B(v)$.
\item For  any {\em labeling}  $\lambda : V \mapsto [N]$,
       i.e., assignment of an integer $\lambda(u)\in [N]$ to every $u \in V$,
      for at least $1-\nfrac{1}{N^\eta}$ fraction of the edges $e\{u,v\} \in
      E$, we have
      $\lambda(u) \not= \pi_e(\lambda(v))$. In other words, no
      matter how we choose to assign a vector $\u_{\lambda(u)} \in B(u)$ for
      every vertex $u \in V$,
      the movement from $\u_{\lambda(u)}$ to $\v_{\lambda(v)}$ is
      {\em discontinuous} for almost all edges $e\{u,v\} \in E$.
\item All vectors in $\cup_{u \in V} B(u)$ have coordinates in the set
      $\{\nfrac{1}{\sqrt{N}}, \nfrac{-1}{\sqrt{N}}\}$ and, hence, any
      three of them satisfy the triangle inequality constraint.
\end{enumerate}
\end{theorem}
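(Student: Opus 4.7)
The plan is to realize the construction on the Boolean hypercube using its Hadamard (character) subgroup. Fix an integer $n$ and set $N=2^n$. For $T\subseteq[n]$ and $y\in\{-1,1\}^n$, define $\chi_T(y)=\prod_{i\in T}y_i$, so each $\chi_T$ is a sign vector of length $N$. Under coordinate-wise multiplication, $\Lambda=\{\chi_T:T\subseteq[n]\}$ is a subgroup of $\{-1,1\}^N$ of order $N$, and $V:=\{-1,1\}^N/\Lambda$ has $|V|=2^N/N$. For each coset $c\in V$, take $B(c)$ to be the $N$ sign vectors lying in $c$, each rescaled by $1/\sqrt{N}$; orthogonality within $B(c)$ follows from $\sum_{y\in\{-1,1\}^n}\chi_{T_1}(y)\chi_{T_2}(y)=0$ for $T_1\neq T_2$. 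Property~3 is then automatic since every coordinate equals $\pm 1/\sqrt N$.

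For property~1, add each edge $(c,c')$ with weight proportional to the probability that $c'=(x\odot z)\Lambda$ for some $x\in c$ and $z\in\{-1,1\}^N$ drawn as $\eta$-noise (each $z_i$ independently equal to $-1$ with probability $\eta/2$); identify labels with subsets of $[n]$ and take $\pi_e$ to be the identity. A one-line computation gives $\langle(x\odot\chi_T)/\sqrt N,\,(xz\odot\chi_T)/\sqrt N\rangle=\tfrac1N\sum_i z_i$, which concentrates tightly around $1-\eta$. Conditioning on the (overwhelmingly likely) event that $z$ has at most $\eta N/2$ minus signs turns this into the deterministic lower bound $\geq 1-\eta$ required by the statement, at the cost of discarding a negligible fraction of edge weight.

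The heart of the argument is property~2. A labeling of $V$ lifts to a $\Lambda$-covariant map $F:\{-1,1\}^N\to\{0,1\}^n$ satisfying $F(x\odot\chi_T)=F(x)+T\pmod 2$, so under the identity permutation the edge constraint becomes $F(x)=F(xz)$. Covariance forces the level sets $A_S=F^{-1}(S)$ to be $\Lambda$-translates of a common set $A_0$ of measure $1/N$ each. Since the noise operator commutes with group translations, the expected fraction of satisfied edges equals
\[
\sum_{S}\mathrm{Stab}_{1-\eta}(A_S)=N\cdot\mathrm{Stab}_{1-\eta}(A_0).
\]
The Bonami--Beckner hypercontractive inequality gives $\mathrm{Stab}_{\rho}(A)\leq\mu(A)^{2/(1+\rho)}$, so
\[
N\cdot\mathrm{Stab}_{1-\eta}(A_0)\leq N\cdot N^{-2/(2-\eta)}=N^{-\eta/(2-\eta)}\leq N^{-\eta'}
\]
for any $\eta'<\eta/2$. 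Absorbing the factor of $2$ into a renamed noise parameter recovers the statement.

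The step expected to be the main obstacle is the structural observation that a $\Lambda$-covariant labeling is determined by the single set $A_0$ of measure $1/N$, with the other level sets forced to be its translates. It is precisely this reduction that converts the soundness question into a small-set expansion estimate on the Boolean cube, at which point the bound is delivered by a black-box invocation of hypercontractivity. The remaining components (defining the coset graph, computing the inner products, and verifying the $\pm 1/\sqrt N$ coordinates) are direct calculations from the construction.
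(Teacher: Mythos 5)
Your construction coincides with the paper's (Section~\ref{sec:UGC}): the cosets of $\Lambda$ are exactly the equivalence classes $\calP_1,\dots,\calP_m$, the $\Lambda$-covariant lift $F$ and its level set $A_0$ are the set $S'_\lambda$ in the label-extended graph, and your hypercontractive bound $N\cdot\mathrm{Stab}_\rho(A_0)\leq N\cdot(1/N)^{2/(1+\rho)}$ is precisely Lemma~\ref{l:noisy-hyp}. Two small repairs are needed: $\pi_e$ cannot literally be the identity (a constant labeling would then satisfy every edge, contradicting item~2) --- it is the symmetric-difference shift determined by the chosen representatives on the two sides of the edge, which your covariant-lift computation does encode correctly, so only the phrasing is off; and with per-coordinate flip probability $\eta/2$ the event that $z$ has at most $\eta N/2$ minus signs occurs with probability only about $\nfrac{1}{2}$, not $1-o(1)$, so to make the discarded edge weight genuinely negligible you should shrink the noise rate or widen the retained window to something like $[\nfrac{\eta N}{2},2\eta N]$ as the paper does.
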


\noindent
This {\UG} integrality gap instance construction is rather non-intuitive (at least to the authors when this paper
 was first written): One can walk on the
graph $G$ by changing the basis $B(u)$ continuously, but as soon
as one picks a {\it representative vector} for each basis, the
motion becomes discontinuous almost everywhere. Of course, one can
pick these representatives in a continuous fashion for any small
enough local sub-graph of $G$, but there is no way to pick
representatives in a global fashion.

\medskip
Before we present a high-level overview of our proofs and discuss the difficulties involved, we 
give a brief overview of related and subsequent works
since the publication of our paper in 2005.

\subsection{Subsequent Works}

For non-uniform {\BS} and, hence,  non-uniform {\SC}, our lower bound was improved to $\Omega(\log \log n)$ by
Krauthgamer and Rabani \cite{KrauthgamerR09} and then to $(\log n)^{\Omega(1)}$ in a sequence of papers by Lee and Naor \cite{LeeN06} and
Cheeger, Kleiner and Naor \cite{CheegerK10, CheegerKN09}.
For the uniform case, Devanur {et al.} \cite{DevanurKSV06} obtained the first super-constant
lower bound of  $\Omega (\log  \log n)$, thus, disproving the ARV conjecture as well. This latter bound has been  recently improved to $2^{\Omega(\sqrt{\log \log n})}$
by Kane and Meka \cite{KaneM13}, building on the {\it short code} construction of Barak {et al.} \cite{BarakGHMRS12}. At a high level,
the constructions in \cite{KrauthgamerR09, DevanurKSV06} are in the same spirit as ours\footnote{Both \cite{KrauthgamerR09} and \cite{DevanurKSV06}  use a result of  Kahn, Kalai and Linial \cite{KahnKL88} instead of Bourgain (Theorem \ref{thm:bourgain-junta}) as in our paper.}  whereas the constructions in \cite{LeeN06, CheegerK10, CheegerKN09} are
entirely different, based on the geometry of Heisenberg group.

\medskip
An unsatisfactory aspect of our construction (and the subsequent ones in \cite{KrauthgamerR09, DevanurKSV06}) is that the feasibility of the triangle
inequality constraints is proved in a brute-force manner with little intuition. A more intuitive proof along with more
general results is obtained by Raghavendra and Steurer \cite{RaghavendraS09} and Khot and Saket \cite{KhotS09}. As a non-embeddability result,
these papers present an $\ell_2^2$ metric that requires super-constant distortion to embed into $\ell_1$, but in addition,
every sub-metric of it on a super-constant number of points is isometrically embeddable into $\ell_1$. The result of Kane and Meka
 also shares this stronger property. We remark that the Kane and Meka
result can be viewed as a derandomization of results in
our paper and those in \cite{KrauthgamerR09, DevanurKSV06, RaghavendraS09, KhotS09}.

\medskip
In hindsight, our paper may be best
viewed as  a scheme that translates a {UGC}-based hardness result into
an integrality gap for a {SDP} relaxation with  triangle inequality constraints.
In the conference version of our paper \cite{KhotV05}, we applied this scheme
to the {\MC} and {\MUC} problems as well. In particular, for
{\MC}, we showed that the integrality gap for the Goemans and Williamson's
{SDP} relaxation \cite{GW} remains unchanged even after adding triangle inequality
constraints. Subsequent works of Raghavendra and Steurer \cite{RaghavendraS09} and Khot and Saket \cite{KhotS09}
cited above extend this paradigm in two directions: Firstly,
their {SDP} solution satisfies additional constraints given
by a super-constant number of rounds of the so-called Sherali-Adams {LP} hierarchy and secondly,
they demonstrate that the paradigm holds for every constraint satisfaction problem ({CSP}).
Since these two works already present more general results and in a more intuitive manner,  we
omit our results for {\MC} and {\MUC} from this paper and keep the overall presentation
cleaner by restricting only to {\SC}.

\medskip
Further, a result of Raghavendra \cite{Raghavendra08} shows that the integrality gap for a certain {\it canonical} {SDP}
relaxation can be {\em translated} into a {UGC}-based hardness result with the same gap (this is
a {\em translation} in the opposite direction as ours). Combined with the results in \cite{RaghavendraS09, KhotS09},
one concludes that the integrality gap for the basic {SDP} relaxation remains unchanged even after
adding a super-constant number of rounds of the Sherali-Adams {LP} relaxation.
Finally, our techniques have inspired
integrality gap for problems that are strictly speaking not {CSP}s,
e.g., integrality gap for the {\QP} problem in \cite{AroraBKSH05, KhotO09}
and some new non-embeddability results, e.g., for the edit distance \cite{KhotNaor}.

\subsection*{Rest of the Introduction}
In Section \ref{sec:intro-overview}, we give a high level overview of
our $\ell_2^2$ vs. $\ell_1$ lower bound. The construction is arguably
unusual and so is the construction of Lee and Naor \cite{LeeN06}
which is based on the geometry of Heisenberg group. 
The latter construction
also needs rather involved mathematical machinery to prove its correctness, see \cite{CheegerKN09}. 
In light of this,
it seems worthwhile to point out the difficulties faced by the
researchers towards proving the lower bound. 
Our discussion in Section \ref{sec:difficulty}
is informal, without precise statements or claims.

\subsection{Difficulty in Proving $\ell_2^2$ vs. $\ell_1$ Lower
Bound}\label{sec:difficulty}

\medskip \noindent {\bf Difficulty in constructing $\ell_2^2$
metrics:} To the best of our knowledge, no {\em natural} or {\em obvious} families of
$\ell_2^2$ metrics are known other than the Hamming metric on
$\{-1,1\}^k$. The Hamming metric is an $\ell_1$ metric and, hence, not
useful for the purposes of obtaining  $\ell_1$ lower bounds. Certain $\ell_2^2$
metrics can be constructed via Fourier analysis and one can also
construct some by solving {SDP}s explicitly. The former approach has
a drawback that metrics obtained via Fourier methods typically
embed into $\ell_1$ isometrically. The latter approach has limited
scope, since one can only  hope to solve {SDP}s of  moderate size.
Feige and Schechtman \cite{FS} show that selecting an appropriate
number of points  from the unit sphere gives an $\ell_2^2$ metric.
However, in this case, most pairs of points have distance
$\Omega(1)$ and, hence, the metric is likely to be
$\ell_1$-embeddable with low distortion.

\medskip \noindent {\bf Difficulty in proving $\ell_1$ lower bounds:}
The techniques to prove an $\ell_1$-embedding lower bound are limited.
To the best of our knowledge, prior to this paper, the only
interesting (super-constant) lower bound was due to  \cite{AR,LLR},
where it is shown that the shortest
path metric on a constant degree expander requires $\Omega(\log
n)$ distortion to embed into $\ell_1$.\footnote{We develop a Fourier analytic technique to
prove an $\ell_1$-embedding lower bound that has been subsequently used in
\cite{KrauthgamerR09, DevanurKSV06, KhotNaor}. The approach of \cite{CheegerK10,CheegerKN09} gives another technique,
by developing an entire new theory of $\ell_1$-differentiability and its
quantitative version. }

\medskip \noindent {\bf General theorems regarding group norms:} A
{\it group norm} is a distance function $d( \cdot,\cdot )$ on a
group $(G,\circ),$ such that $d(x,y)$ depends only on the group
difference $x\circ y^{-1}$. Using Fourier methods, it is possible
to construct group norms that are $\ell_2^2$ metrics. However,
it is known that any group norm on $\mathbb{R}^k,$ or on any group
of characteristic $2,$ is isometrically $\ell_1$-embeddable (see
\cite{DLBook}). Such a result might hold, perhaps allowing a small distortion,
 for every Abelian group
(see \cite{AustinNV10}).  Therefore,
an approach via group norms would probably
not succeed as long as the underlying
group is Abelian. On the other hand, only in the Abelian case,   Fourier methods work well.

\medskip
 The best known lower bounds  for the $\ell_2^2$ versus $\ell_1$
question, prior to this paper,   were due to Vempala ($\nfrac{10}{9}$ for a metric obtained
by a computer search)  and Goemans ($1.024$ for a metric based on the
Leech Lattice), see \cite{SchechtmanSurvey}.  Thus, it appeared that an entirely new approach
was needed to resolve the {\GLC}. In this paper, we present an
approach based on tools from complexity theory, namely, the {UGC}, {PCP}s, and Fourier analysis of Boolean functions.
Interestingly, Fourier analysis is used both to construct the
$\ell_2^2$ metric, as well as, to prove the $\ell_1$ lower bound.

\subsection{Overview of Our $\ell_2^2$ vs. $\ell_1$ Lower Bound} \label{sec:intro-overview}

In this section, we present a high level idea of our $\ell_2^2$
versus $\ell_1$ lower bound, i.e., Theorem \ref{cor:gl-disprove}.
Given the construction of Theorem \ref{thm:ugc-igap}, it is fairly
straight-forward to describe the candidate $\ell_2^2$ metric:
Let $G(V,E)$ be the graph, and $B(u)$ be the orthonormal basis for
$\Real^N$ for every $u \in V$ as in Theorem \ref{thm:ugc-igap}.
  For $u \in V$ and $\x=(x_1,\ldots,x_N) \in \{-1,1\}^N,$ define the vector $\V_{u,x}$ as follows:\footnote{For a vector $x \in \mathbb{R}^N$ and an integer $l,$  the $l$-th tensor of $x,$ $y \defeq x^{\otimes l},$ is a vector in $(\mathbb{R}^N)^l$ defined such that  for $i_1,i_2,\ldots, i_l \in [N],$ $y_{i_1,i_2,\ldots,i_l} \defeq x_{i_1} x_{i_2} \cdots x_{i_l}.$ It follows that for $x,z \in \mathbb{R}^N,$ $\left\langle x^{\otimes l},z^{\otimes l} \right\rangle = \sum_{i_1,i_2,\ldots, i_l \in [N]} (x_{i_1}x_{i_2}\cdots x_{i_l})(z_{i_1}z_{i_2}\cdots z_{i_l})  = \left(\sum_{i \in [N]} x_iz_i  \right)^l =  \langle x,z \rangle ^l.$}
 \begin{equation} \label{eqn:v-usx}
  \V_{u,x}  \defeq  \frac{1}{\sqrt{N}} \sum_{i=1}^N x_i
 \u_i^{\otimes 8}.
   \end{equation}
Note that since $B(u) = \{\u_1,  \ldots, \u_N\}$ is an
orthonormal basis for $\mathbb{R}^N,$  every $\V_{u,x}$ is a unit vector. Fix $t$
to be a large odd integer, for instance $2^{240}+1$, and consider the set
of unit vectors
 $$ {\cal S} \defeq \left\{ \V_{u,x}^{\otimes t} \ | \ u \in V, \ x
 \in \{-1,1\}^N \right\}. $$
Using, essentially,  the fact that the
  vectors in $\cup_{u \in
V} B(u)$ are a {\em good} solution to the {SDP} relaxation of {\UG},
we are able to show that every triple of vectors in ${\cal S}$ satisfy the
triangle inequality constraint and, hence, $\mathcal{S}$ defines an $\ell_2^2$ metric. One can also directly  show  that this $\ell_2^2$
metric does not embed into $\ell_1$ with distortion less than $(\log
N)^{\nfrac{1}{6}-\delta}$.

\medskip
However, we choose to present our construction in a different and
an  indirect way.  The (lengthy) presentation goes through the
UGC and the {PCP} reduction from {\UG}
integrality gap instance to {\BS}. Hopefully, our presentation
 brings out the intuition as to why and how we came up with the
above set of vectors, which happened to define an $\ell_2^2$
metric. At the end, the reader should recognize that the idea of
taking all $+/-$ linear combinations of vectors in $B(u)$ (as in
Equation \eqref{eqn:v-usx}) is directly inspired by the {PCP}
reduction. Also, the proof of the $\ell_1$ lower bound is
hidden inside the {\it soundness analysis} of the {PCP}.

\medskip
\noindent
The overall construction can be divided into three steps:
\begin{enumerate}
\item A {PCP} reduction from {\UG} to {\BS}.       \item Constructing an integrality gap instance for a natural {SDP}
relaxation of {\UG}.
\item Combining the above two to construct an
integrality gap instance of {\BS}. This also
gives an $\ell_2^2$ metric that needs $(\log \log n)^{\nfrac{1}{6}-\delta}$  distortion
to embed into $\ell_1$.
\end{enumerate}

\noindent
We present an overview of each of these steps in three separate
sections. Before we do that, let us summarize the precise notion
of an integrality gap instance of {\BS}. To keep things simple in
this exposition, we  pretend as if our construction works for
the uniform version of {\BS} as well. (Actually it does not; we
have to work with the non-uniform version which complicates things a
little.)

\subsubsection*{{SDP} Relaxation of Balanced Edge-Separator}

Given a graph $G'(V',E')$, {\BS} asks
for a $(\nfrac{1}{2}, \nfrac{1}{2})$-partition of $V'$ that cuts as
few edges as possible (however, the algorithm is allowed to output
a roughly balanced partition, say $(\nfrac{1}{4},
\nfrac{3}{4})$-partition). We  denote an edge $e$ between vertices $i,j$ by $e\{i,j\}.$ The {SDP} relaxation of
{\BS} appears in Figure \ref{fig:usdp-bs}.

\begin{figure}[hbt]
\begin{equation} \label{bl-usdp-1}
  \mbox{Minimize} \ \ \ \frac{1}{|E'|}  \sum_{e'\{i,j\} \in E'}  \frac{1}{4} \|\v_i -
  \v_j  \|^2
\end{equation}
Subject to
\begin{eqnarray}
\forall  \  i \in V' &  \| \v_i \|^2 = 1 \label{bl-usdp-2} \\
\forall \ i,j,l\in V'  &  \| \v_i - \v_j \|^2 + \|\v_j - \v_l\|^2 \geq \| \v_i - \v_l\|^2  \label{bl-usdp-3}\\
 & \sum_{i < j}  \| \v_i - \v_j\|^2 \geq |V'|^2
 \label{bl-usdp-4}
\end{eqnarray}
\caption{ {SDP} relaxation of the uniform version of {\BS}}
 \label{fig:usdp-bs}
\end{figure}

\noindent
Note that a $\{+1,-1\}$-valued solution represents a true
partition and, hence, this is an {SDP} relaxation. Constraint
\eqref{bl-usdp-3} is the triangle inequality constraint and
Constraint \eqref{bl-usdp-4} stipulates that the partition be
balanced.\footnote{Notice that if a set of vectors  $\{v_i: i \in V'\}$ is such that for every vector in the set, its antipode is also in  the set, then constraint \eqref{bl-usdp-4} is automatically satisfied. Our construction   obeys this property.} 
The notion of integrality gap is summarized in the following
definition:

\begin{defn} \label{defn:bs-igap}
An integrality gap instance of {\BS} is a  graph
$G'(V',E')$ and an assignment of unit vectors $i \mapsto \v_i$ to
its vertices such that:
\begin{itemize}
\item Every  balanced partition (say $(\nfrac{1}{4},
\nfrac{3}{4})$-partition, this choice is arbitrary)  of $V'$ cuts at least $\alpha$ fraction of edges. \item  The set of vectors
$\{\v_i | \ i \in V'\}$ satisfy
\eqref{bl-usdp-2}-\eqref{bl-usdp-4}, and the {SDP} objective value in
Equation \eqref{bl-usdp-1} is at most $\gamma$.
\end{itemize}
The integrality gap is defined to be $\nfrac{\alpha}{\gamma}$ (thus, we
desire that $\gamma \ll \alpha$).
\end{defn}

\noindent
 The next three sections describe the three steps involved
 in constructing an integrality gap instance of {\BS}. Once that is done, it follows from a folk-lore
 result that the resulting $\ell_2^2$ metric (defined by
 vectors $\{\v_i | \ i \in V'\}$) requires distortion at least
 $\Omega(\nfrac{\alpha}{\gamma})$ to embed into $\ell_1$. This would prove
 Theorem \ref{cor:gl-disprove} with an appropriate choice of
 parameters.

\subsubsection*{The {PCP} Reduction from Unique Games  to Balanced Edge-Separator}
An instance $\ \calU=(G(V,E), [N],
\{\pi_e\}_{e \in E})$  of {\UG}  consists of a graph $G(V,E)$ and
permutations $\pi_e : [N] \mapsto [N]$ for every edge $e\{u,v\} \in
E$. The goal is to find a {\em labeling} $\lambda : V \mapsto [N]$
 that {\em satisfies} as many edges as possible. An edge
$e\{u,v\}$ is satisfied if $\lambda(u) = \pi_e(\lambda(v))$. Let
${\rm opt}(\calU)$ denote the maximum fraction of edges satisfied by any
labeling.

\medskip
\noindent {\it {UGC} (Informal Statement): It is {NP}-hard to
decide whether an instance  $\calU$   of {\UG} has ${\rm opt}(\calU)
\geq 1- \eta$ ({YES} instance) or ${\rm opt}(\calU)\leq \zeta$ ({NO}
instance), where $\eta, \zeta>0$ can be made arbitrarily small by
choosing $N$ to be a sufficiently large constant. }

\medskip
\noindent
It is possible to construct an instance of {\BS} $G'_\epsilon(V',E')$ from an instance
of {\UG}. We describe only the high
level idea here. The construction is parameterized by
$\epsilon>0$. The graph $G'_\epsilon$ has a block of $2^N$ vertices
for every $u \in V$. This block contains one vertex for every point
in the Boolean hypercube $\{-1,1\}^N$. Denote the set of these
vertices by $V'[u].$ More precisely,
  $$ V'[u]  \defeq \left\{ (u, x) \ | \  x \in \{-1,1\}^N \right\}.$$
We let $V'  \defeq \cup_{u \in V} V'[u]$. For every edge $e\{u,v\} \in
E$, the graph $G'_\epsilon$ has edges between the blocks $V'[u]$
and $V'[v]$. These edges are supposed to capture the constraint
that the labels of $u$ and $v$ are consistent, i.e., $\lambda(u) =
 \pi_e( \lambda(v) )$. Roughly speaking, a vertex $(u,x) \in
V'[u]$ is connected to a vertex $(v,y) \in V'[v]$ if and only if,  after
identifying the coordinates in $[N]$ via the permutation $\pi_e$, the
 Hamming
distance between the bit-strings $x$ and $y$ is about
$\epsilon N$.
This reduction has the following two properties:
\begin{theorem} \label{thm:bs-pcp-informal} ({PCP} reduction: Informal statement)
\begin{enumerate}
\item (Completeness/{YES} case):  If ${\rm opt}(\calU) \geq 1-\eta$, then
the
 graph $G'_\epsilon$ has a $(\nfrac{1}{2}, \nfrac{1}{2})$-partition
 that cuts at most $\eta+\epsilon$ fraction of its edges.
\item (Soundness/{NO} Case): If ${\rm opt}(\calU) \leq
 2^{-O(\nfrac{1}{\epsilon^2})}$, then every $(\nfrac{1}{4}, \nfrac{3}{4})$-partition
 of $G'_\epsilon$ cuts at least $\sqrt{\epsilon}$
 fraction of its edges.
\end{enumerate}
\end{theorem}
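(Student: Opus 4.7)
The plan is to use a long-code / noise-stability reduction: for each edge $e\{u,v\}\in E$ with permutation $\pi_e$, add weighted edges between $V'[u]$ and $V'[v]$ by the sampling procedure ``draw $x\in\{-1,1\}^N$ uniformly, then set $y_i=x_{\pi_e(i)}\xi_i$ with the $\xi_i\in\{-1,1\}$ i.i.d.\ and $\Pr[\xi_i=-1]=\epsilon/2$.'' Thus $y$ is essentially an $\epsilon$-noisy copy of $x$ re-indexed by $\pi_e$, and two neighbors typically differ in $\approx \epsilon N/2$ coordinates after permutation.

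For \emph{completeness}, given a labeling $\lambda$ satisfying $\ge 1-\eta$ fraction of the edges of $\calU$, I partition $V'$ by placing $(u,x)$ on side $x_{\lambda(u)}\in\{\pm 1\}$. This is an exact $(1/2,1/2)$-partition because for each $u$ exactly half of the strings $x$ have $x_{\lambda(u)}=+1$. For an edge $e\{u,v\}$ satisfied by $\lambda$ one has $\lambda(u)=\pi_e(\lambda(v))$ and $y_{\lambda(v)}=x_{\pi_e(\lambda(v))}\xi_{\lambda(v)}=x_{\lambda(u)}\xi_{\lambda(v)}$, so the edge is cut exactly when $\xi_{\lambda(v)}=-1$, with probability $\epsilon/2$; for an unsatisfied edge the cut probability is at most $1/2$. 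Averaging gives cut fraction at most $(1-\eta)\epsilon/2+\eta/2\le \epsilon+\eta$.

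For \emph{soundness}, I will argue the contrapositive: assume a $(1/4,3/4)$-partition of $G'_\epsilon$ cuts fewer than $\sqrt\epsilon$ fraction of edges, and build a labeling of $\calU$ satisfying $\ge 2^{-\Omega(1/\epsilon^2)}$ fraction of edges. Encode the partition as $\pm 1$-valued functions $f_u:\{-1,1\}^N\to\{-1,1\}$ with $(u,x)$ on side $f_u(x)$; the $(1/4,3/4)$-balance forces $|\mathbb{E}_x[f_u(x)]|\le 1-\Omega(1)$ for a constant fraction of the vertices $u$ (the ``non-constant'' $f_u$'s). The small-cut hypothesis, expanded in the Fourier basis and using the edge distribution above, becomes
\[
\mathbb{E}_{e\{u,v\}\in E}\;\sum_{S\subseteq[N]}(1-\epsilon)^{|S|}\,\widehat{f_u}(\pi_e(S))\,\widehat{f_v}(S)\;\ge\;1-2\sqrt\epsilon,
\]
so on most edges the low-degree Fourier mass of $f_u$ pushed through $\pi_e$ aligns with that of $f_v$. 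To decode a labeling, I apply Bourgain's junta theorem to each non-constant $f_u$: its $\epsilon$-noise sensitivity must be small, so $f_u$ is $o(1)$-close to a function depending on a ``core'' set $I(u)\subseteq[N]$ of size at most $J=2^{O(1/\epsilon^2)}$ in which every coordinate has influence at least $1/J$. Set $\lambda(u)$ to be a uniformly random element of $I(u)$.

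The main obstacle, and the step from which the quantitative bound $2^{-O(1/\epsilon^2)}$ really emerges, is showing that this random labeling satisfies enough edges of $\calU$. An averaging argument applied to the edge-wise Fourier identity, together with influence preservation under the permutation $\pi_e$, should force $\pi_e(I(v))\cap I(u)\ne\emptyset$ for a constant fraction of edges $e\{u,v\}$; for each such edge the random labeling picks a common coordinate with probability at least $1/J^2$, so in expectation a $2^{-O(1/\epsilon^2)}$ fraction of edges is satisfied, contradicting the hypothesis on $\mathrm{opt}(\calU)$. The delicate part throughout will be the quantitative bookkeeping: each reduction step (balance $\Rightarrow$ non-constant $f_u$; small cut $\Rightarrow$ low $\epsilon$-noise sensitivity; Bourgain $\Rightarrow$ influential core of size $J$; random labeling $\Rightarrow$ satisfied edge) loses a multiplicative constant, and these losses must compose cleanly enough that the final satisfying fraction still beats the soundness threshold absorbed into the $O(\cdot)$.
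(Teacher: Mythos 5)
Your overall strategy -- long-code / noisy-hypercube reduction, Fourier expansion of the acceptance probability, Bourgain's junta theorem, and a random decoding of a labeling -- matches the paper's actual argument (Section~\ref{sec:inner-bs} and Appendix~\ref{sec:app-pcp}). Your completeness argument is correct and is essentially identical to Lemma~\ref{thm:bs-c}: pick the side according to $x_{\lambda(u)}$ and bound the cut fraction by $1-(1-\eta)(1-\epsilon)\le \eta+\epsilon$.

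There is, however, a genuine gap in the soundness step, and it is exactly the one the paper flags in the Remark following Theorem~\ref{thm:bs-pcp-informal}. You claim that a global $(\nfrac{1}{4},\nfrac{3}{4})$-partition forces $|\E_x[f_u(x)]|\le 1-\Omega(1)$ for a constant fraction of $u$. This is false: a partition can be globally balanced while placing \emph{every} block $V'[u]$ entirely on one side or the other, in which case every $f_u$ is constant, the reduction cuts no edges, and no labeling can be decoded. This is precisely why the paper cannot work with the uniform version of {\BS}; it introduces non-uniform demands (unit demand only between pairs \emph{inside the same block}), which forces any low-cost balanced cut to be $\nfrac{5}{6}$-piecewise balanced (see the Claim in the proof of Theorem~\ref{cor:gl-disprove} and the definition of piecewise balance for proofs, $\E_v|\widehat{A}^v_\emptyset|\le\nfrac{5}{6}$). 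Your soundness proof needs to either introduce that demand structure or explicitly restrict to piecewise-balanced cuts; as written the first deduction fails.

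Beyond that, the closing step -- arguing that $\pi_e(I(v))\cap I(u)\ne\emptyset$ for a constant fraction of edges -- is left as a hoped-for averaging argument, and you acknowledge this. The paper avoids this combinatorial step by a different decoding: pick $\alpha$ with probability $(\widehat{A}^v_\alpha)^2$ and then a uniform element of $\alpha$. The expected weight of satisfied edges is then bounded directly by chaining Cauchy--Schwarz, Jensen and Parseval, with Bourgain's theorem used only to guarantee that the mass on ``good'' sets $\alpha$ (nonempty, of size at most $\nfrac{1}{\epsilon}$, with $|\widehat{A}^v_\alpha|\ge \frac{1}{50}4^{-\nfrac{1}{\epsilon^2}}$) is a constant. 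This sidesteps the need to show that two juntas intersect; it works entirely at the level of Fourier weights and yields the $\Omega(\epsilon\,4^{-\nfrac{2}{\epsilon^2}})$ bound cleanly. If you pursue your junta-intersection route you will need a separate lemma that relates the high-influence coordinates of $f_u$ and $f_v$ through $\pi_e$ using the correlation bound; the paper's Fourier-mass decoding is the standard way to avoid that complication.
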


\begin{remark} We were imprecise on two counts: (1)
The soundness property holds only for those partitions that
partition a constant fraction of the blocks $V'[u]$ in a roughly
balanced way. We call such partitions {\em piecewise balanced}. This
is where the issue of uniform versus non-uniform version of
{\BS} arises. (2) For the soundness property, we
can only claim that every piecewise balanced partition cuts at
least $\epsilon^t$ fraction of edges, where any $t > \nfrac{1}{2}$
can be chosen in advance. Instead, we write $\sqrt{\epsilon}$ ~for the
simplicity of notation.
\end{remark}

\subsubsection*{Integrality Gap
 Instance for the Unique Games {SDP} Relaxation}

This has  already been described in Theorem \ref{thm:ugc-igap}.
The graph $G(V,E)$ therein along with the orthonormal basis
$B(u),$ for every $u \in V,$ can be used to construct an instance
$\calU=(G(V,E), [N], \{\pi_e\}_{e \in E})$ of {\UG}. For every edge
$e\{u,v\} \in E$, we have an (unambiguously defined)  permutation
$\pi_e : [N] \mapsto [N],$ where $\ \langle \u_{\pi_e(i)}, \v_i
\rangle \geq 1-\eta,$ for all $1 \leq i \leq N$.

\smallskip
Theorem \ref{thm:ugc-igap} implies that ${\rm opt}(\calU) \leq
\nfrac{1}{N^\eta}$.  On the other hand, the fact that for every
edge $e\{u,v\}$, the bases $B(u)$ and $B(v)$ are very close to each
other means that the {SDP} objective value for $\calU$ is at least $1-\eta$ (formally, the {SDP} objective value is defined to be
$\E_{e\{u,v\} \in E} \left[ \frac{1}{N} \sum_{i=1}^N \langle \u_{\pi_e(i)}, \v_i \rangle\right]$).

\smallskip
Thus, we have a concrete instance of {\UG} with optimum at
most $\nfrac{1}{N^\eta}=o(1)$, and which has an {SDP} solution with
objective value at least $ 1-\eta$. This is what an integrality gap
example means: The {SDP} solution {\em cheats} in an {\it unfair} way.

\subsubsection*{Integrality Gap Instance for the Balanced Edge-Separator {SDP} Relaxation} Now we combine the two modules described above.  We take
the instance $\calU=(G(V,E), [N], \{\pi_e\}_{e \in E})$ as above
and run the {PCP} reduction on it. This gives us an instance
$G'(V',E')$ of {\BS}. We show that this
is an integrality gap instance in the sense of Definition
\ref{defn:bs-igap}.

\smallskip
Since $\calU$ is a {NO} instance of {\UG}, i.e., ${\rm opt}(\calU) =
o(1)$, Theorem \ref{thm:bs-pcp-informal} implies that every
(piecewise) balanced partition of $G'$ must cut at least
$\sqrt{\epsilon}$ fraction of the edges. We need to have $\nfrac{1}{N^\eta}
\leq 2^{-O(\nfrac{1}{\epsilon^2})}$ for this to hold.

\smallskip
On the other hand, we can construct an {SDP} solution for  the {\BS}  instance which has an objective value of at most
$O(\eta+\epsilon)$. Note that a typical vertex of $G'$ is
$(u,x),$ where  $u \in V$ and $x \in \{-1,1\}^N$. To this
vertex, we attach the unit vector $\V_{u,x}^{\otimes t}$ (for
$ t=2^{240}+1$), where
 $$\V_{u,x}   \defeq  \frac{1}{\sqrt{N}} \sum_{i=1}^N x_i
 \u_i^{\otimes 8}.$$

\smallskip
\noindent
It can be shown that the set of  vectors $\left\{\V_{u,x}^{\otimes
t} \ | \ u \in V, \ x \in \{-1,1\}^N \right\}$ satisfy the triangle
inequality constraint and, hence, defines an $\ell_2^2$ metric.
Vectors $\V_{u,x}^{\otimes t}$ and $\V_{u,-x}^{\otimes
t}$ are antipodes of each other and, hence, the {SDP} Constraint
\eqref{bl-usdp-4} is also satisfied. Finally, we show that the
{SDP} objective value (Expression \eqref{bl-usdp-1}) is
$O(\eta+\epsilon)$. It suffices to show that for every edge
$((u,x), (v, y))$ in $G'(V',E')$, we have
 $$ \left \langle \V_{u,x}^{\otimes t} , \V_{v,y}^{\otimes t} \right \rangle \
 \geq  \ 1-O(t(\eta+\epsilon)). $$
This holds because whenever $((u,x), (v, y))$ is an edge of
$G'$, we have (after identifying the  indices via the permutation
$\pi_e : [N] \mapsto [N]$):
\begin{enumerate}
\item $\langle \u_{\pi_e(i)}, \v_i \rangle \geq 1-\eta$  for all $1 \leq i \leq N$ and  
\item the Hamming distance between $x$
and $y$ is about $\epsilon N$.
\end{enumerate}

\subsubsection*{Quantitative Parameters}
It follows from above discussion (see also Definition
\ref{defn:bs-igap}) that the integrality gap for {\BS} is $\Omega(\nfrac{1}{\sqrt{\epsilon}})$ provided that  $\eta \approx \epsilon$, and $N^\eta > 2^{O(\nfrac{1}{\epsilon^2})}$. We can
choose $\eta \approx \epsilon \approx (\log N)^{\nfrac{-1}{3}}$. Since the size of
the graph $G'$ is at most $n=2^{2N}$, we see that the integrality
gap is $\approx (\log\log n)^{\nfrac{1}{6}}$ as desired.

\subsubsection*{Proving the Triangle Inequality}

As mentioned above, one can show that the set of vectors
$\{\V_{u,x}^{\otimes t} \ | \ u \in V, \ x \in \{-1,1\}^N
\}$ satisfy the triangle inequality constraints. This is the
most technical part of the paper, but we would like to stress that this is
where the {\em magic} happens. In our construction, all vectors in
$\cup_{u \in V} B(u)$ happen to be points of the hypercube
$\{-1,1\}^N$ (up to a normalizing factor of $\nfrac{1}{\sqrt{N}}$), and
therefore, they define an $\ell_1$ metric. The
operation that takes their $+/-$ combinations combined with  tensoring leads to a metric that is $\ell_2^2$ and
non-$\ell_1$-embeddable.

\smallskip
Our proof of the triangle inequality constraints is essentially brute-force. As we mentioned before, more
recent works \cite{RaghavendraS09, KhotS09} obtain a more intuitive proof.

\subsection{Organization of the Main  Body of the Paper}\label{sec:organization}
In Section \ref{sec:prelim-metric-defns}
we recall important  definitions and
results about metric spaces. Section \ref{sec:prelim-cuts} defines
the cut optimization problems we are concerned about: {\SC} and {\BS}. We also give
their {SDP} relaxations for which we  construct
integrality gap instances. Section \ref{sec:prelim-fourier}
presents useful tools from Fourier analysis.

\medskip
\noindent In Section \ref{sec:GLC}, we present our overall
strategy for disproving the {\GLC}. We give a disproval of the {\GLC} assuming an appropriate integrality gap  for {\BS}.

\medskip
\noindent
In Section \ref{sec:UGC} we present the {UGC} and our
integrality gap instance for an  {SDP} relaxation of {\UG}.

\medskip
\noindent
In Section \ref{sec:inner-bs} we present our PCP reduction from {\UG} to {\BS}. The soundness proof this reduction is standard and appears in   Appendix \ref{sec:app-pcp}.

\medskip
\noindent We build on the {\UG} integrality gap instance in
Section \ref{sec:UGC} and the PCP reduction in Section \ref{sec:inner-bs} to obtain the integrality gap
instance for {\BS}. This is presented in Section \ref{sec:INTGAP}. This section has two
parts: In the first part (Section \ref{sec:INTGAP-graphs}) we
present the graph and in the second part (Section
\ref{sec:INTGAP-sdp}) we present the corresponding {SDP}
solution and prove its properties.

\medskip
\noindent Appendix \ref{sec:sdp-triangle} is where we establish the main technical lemma needed to show that the {SDP} solutions we construct satisfy the triangle inequality constraint.

\section{Preliminaries}\label{sec:prelim}

\subsection{The {\GLC} }\label{sec:prelim-metric-defns}
We start with basics of metric embeddings. We are concerned with finite metric spaces which we denote by a pair $(X,d),$ where $X$ is the space and $d$ is the metric on its points. We say that a space $(X_1,d_1)$ embeds with distortion at most $\Gamma$ into another space
$(X_2,d_2)$ if there exists a map $\phi: X_1 \mapsto X_2$ such
that for all $x,y \in X_1$
$$ d_1(x,y) \leq d_2 (\phi(x),\phi(y)) \leq \Gamma \cdot d_1(x,y).$$
If $\ \Gamma=1,$ then $(X_1,d_1)$ is said to {\em isometrically}
embed in $(X_2,d_2).$

\smallskip
An important class of metric spaces are those that arise by taking a finite subset $X$ of $\mathbb{R}^m$ for some $m\geq 1$ and endowing it with the $\ell_p$ norm as
 follows: For $\x=(x_1,\ldots,x_m),\y=(y_1,\ldots,y_m) \in X,$ $$\ell_p(\x,\y) \defeq\left(
\sum_{i=1}^m |x_i-y_i|^p \right)^{\nfrac{1}{p}}. $$  When we call a metric $\ell_1$ or $\ell_2,$ an implicit underlying space is assumed.

\smallskip
 A metric space $(X,d)$ is said
to be of negative type  if $(X,\sqrt{d})$ embeds isometrically into
$\ell_2.$ Formally, there is an integer $m$ and a vector $\v_x \in
\mathbb{R}^m$ for every $x \in X,$ such that $d(x,y)=\|\v_x-\v_y
\|^2$ and the vectors satisfy the  {\em triangle inequality}, i.e., for all $x,y,z \in X,$   
$$ \|v_x - v_y\|^2 + \| v_y-v_z\|^2 \geq \| v_x-v_z\|^2.$$ 
The class of all negative type metrics is denoted by $\ell_2^2.$
The following fact is easy to prove.
\begin{fct}\label{fct:l1neg}\cite{DLBook} For every $\ell_1$ metric space $(X,\ell_1)$ there is a negative type metric space $(Y,d)$ in which  it embeds isometrically.
\end{fct}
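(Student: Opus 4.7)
The plan is to establish the stronger statement that every finite $\ell_1$ metric is itself (isometrically) a negative type metric; the claim then follows by taking $(Y,d) \defeq (X,\ell_1)$ with the negative type structure thus obtained. The route I would take is the classical cut-cone decomposition of $\ell_1$, together with the fact that negative type metrics form a closed convex cone.

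First I would decompose pointwise $\ell_1$ into cut (semi)metrics. Suppose $X \subset \Real^m$ is finite and carries the $\ell_1$ distance. For each coordinate $i \in [m]$ and threshold $t \in \Real$, the function $h_{i,t}(x) \defeq \mathbf{1}[x_i > t]$ is $\{0,1\}$-valued on $X$, and for any $x,y \in X$,
$$|x_i - y_i| = \int_{-\infty}^{\infty} |h_{i,t}(x) - h_{i,t}(y)|\, dt.$$
Because $X$ is finite, $h_{i,t}(\cdot)$ takes only finitely many distinct values as $t$ varies, so each integral reduces to a finite sum. Summing over $i \in [m]$ yields non-negative coefficients $\alpha_k \geq 0$ and subsets $S_k \subseteq X$ with $\ell_1(x,y) = \sum_k \alpha_k\, \delta_{S_k}(x,y)$, where $\delta_S(x,y) \defeq \mathbf{1}[|\{x,y\}\cap S|=1]$ is the cut semimetric of $S$.

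Next I would verify that each $\delta_S$ lies in $\ell_2^2$ and that $\ell_2^2$ is closed under non-negative linear combinations. For the cut $\delta_S$, the one-dimensional assignment $v_x \defeq \mathbf{1}[x \in S]$ satisfies $\|v_x - v_y\|^2 = (\mathbf{1}[x\in S]-\mathbf{1}[y\in S])^2 = \delta_S(x,y)$, so $\delta_S$ is realized as a squared Euclidean distance (and the triangle inequality on these squared distances is trivially inherited since all values lie in $\{0,1\}$). If $d_1, d_2$ are negative type metrics with Euclidean realizations $v^{(1)}_x, v^{(2)}_x$ and $\beta_1,\beta_2 \geq 0$, then the concatenation $w_x \defeq (\sqrt{\beta_1}\,v^{(1)}_x, \sqrt{\beta_2}\,v^{(2)}_x)$ satisfies $\|w_x - w_y\|^2 = \beta_1 d_1(x,y) + \beta_2 d_2(x,y)$, so the cone $\ell_2^2$ is closed under non-negative combinations.

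Combining these two facts with the decomposition produces an explicit map $\psi: X \to \Real^M$ with $\|\psi(x) - \psi(y)\|^2 = \ell_1(x,y)$ for all $x,y \in X$. Setting $Y \defeq \psi(X)$ and $d(u,w) \defeq \|u-w\|^2$ gives a negative type metric space into which $(X,\ell_1)$ embeds isometrically via $\psi$, which is what the fact asserts. There is no real obstacle here; the only thing to be careful about is that the cut-cone decomposition is finite (guaranteed by $|X|<\infty$) so that $\psi$ lands in a finite-dimensional Euclidean space.
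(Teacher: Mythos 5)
Your proof is correct. The paper does not supply its own argument for this Fact---it is stated with a citation to Deza and Laurent \cite{DLBook} only---so there is no in-paper proof to compare against; the argument you give (threshold each coordinate to write the finite $\ell_1$ metric as a finite non-negative combination of cut semimetrics, realize each cut semimetric as a squared Euclidean distance via a $\{0,1\}$-valued map, and observe that the class $\ell_2^2$ is a cone closed under non-negative combinations by orthogonal concatenation) is precisely the standard cut-cone proof found in that reference, and it in fact establishes the stronger statement that every finite $\ell_1$ metric \emph{is} of negative type, not merely that it embeds into one. The only cosmetic remark is that the triangle inequality for the resulting $d$ on $\psi(X)$ need not be argued cut-by-cut: it is immediate because $d(\psi(x),\psi(y)) = \ell_1(x,y)$ and $\ell_1$ is already a metric on $X$.
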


\noindent
While the converse is not true, the {\GLC} asserts that the converse holds up to a universal constant.

\begin{conjecture}[{\GLC}, \cite{GoemansSurvey,LinialSurvey}] \label{hypo:gl}
For every negative type metric space $(Y,d)$ there is a metric space $(X,\ell_1)$ in which  it embeds with at most a constant distortion. This constant is universal, i.e., independent of the metric space $(Y,d)$.
\end{conjecture}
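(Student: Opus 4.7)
The plan is actually to \emph{disprove} this conjecture, since the paper's stated thesis (and that of Theorem~\ref{cor:gl-disprove}) is that the GLC fails---no universal constant suffices. The strategy I would follow is the three-step construction summarized in Section~\ref{sec:intro-overview}: produce a negative-type metric explicitly by extracting it from an integrality gap instance of an SDP relaxation that already enforces the triangle inequality on squared distances, and then invoke the folklore equivalence between integrality gap for the \SC{}/\BS{} SDP and non-embeddability of $\ell_2^2$ into $\ell_1$.

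First, I would construct an integrality gap for an SDP relaxation of \UG{} as in Theorem~\ref{thm:ugc-igap}: assign to each vertex $u$ an orthonormal basis $B(u)=\{\u_1,\ldots,\u_N\}$ of $\Real^N$ so that along every edge the two bases agree up to a permutation $\pi_e$ and perturbation $\eta$, while no integral labeling satisfies more than a $1/N^\eta$ fraction of edges. Second, I would simulate the PCP reduction from \UG{} to non-uniform \BS{}: replace each $u$ by a Boolean-hypercube block $V'[u]=\{(u,x):x\in\{-1,1\}^N\}$ and attach to $(u,x)$ the unit vector $\V_{u,x}^{\otimes t}$ where $\V_{u,x}=\tfrac{1}{\sqrt{N}}\sum_{i=1}^N x_i\,\u_i^{\otimes 8}$ and $t$ is a large odd integer (say $t=2^{240}+1$). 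Third, I would verify two matching bounds: the SDP objective for this assignment is $O(\eta+\epsilon)$, because each edge pairs $\eta$-close bases with hypercube points at Hamming distance $\approx \epsilon N$, whereas PCP soundness forces every piecewise-balanced partition to cut at least $\sqrt{\epsilon}$ fraction of edges. Tuning $\eta\approx\epsilon\approx(\log N)^{-1/3}$ then yields the promised $(\log\log n)^{1/6-\delta}$ distortion lower bound.

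The main obstacle, flagged in the overview as the place where ``the magic happens,'' is verifying that $\{\V_{u,x}^{\otimes t}\}$ really forms an $\ell_2^2$ metric---every triple of these vectors must satisfy the triangle inequality on squared distances. The bases $\bigcup_u B(u)$ individually sit in $\{\pm 1/\sqrt{N}\}^N$ and hence on their own define an $\ell_1$ metric; what must simultaneously fail for $\ell_1$ embeddability yet succeed for negative type is produced precisely by taking signed $\pm 1$ linear combinations and then tensoring. I expect this verification to reduce to a brute-force Fourier-analytic estimate on inner products of tensored $\pm 1$ vectors, with the large odd exponent $t$ essential for damping the cross terms so the final triple of squared distances satisfies the required inequality; this step will be by far the most technical and the least transparent, and I anticipate it consuming the bulk of the paper's effort.
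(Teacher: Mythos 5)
You correctly read the item as a conjecture that the paper \emph{refutes} rather than proves, and your three-step plan (UG SDP gap instance, PCP simulation to \BS{} with the vectors $\V_{u,x}^{\otimes t}$, folklore transfer to an $\ell_2^2$-vs-$\ell_1$ distortion lower bound) is exactly the route taken in the paper, down to the choice of $\eta\approx\epsilon\approx(\log N)^{-1/3}$ and the identification of the brute-force triangle-inequality check as the technical core. This matches the paper's own argument in structure and substance.
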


\subsection{Balanced Edge-Separator, Sparsest Cut and their {SDP} Relaxations}\label{sec:prelim-cuts}
In this section, we define the {\BS} and the {\SC} problems and their {SDP} relaxations.
All graphs are complete
undirected graphs
with non-negative {\em weights} or {\em demands} associated to its
edges. For a graph $G(V,E)$ and $S \subseteq V$, let $E(S,
\overline{S})$ denote the set of edges with one endpoint in $S$
and other in $\overline{S}$. A cut $(S, \overline{S})$ is called
non-trivial if $S \not=\emptyset$ and $ \overline{S}\not= \emptyset$.

\begin{remark} The versions of {\SC} and {\BS}
that we define below are  non-uniform  versions with demands. The
uniform version has all demands equal to $1,$ i.e., unit
demand for every pair of vertices.
\end{remark}

\begin{defn}[{\SC}]
For a graph $G(V,E)$ with a  weight ${\rm wt}(e)$ and a demand ${\rm dem}(e)$ associated to each edge $e \in E,$
the goal is to optimize
$$ \min_{\emptyset \neq S \subsetneq V}  \frac{\sum_{e \in E(S, \overline{S})} {\rm wt}(e) }
                    {\sum_{e \in E(S, \overline{S})}  {\rm dem}(e) } .$$
For a cut  $(S, \overline{S})$, the ratio above is referred to as its sparsity.
\end{defn}

\medskip\noindent The SDP relaxation for {\SC} appears in Figure \ref{fig:sdpsc}. We note that this is indeed a relaxation: Any cut $(S, \overline{S})$ corresponds to a feasible SDP solution by setting
the vector $\v_x$ to be ${v_0}$ or $- {v_0}$ depending on whether $x \in S$ or $x \in \overline{S}$
and ${v_0}$ is some fixed vector. The length of ${v_0}$ is chosen so as to satisfy the last SDP constraint.
The SDP objective is then the same as the sparsity of the cut.

\begin{figure}[hbt]
\begin{equation*}
  \mbox{Minimize} \ \ \ \frac{1}{4}  \sum_{e\{x,y\}} {\rm wt}(e) \|\v_x - \v_y
  \|^2
\end{equation*}
Subject to
\begin{eqnarray*}
\forall \ x,y,z\in V  &  \| \v_x - \v_y \|^2 + \|\v_y - \v_z\|^2 \geq \| \v_x - \v_z \|^2 \\
&  \frac{1}{4} \sum_{e\{x,y\}} {\rm dem}(e) \| \v_x - \v_y\|^2=1
\end{eqnarray*}
\caption{SDP relaxation of  {\SC}}\label{fig:sdpsc}
\end{figure}

\noindent
The integrality gap of this SDP relaxation is defined to be the largest ratio, as a function of the number of vertices $n$
and over all possible instances, between the integral optimum and the SDP optimum.
It is known (folklore) that the integrality gap $f(n)$ of the {\SC} SDP relaxation is precisely the worst case
distortion incurred to embed an $n$-point $\ell_2^2$ metric into $\ell_1$. We  need this observation
(but only in one direction) in what follows. First, we formally introduce {\BS}.

\begin{defn}[{\BS}] For a graph $G(V,E)$ with a
weight ${\rm wt}(e),$ and a demand ${\rm dem}(e)$ associated to
each edge $e \in E,$ let $D  \defeq \sum_{e \in E} {\rm dem}(e)$ be the
total demand. Let a {\em balance} parameter $B$ be given where
$\nfrac{D}{6} \leq B \leq \nfrac{D}{2}$. The goal is to find a non-trivial cut
$(S,\overline{S})$ that minimizes
$ \sum_{e \in E(S,
\overline{S}) } {\rm wt}(e),$ subject to
 $\sum_{e \in E(S, \overline{S}) } {\rm dem}(e)
 \geq B.$
The cuts that satisfy $\sum_{e \in E(S, \overline{S}) }
{\rm dem}(e) \geq B$ are called $B$-{balanced cuts}.
\end{defn}

\noindent
The SDP relaxation for  {\BS}  appears in Figure \ref{fig:sdpbs}. We note that this is indeed a relaxation:
A $B$-balanced cut $(S, \overline{S})$ corresponds to a feasible SDP solution by setting
the vector $\v_x$ to be ${v_0}$ or $- {v_0}$ depending on whether $x \in S$ or $x \in \overline{S}$
and ${v_0 }$ is a fixed {\it unit} vector.
\begin{figure}[hbt]
\begin{equation} \label{bl-sdp-1}
  \mbox{Minimize} \ \ \ \frac{1}{4}  \sum_{e\{x,y\}} {\rm wt}(e) \|\v_x - \v_y
  \|^2
\end{equation}
Subject to
\begin{eqnarray}
\forall x \in V &  \| \v_x \|^2 = 1 \label{bl-sdp-2} \\
\forall \ x,y,z\in V  &  \| \v_x - \v_y \|^2 + \|\v_y - \v_z\|^2 \geq \| \v_x - \v_z \|^2  \label{bl-sdp-3}\\
&  \frac{1}{4} \sum_{e\{x,y\}} {\rm dem}(e) \| \v_x - \v_y\|^2
\geq B
 \label{bl-sdp-4}
\end{eqnarray}
\caption{SDP relaxation of  {\BS} with parameter $B$}\label{fig:sdpbs}
\end{figure}

\medskip
An integrality gap instance for {\BS} is a concrete instance along with a feasible $B$-balanced SDP solution such that
the SDP objective is at most $\gamma$ and the integral optimum  over $\nfrac{B}{3}$-balanced
cuts is at least $\alpha$. The integrality gap is $\nfrac{\alpha}{\gamma}$. Note that the SDP solution is
$B$-balanced (in the sense of the last SDP constraint), but the integral optimum is allowed over
$\nfrac{B}{3}$-balanced cuts, i.e., over a larger class of cuts than the $B$-balanced cuts.

\subsection{Relation Between {\GLC}, Sparsest Cut and Balanced Edge-Separator}\label{sec:sc-bs}

Consider the following three statements:
\begin{enumerate}
\item Every $n$-point $\ell_2^2$ metric embeds into $\ell_1$ with distortion at most $f(n)$.
\item The integrality gap of the {\SC} SDP relaxation is at most $f(n)$.
\item The integrality gap of the {\BS} SDP relaxation is at most $O(f(n))$.
\end{enumerate}

\noindent It is known (folklore) that $(1) \implies (2) \implies (3)$ (and in fact $(1)$ is equivalent to $(2)$). We use the
implication $(1) \implies (3)$ to conclude our $\ell_2^2$ vs. $\ell_1$ lower bound from our integrality gap construction for
{\BS}. We summarize this implication below and present a sketch of its proof for the sake of completeness. The proof implicitly
also proves the implication $(1) \implies (2)$.

\begin{lemma}\label{lem:existence-cut}
Suppose $x \mapsto \v_x$ is a solution for {SDP} of Figure
\ref{fig:sdpbs} with objective value
$$\frac{1}{4}\sum_{e\{x,y\}} {\rm wt}(e) \|\v_x - \v_y
\|^2 \leq  \epsilon. $$ Assume that the negative type metric
defined by the vectors $\{\v_x | \ x \in V\}$ embeds into $\ell_1$ with
distortion $f(n)$ where $n= |V|$.  Then, there exists a $B'$-balanced cut
$(S,\overline{S})$, $B'\geq \nfrac{B}{3}$ such that
$$\sum_{e \in
E(S, \overline{S})}{\rm wt}(e) \leq O(f(n) \cdot\epsilon).$$
\end{lemma}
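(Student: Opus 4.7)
The plan is a three-step rounding: $\ell_1$-embed the SDP solution, decompose the resulting $\ell_1$ (pseudo)metric as a nonnegative combination of cut pseudometrics, and finally extract a single cut that is both balanced and of low weight.

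First, I apply the hypothesised embedding $\phi:V\to\ell_1$ of distortion at most $f(n)$, so that $\|\v_x-\v_y\|^2 \le \|\phi(x)-\phi(y)\|_1 \le f(n)\,\|\v_x-\v_y\|^2$ for every pair $x,y\in V$. Substituting the upper bound into the SDP objective \eqref{bl-sdp-1} and the lower bound into the balance constraint \eqref{bl-sdp-4}, then multiplying through by $4$, yields
$$\sum_{e\{x,y\}} \mathrm{wt}(e)\,\|\phi(x)-\phi(y)\|_1 \;\le\; 4f(n)\,\epsilon \qquad\text{and}\qquad \sum_{e\{x,y\}} \mathrm{dem}(e)\,\|\phi(x)-\phi(y)\|_1 \;\ge\; 4B.$$

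Second, using the classical fact that every $\ell_1$ pseudometric on a finite set is a nonnegative combination of cut pseudometrics, I write $\|\phi(x)-\phi(y)\|_1 = \sum_i \lambda_i\, |\mathbf{1}_{S_i}(x)-\mathbf{1}_{S_i}(y)|$ with $\lambda_i\ge 0$ and $S_i\subseteq V$. Setting $W_i := \sum_{e\in E(S_i,\bar S_i)} \mathrm{wt}(e)$ and $D_i := \sum_{e\in E(S_i,\bar S_i)} \mathrm{dem}(e)$ and swapping orders of summation turns the two inequalities above into
$$\sum_i \lambda_i W_i \;\le\; 4f(n)\,\epsilon \qquad\text{and}\qquad \sum_i \lambda_i D_i \;\ge\; 4B.$$

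Third, I want to extract from this decomposition a single index $i^*$ with $D_{i^*}\ge B/3$ and $W_{i^*} = O(f(n)\epsilon)$; taking $S=S_{i^*}$ then completes the proof. The hypothesis $B\ge D/6$ enters crucially here, because it gives the uniform bound $D_i \le D \le 6B$, forcing $D_i/B$ to lie in the compact range $(0,6]$. The target is to show that the ``balanced'' indices $I := \{i: D_i\ge B/3\}$ already carry a constant fraction of the demand mass, $\sum_{i\in I} \lambda_i D_i = \Omega(B)$. Averaging the ratio $W/D$ over $I$ then produces some $i^*\in I$ with $W_{i^*}/D_{i^*} \le O(f(n)\epsilon/B)$, and the uniform bound $D_{i^*}\le 6B$ upgrades this to $W_{i^*}=O(f(n)\epsilon)$.

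The main obstacle is precisely the claim $\sum_{i\in I}\lambda_i D_i = \Omega(B)$: the naive bound $D_i < B/3$ on the complementary indices is not enough, because there is no a priori upper bound on $\sum_i \lambda_i$ itself. I plan to handle this by dyadic pigeonhole, bucketing the indices by the scale of $D_i/B$ within $(0,6]$. Only $O(1)$ dyadic scales fit in this range, so one scale $[2^k B/3,\,2^{k+1}B/3)$ must carry $\Omega(B)$ of the demand mass $\sum_i\lambda_iD_i$; on that single scale all $D_i$ are comparable up to a factor of $2$, so averaging $W/D$ there gives the cheapest cut, which is simultaneously balanced (with $D_i\ge B/3$, since the top-carrying scale cannot be among those with $D_i<B/3$ once one checks that the total mass contributed by the sub-$B/3$ scales is a strict minority of $4B$) and of weight $O(f(n)\epsilon)$, losing only an absolute constant that is absorbed into the final $O(\cdot)$.
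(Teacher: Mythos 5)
Your first two steps are sound and match the paper: apply the $\ell_1$ embedding to transfer the SDP bounds into bounds on a cut decomposition $\|\phi(x)-\phi(y)\|_1 = \sum_i \lambda_i |\mathbf{1}_{S_i}(x)-\mathbf{1}_{S_i}(y)|$, yielding $\sum_i \lambda_i W_i = O(f(n)\epsilon)$ and $\sum_i \lambda_i D_i = \Omega(B)$. The gap is in the third step, and it is a genuine one.

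Your plan hinges on showing that the indices $I = \{i : D_i \geq B/3\}$ carry a constant fraction of the demand mass $\sum_i\lambda_iD_i$, but this is simply false. Nothing caps $\sum_i\lambda_i$, so the decomposition can consist of a single cut $S_1$ with $D_1 = B/6$ and $\lambda_1 = 24$; then $\lambda_1 D_1 = 4B$ meets the demand inequality while $I = \emptyset$. Your dyadic pigeonhole does not rescue this: the relevant range is $D_i/B \in (0,6]$, which contains infinitely many dyadic scales below $1/3$, not $O(1)$; only the range above $1/3$ has $O(1)$ scales, and that is exactly the part whose mass you cannot lower-bound. The structural point is that no single cut in the decomposition need be simultaneously balanced and light, so any attempt to locate one by averaging is bound to fail.

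The paper's proof circumvents this with two ideas you are missing. First, rather than working with one fixed decomposition, it iterates: repeatedly extract a cut of low sparsity (at most $O(f(n)\epsilon/B)$) with respect to the \emph{current} demand function, zero out the demands it cuts, and repeat. The key invariant is that as long as fewer than $\nfrac{2B}{3}$ units of demand have been zeroed, the residual SDP solution is still $\nfrac{B}{3}$-balanced, so the sparsity bound persists. The process halts with either a single $\nfrac{B}{3}$-balanced cut, or a collection $S_1,\ldots,S_k$ of sparse cuts whose union cuts at least $\nfrac{2B}{3}$ demand while costing only $O(f(n)\epsilon)$ edge weight in total. Second, in the latter case one uses a random XOR $\phi_A = \oplus_{i\in A}\phi_i$ over a uniform $A\subseteq[k]$: any demand cut by at least one $S_i$ is cut by $\phi_A$ with probability exactly $\nfrac{1}{2}$, so in expectation $\phi_A$ cuts $\geq \nfrac{B}{3}$ demand, while the edge weight it cuts is deterministically at most the total $O(f(n)\epsilon)$. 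These two ingredients --- the re-weighting iteration and the XOR combination --- are what your proposal is missing, and without them the extraction of a single balanced, light cut cannot go through.
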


\begin{proof} The idea is that the {\it good} SDP solution as given implies the existence of a cut with low sparsity.
If this cut already cuts $\Omega(B)$ of the demands, we are done. Otherwise the demands cut are {\it erased} (i.e.,
 set to zero) and another sparse cut is found w.r.t. to the new (remaining) demands. This process is repeated until the sum of the demands
cut in the sequence of cuts obtained so far is at least $\Omega(B)$. At this point, a random XOR
of the cuts obtained so far yields a cut that cuts $\Omega(B)$ of the demands, but does not cut too much of the
edge weight.
Formally, we begin by observing that there is a cut $(S,\overline{S})$ with sparsity at most $f(n)\cdot\nfrac{\epsilon}{B}.$
\begin{eqnarray*}
\min_{\emptyset \neq S \subsetneq V}  \frac{\sum_{e \in E(S, \overline{S})} {\rm wt}(e) }
                    {\sum_{e \in E(S, \overline{S})}  {\rm dem}(e) } &  = & \min_{d\; {\rm is} \;  \ell_1 \; {\rm embeddable} }  \frac{\sum_{e\{x,y\}} {\rm wt}(e)d(x,y) }
                    {\sum_{e\{x,y\}}{\rm dem}(e)d(x,y) } \\
& \leq  & f(n) \cdot  \frac{\sum_{e\{x,y\}} {\rm wt}(e) \| v_x - v_y \|^2 }
                    {\sum_{e\{x,y\}}{\rm dem}(e) \| v_x - v_y \|^2 }  \ \leq  \ f(n)\cdot\nfrac{\epsilon}{B}. 
\end{eqnarray*}

\noindent
The first (in)equality uses the fact that optimizing over cuts is the same as optimizing over the cone of  $\ell_1$ embeddable metrics, see \cite{DLBook}.
The second inequality uses the embedding of the metric $\| v_x - v_y \|^2$ into $\ell_1$ with distortion at most $f(n)$. The third inequality uses the
hypothesis that the SDP objective is at most $\epsilon$ and the SDP solution is $B$-balanced.

\smallskip
 If the cut $(S,\overline{S})$ happens to be
$\nfrac{B}{3}$-balanced, then we are done since the edge weight cut by it is at most the sparsity (which is at most   $f(n)\cdot\nfrac{\epsilon}{B}$)
times the demands cut (which is at most $D \leq 6B$).
Otherwise the demands cut by $(S,\overline{S})$ is at most $\nfrac{B}{3}$. We rename the cut as $(S_1,\overline{S}_1)$, set all the demands
cut to zero, and repeat the process. This leads to a sequence of cuts  $(S_1,\overline{S}_1), \ldots, (S_k,\overline{S}_k)$. The
process stops as soon as either 
\begin{enumerate}
\item[(a)] the cut just obtained cuts at least  $\nfrac{B}{3}$ of the demands
or else 
\item[(b)] the sum of the demands cut over these $k$ cuts
is at least $\nfrac{2B}{3}$ (since a demand is set to zero as soon as it is cut, each original demand is counted at most once).
\end{enumerate}
Note that prior to every step, at most $\nfrac{2B}{3}$ of the (original) demands
has been set to zero, so the SDP solution w.r.t. to the remaining demands still qualifies as being $B - \nfrac{2B}{3} =  \nfrac{B}{3}$ balanced.
Thus, at every step, the cut obtained has sparsity at most  $f(n)\cdot\nfrac{\epsilon}{(\nfrac{B}{3})}.$
We are done in the Case (a) as before and so we consider the Case (b).

\smallskip
To summarize, we have a sequence of cuts  $(S_1,\overline{S}_1), \ldots, (S_k,\overline{S}_k)$ such that  the sum of the demands cut over these $k$ cuts
is at least $\nfrac{2B}{3}$.
Moreover, the sparsity of each of these cuts is at most $O( f(n)\cdot\nfrac{\epsilon}{B})$
and, hence, the total edge weight cut by these cuts is at most $O(f(n) \epsilon)$ (an edge is considered cut if it is cut by at least
one of the $k$ cuts). Now we obtain our desired {\it balanced partition} by taking a random XOR
of these cuts: The $i$-{th} cut is viewed as a $\{0,1\}$-valued function $\phi_i$ on the vertices and the desired cut is
given by the function $\phi_A \defeq \oplus_{i\in A} \phi_i$ where $A \subseteq [k]$ is a uniformly random subset. We show that
for some choice of the set $A$, we get a cut $\phi_A$ that
cuts at least  $\nfrac{B}{3}$ of the demands and  at most $O(f(n) \epsilon)$ of the edge weight. Clearly, the total
edge weight cut is $O(f(n) \epsilon)$ irrespective of the set $A$. On the other hand, each  demand in the sum total of
at least $\nfrac{2B}{3}$ gets cut with probability $\nfrac{1}{2}$ (this is the property of the random XOR).
 Thus, the expected demands cut by $\phi_A$ is at least
$\nfrac{B}{3}$ and this expectation is achieved for some choice of $A$.

\end{proof}

\begin{remark} \label{rem:sc-bs}
The proof above shows that if the integrality gap for {\SC} is upper bounded by $f(n)$ then the gap for {\BS} is bounded by
$O(f(n))$. The same proof implicitly also shows that if there is an $f(n)$ approximation algorithm for {\SC}, then the algorithm
can be used iteratively a polynomial number of times to achieve $O(f(n))$ (pseudo-)approximation for {\BS}, see also \cite[Chapter 7]{Vishnoi12}. Given an instance of
{\BS} that has a $B$-balanced cut that cuts an edge weight $\alpha$ and $B \geq \nfrac{D}{6}$ where $D$ is the total demand, the
algorithm finds a $\nfrac{B}{3}$-balanced cut that cuts an edge weight $O(f(n)\alpha)$. In the contrapositive, a $g(n)$ {\it hardness} of
approximation result for {\BS} implies an $\Omega(g(n))$ hardness result for {\SC}.
\end{remark}

\subsection{Our Integrality Gap Instance for Balanced Edge-Separator}\label{sec:GLC}
With the preliminaries for negative type metrics and {SDP}s in place, we now state the main result regarding the construction of the integrality gap for {\BS} which
 suffices to disprove the {\GLC}
using Lemma \ref{lem:existence-cut}.
 The instance  has two parts: (1) The graph and (2) The {SDP} solution. The graph  construction is described in Section \ref{sec:INTGAP-graphs}, while the {SDP} solution appears in Section \ref{sec:INTGAP-sdp}.
We construct a complete weighted graph $G(V,{\rm wt}),$ with vertex
set $V$ and weight ${\rm wt}(e)$ on edge $e,$ and with $\sum_{e} {\rm
wt}(e)  = 1$. The vertex set is partitioned into sets $V_1, V_2,
\ldots, V_r$, each of size $\nfrac{|V|}{r}$ (think of $r \approx
\sqrt{|V|}$).
A cut $A$ in the graph is viewed as a function $A : V \mapsto \{-1,1\}$.
We are interested in cuts that cut {\em many} sets $V_i$ in a {\em somewhat
balanced} way. The notation $s \in _R S$ would mean that $s$ is a uniformly random element of $S.$
\begin{defn} \label{def:part-bl}
For $ \  0 \leq \theta \leq 1$, a cut $A: V \mapsto \{-1,1\}$ is called $\theta$-piecewise
balanced if
 $$ \E_{i \in _R [r]}    \ \Big| \ \E_{x \in_R V_i}[ A(x) ] \ \Big|
 \leq \theta. $$
\end{defn}
\noindent
We also assign a unit vector to every vertex in the graph. Let $\v_x$ denote the
vector assigned to vertex $x$. Our construction of the graph $G(V, {\rm wt})$
and the vector assignment $x \mapsto \v_x$ can be summarized as follows:

\begin{theorem}[Main Theorem]\label{thm:graph-construction} Fix any
$\nfrac{1}{2} < t < 1$. For every sufficiently small $\epsilon > 0,$  there exists a
graph $G(V, {\rm wt})$, with a partition  ~$V = \cup_{i=1}^r
V_i$, and a vector assignment $x \mapsto \v_x$  for every $x \in
V,$ such that
\begin{enumerate}
\item $|V| \leq 2^{2^{O(\nfrac{1}{\epsilon^3})}}$.
\item Every
$\nfrac{5}{6}$-piecewise balanced cut $A$ must cut $\epsilon^{t}$
      fraction of edges, i.e., for any such cut
      $$ \ \ \sum_{e \in E(A, \overline{A})} {\rm wt}(e)  \geq \ \epsilon^{t}. $$
\item The unit vectors $\{\v_x \ | \ x \in V \}$ define a negative
type metric, i.e.,
      the following triangle inequality is satisfied:
      $$  \forall \ x,y,z  \in V , \ \| \v_x - \v_y \|^2 + \| \v_y - \v_z \|^2 \geq \| \v_x - \v_z \|^2 \
        .$$
\item For each part $V_i$, the vectors $\{ \v_x \ | \ x \in V_i\}$ are
      {\em well-separated}, i.e.,
      $$ \frac{1}{2}  \E_{x,y \in_R V_i} \ \left[  \| \v_x -  \v_y \|^2 \right] = 1 .$$
\item The vector assignment gives a {\em low} {SDP} objective value, i.e.,
      $$  \frac{1}{4}
  \sum_{e\{x,y\}} {\rm wt}(e)  \| \v_x - \v_y \|^2  \leq \epsilon .$$
\end{enumerate}
\end{theorem}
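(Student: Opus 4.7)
The plan is to combine three modules developed earlier in the paper. Start with the Unique Games SDP integrality gap instance of Theorem \ref{thm:ugc-igap}: a graph $G_0(V_0,E_0)$ of size $2^N/N$, orthonormal bases $B(u)=\{u_1,\ldots,u_N\}$ for each $u\in V_0$, and edge permutations $\pi_e$ with $\langle u_{\pi_e(i)},v_i\rangle \geq 1-\eta$ for every $i$, yet such that every labeling of $G_0$ satisfies at most $1/N^\eta$ fraction of the edges. Apply the PCP reduction of Theorem \ref{thm:bs-pcp-informal} (constructed in Section \ref{sec:inner-bs}) with noise parameter $\epsilon$: the output graph $G(V,\mathrm{wt})$ has a block $V_u \defeq \{(u,x) : x\in\{-1,1\}^N\}$ for each $u\in V_0$, and weights on pairs $((u,x),(v,y))$ concentrated on the event that $y$ and the $\pi_e$-twist of $x$ are at Hamming distance $\approx \epsilon N$. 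Take the parts $V_1,\ldots,V_r$ to be these blocks, so $r=|V_0|$.

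Setting $\eta \asymp \epsilon$ and $N=2^{\Theta(1/\epsilon^3)}$ so that $N^\eta \gg 2^{O(1/\epsilon^2)}$ forces the size bound $|V|\leq |V_0|\cdot 2^N \leq 2^{2N}=2^{2^{O(1/\epsilon^3)}}$, yielding (1). Property (2) is the soundness of the PCP: a $\tfrac{5}{6}$-piecewise balanced cut $A:V\to\{-1,1\}$ that cut fewer than $\epsilon^{t}$ fraction of the edges could be decoded, via Bourgain's junta theorem (Theorem \ref{thm:bourgain-junta}) applied blockwise, into a randomized labeling of $G_0$ satisfying in expectation more than $1/N^\eta$ fraction of the edges of $G_0$, contradicting Theorem \ref{thm:ugc-igap}. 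The $\tfrac{5}{6}$-piecewise balance supplies the initial blockwise variance that the junta argument needs, and the exponent $t>\tfrac{1}{2}$ tracks precisely the quantitative loss in that decoding; the details are carried out in Appendix \ref{sec:app-pcp}.

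For the SDP solution, attach to each vertex $(u,x)\in V$ the unit vector
\[
  \V_{u,x} \defeq \frac{1}{\sqrt{N}} \sum_{i=1}^{N} x_i\, u_i^{\otimes 8},
  \qquad
  \v_{(u,x)} \defeq \V_{u,x}^{\otimes T},
\]
where $T$ is a large fixed odd integer (e.g.\ $T=2^{240}+1$ as in the overview). Orthonormality of $B(u)$ gives $\|\V_{u,x}\|=1$, and hence $\|\v_{(u,x)}\|=1$. The antipodal symmetry $\v_{(u,-x)}=-\v_{(u,x)}$ makes (4) immediate: $\E_{x,y\in_R\{-1,1\}^N}[\langle \v_{(u,x)},\v_{(u,y)}\rangle]=0$ by parity, so the average squared distance within each $V_u$ equals $2$. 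For (5), fix an edge $((u,x),(v,y))$ of $G$; reindexing by $\pi_e$, using the identity $\langle a^{\otimes 8},b^{\otimes 8}\rangle=\langle a,b\rangle^8$, and combining $\langle u_{\pi_e(i)},v_i\rangle\geq 1-\eta$ with the Hamming bound $\approx \epsilon N$, a short calculation gives $\langle \V_{u,x},\V_{v,y}\rangle \geq 1-O(\eta+\epsilon)$. Tensoring $T$ times yields $\langle \v_{(u,x)},\v_{(v,y)}\rangle\geq 1-O(T(\eta+\epsilon))$, and the SDP objective is at most $O(\eta+\epsilon)\leq \epsilon$ after absorbing constants into the choice of $\eta$.

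The main obstacle is property (3), the triangle inequality; this is the technical heart of the paper and is discharged by the main lemma of Appendix \ref{sec:sdp-triangle}. Writing $\alpha=\langle \V_{u,x},\V_{v,y}\rangle$, $\beta=\langle \V_{v,y},\V_{w,z}\rangle$, $\gamma=\langle \V_{u,x},\V_{w,z}\rangle$, the task is to establish
\[
  \alpha^T + \beta^T - \gamma^T \;\leq\; 1
\]
for every triple of vertices. The crux is the structural fact from Theorem \ref{thm:ugc-igap}, part 3: every $u_i$ has coordinates in $\{\pm 1/\sqrt{N}\}$, so each $u_i^{\otimes 8}$ lives on the scaled hypercube $\{\pm N^{-4}\}^{N^8}$ and each $\V_{u,x}$ is a $\pm$-average of hypercube points, an $\ell_1$ vector. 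The plan is to (a) rewrite each pairwise inner product as an expectation over a product distribution on $\{-1,+1\}$-coordinates, so that $\alpha,\beta,\gamma$ become correlations of three $\pm 1$ random variables; (b) reduce the inequality to a real-analytic statement about those correlations; and (c) handle the dangerous regime $\gamma\approx -1$ by a brute-force case analysis, exploiting the oddness and the largeness of $T$ to buy enough convexity slack. The authors themselves describe this step as essentially brute-force, and I expect the bulk of the technical effort to go into isolating the extremal configurations and bounding the Bonami--Beckner-style tail contributions carefully; the $8$ in $u_i^{\otimes 8}$ and the huge choice of $T$ are precisely the parameters that make this case analysis go through.
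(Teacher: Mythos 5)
Your proposal matches the paper's construction and the proofs of properties (1), (2), (4), (5) essentially verbatim: start from the Unique Games integrality-gap instance (Theorem \ref{thm:ulc-sdp}), run the PCP reduction of Section \ref{sec:inner-bs} with noise $\epsilon$ to get the graph and its block partition, set $\eta\asymp\epsilon$ and $N=2^{\Theta(\nfrac{1}{\epsilon^3})}$ so that ${\rm opt}(\calU_\eta)\leq N^{-\eta}\leq 2^{-O(\nfrac{1}{\epsilon^2})}$ and $|V|\leq 2^{2N}$; use the antipodal symmetry $\V_{u,-x}=-\V_{u,x}$ together with oddness of $T$ for (4), and the closeness/Hamming-distance estimate of Lemma \ref{lem:innerprod} for (5). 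This is exactly Sections \ref{sec:INTGAP-graphs}--\ref{sec:INTGAP-sdp}.

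Where you drift is the sketch (a)--(c) of property (3). The paper does not rewrite the three inner products as correlations of $\pm1$ random variables, and there is no Bonami--Beckner tail estimate anywhere in the triangle-inequality argument. The actual proof proceeds differently: (i) observe WLOG that one inner product, say $|\langle \V_{v,y},\V_{w,z}\rangle|$, is at least $(\nfrac{1}{3})^{\nfrac{1}{t}}=1-O(\nfrac{1}{t})$, otherwise the inequality is trivial; (ii) deduce via the Matching Property and orthonormality that some $|\langle v_{i_0},w_{j_0}\rangle|\geq 1-\beta$ with $\beta\leq 2^{-160}$; (iii) reduce the tensored inequality $1+a^t\geq b^t+c^t$ to the untensored $1+a\geq b+c$ by the elementary Lemma \ref{lemma:pow-t1}, which is where oddness of $t$ is actually used (not ``convexity slack''); and (iv) prove the untensored inequality coordinate-by-coordinate in Lemma \ref{lemma:allclose1} by a four-case analysis on $\lambda=\max_{i,j}|\langle u_i,w_j\rangle|$, using the basis-level triangle inequality and $\sum_i\langle\cdot,w_j\rangle^2=1$. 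One more small caution: calling each $\V_{u,x}$ ``an $\ell_1$ vector'' is misleading --- only the underlying $u_i^{\otimes 8}$ are (scaled) hypercube points; the $\pm$-signed combination in \eqref{eq:vectors} is precisely the operation that takes the metric out of $\ell_1$, which is the entire point.
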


\paragraph{Proof of Theorem \ref{cor:gl-disprove}.} We show how the construction in Theorem \ref{thm:graph-construction} implies
Theorem \ref{cor:gl-disprove}. Suppose that
the negative type metric defined by
vectors $\{\v_x | \ x \in V \}$ 
embeds into $\ell_1$ with distortion $\Gamma$.
 We  show that $\Gamma =
\Omega\left(\nfrac{1}{ \epsilon^{1-t}  }\right)$ using
 Lemma \ref{lem:existence-cut}.

\smallskip
Construct an instance of {\BS} as follows.
The graph $G(V, {\rm wt})$ is as in Theorem
\ref{thm:graph-construction}. The demands ${\rm dem}(e)$ depend on the
partition $V = \cup_{i=1}^r V_i$. We let ${\rm dem}(e) =1 $ if $e$ has
both endpoints in the same part $V_i$ for some $1 \leq i \leq r$
and ${\rm dem}(e)=0$ otherwise. Clearly, the total demand is $D \defeq\sum_e {\rm dem}(e)
 =   r\cdot \binom{\nfrac{|V|}{r}}{2}$.

\smallskip
Now, $x \mapsto \v_x$ is an assignment of unit vectors that
satisfy the triangle inequality constraints. This is  a solution to the {SDP} of Figure \ref{fig:sdpbs}. Property (4) of Theorem
\ref{thm:graph-construction} guarantees that
 $$ \frac{1}{4} \sum_{e= \{x,y\}} {\rm dem}(e) \|\v_x - \v_y\|^2 = \frac{1}{4} \cdot r \cdot
 \binom{\nfrac{|V|}{r}}{2} \cdot 2 = \frac{D}{2}. $$
Letting $B \defeq \nfrac{D}{2},$  the {SDP} solution is $B$-balanced and its objective
value is at most $\epsilon$. Using Lemma
\ref{lem:existence-cut},    we get a $B'$-balanced cut $(A,
\overline{A})$, $B' \geq  \nfrac{B}{3}$ such that $\sum_{e \in E(A,
\overline{A})} {\rm wt}(e) \leq O(\Gamma \cdot \epsilon)$.

\medskip
\noindent
{\bf Claim}: The cut
$(A,\overline{A})$ must be a $\nfrac{5}{6}$-piecewise balanced cut.

\medskip
\noindent
{\bf Proof of Claim.}
Let $p_i \defeq {\Pr}_{x \in V_i}[A(x)=1].$ The total demand cut by $(A,\overline{A})$ is equal to
$\sum_{i=1}^r p_i (1-p_i) |V_i|^2$. This is at least $B' \geq \nfrac{B}{3}$ since
$(A,\overline{A})$ is $B'$-balanced. Hence,
 $$ \sum_{i=1}^r p_i (1-p_i) \cdot \frac{|V|^2}{r^2} \geq  \frac{1}{6} \ r\cdot
 \binom{\nfrac{|V|}{r}}{2}. $$
Thus, $ \sum _{i=1}^r p_i(1-p_i) \geq \nfrac{r}{12}.$ By Cauchy-Schwarz inequality,
  $$  \E_{i \in _R [r]}  \ \Big| \ \E_{x \in_R V_i}[ A(x) ]
  \Big|
  =
  \frac{1}{r} \sum_{i=1}^r |1-2p_i| \leq \sqrt{\frac{1}{r}\sum_{i=1}^r (1-2p_i)^2}
   = \sqrt{1- \frac{4}{r} \sum_{i=1}^r p_i(1-p_i)} \leq
   \sqrt{\frac{2}{3}} < \frac{5}{6} .$$
Hence, $(A,\overline{A})$ must be a $\nfrac{5}{6}$-piecewise balanced cut.
However, Property (2) of Theorem  \ref{thm:graph-construction}
says that such a cut must cut at least $\epsilon^{t}$ fraction of
edges. This implies that $\Gamma =
\Omega(\nfrac{1}{\epsilon^{1-t}})$. Theorem \ref{cor:gl-disprove} now follows by noting
that $t > \nfrac{1}{2}$ is arbitrary and $n  = |V| \leq
2^{2^{O(\nfrac{1}{\epsilon^3})}}$.

\subsection{Fourier Analysis}\label{sec:prelim-fourier}

Consider the real vector space of all functions $f : \{-1,1\}^n
\mapsto \Real,$ where the addition of two functions is defined to be 
pointwise addition.
 For $f,g:\{-1,1\}^n \mapsto \mathbb{R},$ define the
following inner product:
 $$\langle f, g \rangle _2 \defeq 2^{-n} \sum
_{\x \in \{-1,1\}^n} f(\x)g(\x).$$
For a set $S \subseteq [n],$ define the {\it Fourier character}
$\chi_S(\x) \defeq \prod_{i \in S} x_i.$ It is well-known (and easy to
prove) that the set of all Fourier characters forms an orthonormal
basis with respect to the  above inner product. Hence, every function $f :
\{-1,1\}^n \mapsto \Real$ has a (unique) representation as
 $ f = \sum_{S \subseteq [n]} \widehat{f}_S \chi_S,$  where    $\widehat{f}_S  \defeq \langle f, \chi_S
 \rangle_2$ is the Fourier coefficient of $f$ w.r.t. $S.$ 
The following is a simple but useful fact.
\begin{fct}[Parseval's Identity]\label{fct:parseval} For any
$f:\{-1,1\}^n \mapsto \{-1,1\},$ $\sum_{S \subseteq [n]}
\widehat{f}_S^2=1.$
\end{fct}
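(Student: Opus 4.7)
The plan is to compute the inner product $\langle f, f \rangle_2$ in two different ways and equate the results. The identity follows immediately from the orthonormality of the Fourier characters, which was noted just before the statement.

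First I would compute $\langle f, f \rangle_2$ directly from the definition. Since $f$ takes values in $\{-1,1\}$, we have $f(\x)^2 = 1$ for every $\x \in \{-1,1\}^n$, so
\[
\langle f, f \rangle_2 = 2^{-n} \sum_{\x \in \{-1,1\}^n} f(\x)^2 = 2^{-n} \cdot 2^n = 1.
\]

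Next I would compute the same quantity by substituting the Fourier expansion $f = \sum_{S \subseteq [n]} \widehat{f}_S \chi_S$ and using bilinearity of the inner product:
\[
\langle f, f \rangle_2 = \Big\langle \sum_{S} \widehat{f}_S \chi_S, \ \sum_{T} \widehat{f}_T \chi_T \Big\rangle_2 = \sum_{S,T \subseteq [n]} \widehat{f}_S \widehat{f}_T \, \langle \chi_S, \chi_T \rangle_2.
\]
By the orthonormality of the characters stated in the preceding paragraph, $\langle \chi_S, \chi_T \rangle_2 = 1$ when $S = T$ and $0$ otherwise, so the double sum collapses to $\sum_{S \subseteq [n]} \widehat{f}_S^2$. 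Equating the two expressions for $\langle f, f \rangle_2$ yields the claimed identity.

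There is no real obstacle here; the only ingredient beyond the definitions is the orthonormality of $\{\chi_S\}_{S \subseteq [n]}$, which is taken as given in the text. The only mild subtlety worth flagging in a fully self-contained writeup would be to verify orthonormality itself by noting that $\chi_S \cdot \chi_T = \chi_{S \triangle T}$ together with $\E_{\x}[\chi_U(\x)] = \mathbf{1}[U = \emptyset]$, but this lies upstream of the statement being proved.
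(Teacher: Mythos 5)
Your proof is correct and follows essentially the same route as the paper: compute $\langle f, f\rangle_2$ once directly (using $f^2 \equiv 1$) and once via the Fourier expansion, then invoke orthonormality of the characters to collapse the double sum.
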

The proof of this follows from the following sequence of equalities:
$$ 1= \frac{1}{2^n} \sum_{x \in \{-1,1\}^n} f^2(x)=  \langle f,f \rangle_2 =  \left \langle \sum_{S \subseteq [n]} \widehat{f}_S \chi_S ,  \sum_{T \subseteq [n]} \widehat{f}_T \chi_T \right \rangle_2 =  \sum_{S \subseteq [n]}
\widehat{f}_S^2,$$
where the last equality follows from the orthonormality of the characters $\{\chi_S\}_{S \subseteq [n]}$ with respect to the inner product $\langle \cdot, \cdot \rangle _2.$

\noindent
For the analysis of our {\UG} integrality gap instance presented in Section \ref{sec:UGC}, we need the following notion of an $\ell_p$ norm of a Boolean function. 
 For  $f:\{-1,1\}^n \mapsto \mathbb{R}$ and $p
\geq 1,$ let $$\| f\|_p  \defeq \left( \frac{1}{2^n} \sum _{\x \in
\{-1,1\}^n} |f(\x)|^p\right)^{\nfrac{1}{p}}.$$
We also need to define the so-called Bonami-Beckner operator whose input is a Boolean function $f$ and whose output is again a  Boolean function (which is supposed to be a {\em smoothened} version of $f$).   

\begin{defn}[Hyper-contractive Operator]\label{def:t}
For each $\rho \in [-1,1],$ the Bonami-Beckner operator $T_\rho$ is
a linear operator that maps the space of functions $\{-1,1\}^n
\mapsto \mathbb{R}$ into itself via
$$ T_\rho[f] \defeq \sum _{S \subseteq [n]} \rho^{|S|} \widehat{f}_S\chi_S.$$
\end{defn}

\noindent
The following theorem shows that the Bonami-Beckner operator indeed smoothens $f$: It allows us to upper bound a higher norm of   $T_\rho[f]$ of $f$  with a lower norm of $f$ under certain conditions.  

\begin{theorem}[Bonami-Beckner Inequality \cite{RyanThesis}]\label{thm:bb}
Let $f:\{-1,1\}^n \mapsto \Real$ and $1 < p < q.$ Then
$$ \| T_\rho[f]\|_q \leq \|f\|_p$$ for all ~$0 \leq \rho \leq \left(
\frac{p-1}{q-1} \right)^{\nfrac{1}{2}}.$
\end{theorem}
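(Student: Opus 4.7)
The plan is to prove the inequality by induction on $n$, exploiting the fact that the Bonami--Beckner operator $T_\rho$ factors as a product of commuting one-dimensional operators, one for each coordinate. Writing $T_\rho^{(i)}$ for the operator that acts as the one-dimensional Bonami--Beckner operator in the $i$-th variable and as the identity on the others, one has $T_\rho = T_\rho^{(1)} \circ \cdots \circ T_\rho^{(n)}$, so the problem reduces to a one-variable base case combined with a norm-exchange mechanism across coordinates.

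\emph{Base case} ($n=1$). Every $f:\{-1,1\}\to\Real$ can be written uniquely as $f(x)=a+bx$, and then $T_\rho[f](x)=a+\rho b x$. What needs to be shown is the \emph{two-point inequality}
\[
\left(\tfrac{1}{2}|a+\rho b|^{q}+\tfrac{1}{2}|a-\rho b|^{q}\right)^{1/q}
\;\leq\;
\left(\tfrac{1}{2}|a+b|^{p}+\tfrac{1}{2}|a-b|^{p}\right)^{1/p}
\]
for every $0\leq\rho\leq\sqrt{(p-1)/(q-1)}$. By homogeneity I can set $a=1$, by sign symmetry $b=t\in[0,1]$, and by monotonicity in $\rho$ it suffices to check the critical value $\rho=\sqrt{(p-1)/(q-1)}$. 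Expanding both sides as Taylor series in $t$ and comparing coefficients term by term, the choice of $\rho$ makes the quadratic terms agree exactly, and an elementary induction on the degree shows that the higher-order coefficients on the right dominate those on the left. This is the technical heart of the argument.

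\emph{Inductive step.} Given $f:\{-1,1\}^{n}\to\Real$ with $n\geq 2$, split off the last coordinate, writing $f(x_1,\ldots,x_n)=g(x')+x_n\,h(x')$ where $x'=(x_1,\ldots,x_{n-1})$, and put $G=T_\rho^{(1..n-1)}g$, $H=T_\rho^{(1..n-1)}h$ where $T_\rho^{(1..n-1)}$ denotes the product of the first $n-1$ one-coordinate operators. Since $T_\rho=T_\rho^{(n)}\circ T_\rho^{(1..n-1)}$, we get $T_\rho[f](x)=G(x')+\rho x_n\,H(x')$. Writing $\|\cdot\|_{r,x_n}$ and $\|\cdot\|_{r,x'}$ for the $\|\cdot\|_r$ norms in the respective variables (with respect to the uniform measure), I would then chain
\[
\|T_\rho[f]\|_{q}
=\bigl\|\,\|G+\rho x_n H\|_{q,x_n}\,\bigr\|_{q,x'}
\;\leq\; \bigl\|\,\|G+x_n H\|_{p,x_n}\,\bigr\|_{q,x'},
\]
where the inequality is the base case applied with $x'$ fixed. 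Since $p\leq q$, Minkowski's integral inequality (applied to $|G+x_n H|^{p}$ with exponent $q/p\geq 1$) allows swapping the two norms,
\[
\bigl\|\,\|G+x_n H\|_{p,x_n}\,\bigr\|_{q,x'}
\;\leq\;\bigl\|\,\|G+x_n H\|_{q,x'}\,\bigr\|_{p,x_n}.
\]
Finally, because $T_\rho^{(1..n-1)}$ acts only on $x'$ and is linear, $G(x')+x_n H(x')=T_\rho^{(1..n-1)}[g+x_n h](x')$, so the inductive hypothesis (applied with $x_n$ held fixed, in dimension $n-1$) yields $\|G+x_n H\|_{q,x'}\leq\|g+x_n h\|_{p,x'}$, and combining the outer $\|\cdot\|_{p,x_n}$ with Fubini gives exactly $\|f\|_p$.

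The hard part is the two-point inequality: once it is in hand, the rest is a clean combination of the base case, Minkowski, and induction. The reason the exponent $\rho^{2}=(p-1)/(q-1)$ is \emph{sharp} becomes visible precisely there, since it is the value that makes the quadratic Taylor coefficients match, which is what lets the coefficient-by-coefficient comparison go through at every higher order.
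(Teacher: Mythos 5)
The paper cites this theorem from O'Donnell's thesis \cite{RyanThesis} without proof --- it is used purely as a black box inside Lemma~\ref{l:noisy-hyp} --- so there is no in-paper argument to compare against. On its own terms, your tensorization strategy is the standard one: the factorization $T_\rho = T_\rho^{(1)}\circ\cdots\circ T_\rho^{(n)}$, the one-variable base case, and the norm swap via Minkowski's integral inequality are exactly how the result is proven in the cited reference, and your inductive step, as written, is correct.

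Your base-case sketch, however, has a genuine gap. After normalizing to $a=1$ and $b=t\geq 0$ you assert ``$t\in[0,1]$,'' but sign symmetry only gives $t\geq 0$; nothing forces $t\leq 1$. (The obvious symmetry $f\mapsto xf$ does swap $a$ and $b$ at the level of $\|\cdot\|_p$, but it does not commute with $T_\rho$ --- $T_\rho[xf]\neq x\,T_\rho[f]$ for $\rho<1$ --- so it cannot be used to reduce $t>1$ to $t\leq 1$.) For $t>1$ and non-integer $p$, the term $|1-t|^p$ on the right side is not analytic across $t=1$, so the Taylor expansion around $t=0$ no longer represents the function and the coefficient-by-coefficient comparison is vacuous precisely on the range you omitted. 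The standard fix (and the route in \cite{RyanThesis}) is to first prove the $(2,q)$ two-point inequality, where the right side is $\sqrt{a^2+b^2}$ and the normalization $a^2+b^2=1$ parametrizes cleanly without the $t>1$ pathology, then obtain the $(p,2)$ case by duality, and finally compose to get $(p,q)$ via the semigroup identity $T_{\rho_1}T_{\rho_2}=T_{\rho_1\rho_2}$. Relatedly, your reduction ``by monotonicity in $\rho$'' to the critical value is correct but needs a word of justification: it holds because $T_\sigma$ for $\sigma\in[0,1]$ is an averaging operator, hence an $L^q$-contraction, so $\|T_\rho f\|_q = \|T_{\rho/\rho^*}\,T_{\rho^*}f\|_q\leq \|T_{\rho^*}f\|_q$; it does not follow formally from the statement being proved.
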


\medskip
\noindent
The last set of preliminaries are  important for the PCP reduction in Section \ref{sec:inner-bs}.
\begin{defn}[Long Code \cite{BGS}]\label{def:longcode} The Long Code over a domain $[N]$ is indexed by all
$\x \in \{-1,1\}^N$. The Long Code $f$ of an element $j \in [N]$ is
defined to be 
 $ f(\x)  \defeq \chi_{\{j\}}(\x)=x_j,$ for all  $\x=(x_1,\ldots,x_N) \ \in \ \{-1, 1\}^N.$
\end{defn}
\noindent
Thus, a Long Code is simply a Boolean function that is a
dictatorship, i.e., it depends only on one coordinate. In
particular, if $f$ is the Long Code of $j \in [N]$, then
$\widehat{f}_{\{j\}}=1$ and all other Fourier coefficients are zero.

\medskip
\noindent
The following theorem (quantitatively) shows that if a Boolean function is such that its Fourier mass is concentrated on sets of small size, then it must be close to a {\em junta}. In other words, its  Fourier mass on sets with {\em small} Fourier coefficients is {\em small}. 

\begin{theorem}[Bourgain's Junta Theorem \cite{Bourgain}]
\label{thm:bourgain-junta} Fix any $\nfrac{1}{2} < t < 1$. Then,  there
exists a constant $c_t > 0,$ such that, for all positive  integers $k$, for
all $\gamma > 0$ and for all Boolean functions $f:\{-1,1\}^n
\mapsto \{-1,1\},$

$$ \mbox{if} \ \sum_{S \ : \ |S| >  k} \widehat{f}^2_S  < c_t k^{-t}  \quad \mbox{then } \quad
 \sum_{S \ : \ |\widehat{f}_S| \leq \gamma4^{-k^2} }
 \widehat{f}^2_S
  <  \gamma^2.$$
\end{theorem}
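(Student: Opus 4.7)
The plan is to argue the contrapositive by a structural analysis of low-degree Boolean-like functions, using the Bonami-Beckner inequality (Theorem 2.7) as the main tool. Write $g \defeq f^{\leq k} = \sum_{|S| \leq k} \widehat{f}_S \chi_S$, so that $\|f - g\|_2^2 = \sum_{|S| > k} \widehat{f}_S^2$; under the hypothesis this $L^2$-error is at most $c_t k^{-t}$, meaning $f$ is very well approximated in $L^2$ by the low-degree polynomial $g$.

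First, I would reduce the target quantity to the low-degree regime. Split the set $\{S : |\widehat{f}_S| \leq \gamma 4^{-k^2}\}$ into a high-degree part (with $|S| > k$) and a low-degree part $B \defeq \{S : |S| \leq k, \, |\widehat{f}_S| \leq \gamma 4^{-k^2}\}$. The high-degree contribution is at most $\sum_{|S|>k}\widehat{f}_S^2 \le c_t k^{-t}$, which we absorb into $\gamma^2$ by choosing $c_t$ sufficiently small. So it is enough to bound $\|h\|_2^2 \defeq \sum_{S \in B} \widehat{f}_S^2$, where $h \defeq \sum_{S \in B} \widehat{f}_S \chi_S$ is a polynomial of degree at most $k$ all of whose Fourier coefficients are tiny.

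The heart of the argument is to exploit the Boolean constraint $f^2 \equiv 1$ via hypercontractivity. Since $f - g$ has small $L^2$-norm and $f$ is $\pm 1$-valued, the degree-$\le 2k$ polynomial $g^2 - 1 = (g-f)(g+f)$ has small $L^2$-norm, which (combined with Theorem 2.7 applied at several norms) forces $g$ to concentrate sharply around $\{-1,+1\}$, not just in $L^2$ but in an $L^p$ sense for large $p$. In parallel, one applies Theorem 2.7 to the low-degree function $h$ itself to obtain the standard degree-$k$ blow-up bound $\|h\|_q \le (q-1)^{k/2}\|h\|_2$ for $q > 2$. The final step is to extract $\|h\|_2^2$ from an inner product of the form $\langle h, g \rangle$ (or an analogous test against $f$), decompose $h$ by the magnitudes of its coefficients into dyadic level sets, and use Parseval together with the coefficient cap $|\widehat{f}_S| \le \gamma 4^{-k^2}$ to bound each level's contribution. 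The level-set argument pits the coefficient cap $4^{-k^2}$ against the hypercontractive blow-up $(q-1)^{k/2}$, and one chooses $q$ as a function of $k$ so that the balance yields $\|h\|_2 < \gamma$.

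The main obstacle is calibrating the exponents so that the bound comes out at the right strength. The threshold $\gamma 4^{-k^2}$ in the statement is sharply tuned: any weaker cap on individual coefficients would allow $h$ to carry $\Omega(1)$ mass distributed over $\sim 4^{k^2}$ moderate coefficients, while the budget $c_t k^{-t}$ on the tail is only polynomially small. Making the chaining over level sets survive this tight budget, and in particular doing so for every $t > \tfrac{1}{2}$ (the theorem being essentially sharp at $t = \tfrac{1}{2}$), is the delicate technical core of Bourgain's argument and is where the constant $c_t$ must be chosen to depend on $t$.
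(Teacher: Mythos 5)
The paper does not prove this theorem; it is cited as Bourgain's result \cite{Bourgain} and used as a black box in the soundness analysis of the PCP reduction (Appendix A). Evaluating your sketch on its own merits, the ``absorption'' step is a genuine gap. You assert that the high-degree contribution, bounded by $\sum_{|S|>k}\widehat{f}_S^2 < c_t k^{-t}$, can be absorbed into $\gamma^2$ by choosing $c_t$ sufficiently small. But $c_t$ is allowed to depend only on $t$, while the theorem must hold for \emph{every} $\gamma>0$ and every positive integer $k$; for any fixed $c_t$ and any $k$, taking $\gamma < \sqrt{c_t\,k^{-t}}$ makes $c_t k^{-t} < \gamma^2$ false, so no choice of $c_t$ depending only on $t$ can effect the absorption. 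Intersecting with the coefficient cap $|\widehat{f}_S|\le\gamma 4^{-k^2}$ does not rescue the step either, since the number of high-degree sets $S$ with tiny but nonzero coefficient is unbounded, so one cannot directly bound $\sum_{|S|>k,\;|\widehat{f}_S|\le\gamma 4^{-k^2}}\widehat{f}_S^2$ by $O(\gamma^2)$. A correct argument cannot cleanly split off the high-degree part in this way.

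Beyond that, the low-degree analysis is too underspecified to check, and the overall plan --- a direct hypercontractive/level-set argument applied to $f$ and $g=f^{\le k}$, leveraging $f^2\equiv 1$ --- does not capture what makes this theorem hard. Direct level-set estimates of the kind you outline (coefficient cap $4^{-k^2}$ versus hypercontractive growth $(q-1)^{k/2}$) are known to yield junta theorems only under substantially faster tail decay than $k^{-t}$ with $t>\nfrac{1}{2}$; reaching every $t>\nfrac{1}{2}$ is precisely the content of Bourgain's theorem, whose proof goes through random restrictions of $f$ to subcubes, a careful analysis of the Fourier spectrum of the restricted functions, and an iterative aggregation step, with hypercontractivity used only as one ingredient inside that structure. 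Your sketch neither reflects that structure nor repairs the absorption gap, so as written it would not go through.
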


\section{The Integrality Gap Instance for  Unique Games}\label{sec:UGC}

In this section, we  present the integrality gap construction for a natural {SDP} relaxation of the {\UG} problem.
We start with defining the {\UG} problem, the
{UGC} of Khot \cite{KhotUCSP} along with the related
preliminaries towards our construction.

\subsection{The Unique Games Problem, its SDP Relaxation and the UGC}
\begin{defn}[{\UG}]
An instance
 $\mathcal{U}=\left(G(V, E), [N], \{ \pi_e\}_{e \in E}, {\rm wt}  \right)$ of {\UG} is
defined as follows: $G(V,E)$ is a
graph with a set of vertices $V$
and a  set of edges $E$.
An
edge $e$ with endpoints  $v$ and $w$ is written as $e\{v,w\}.$
 For every $e\{v,w\} \in E,$ there is a bijection $\pi_e:[N] \mapsto [N]$ and a weight ${\rm wt}(e) \in \mathbb{R}^+.$
The goal is to assign a {\em label} from the set $[N]$
to every vertex of the graph
so as to satisfy the
constraints given by bijective maps $\pi_e.$
A labeling $\lambda:V
\mapsto[N]$ {\em satisfies} an edge $e\{v,w\},$ if $\lambda(v)=
\pi_e (\lambda(w))$.\footnote{We consider the edges to be undirected, but there is an implicit
direction when we write the edge as $e\{v,w\}$ and it is reflected in the bijective constraint that
$\lambda(v)=
\pi_e (\lambda(w))$. The edge could be written in reverse by reversing the bijection.}
Let ${\rm val}(\lambda)$ denote the total weight of the edges satisfied by a labeling $\lambda$:
$$ {\rm val}(\lambda)  \defeq \sum _{e\{v,w\} \in E: \lambda \ \mbox{\rm satisfies} \ e } \ {\rm wt}(e). $$
The optimum {\rm opt}$(\calU)$ of the {\UG}
instance  is defined to be the maximum weight of  edges satisfied by any labeling:
 $${\rm opt} (\calU)  \defeq \max_{\lambda: V \mapsto [N]}  {\rm val}(\lambda).  $$
We assume w.l.o.g that $\sum _{e \in E} {\rm wt} (e)=1$ so that the weights define a probability
distribution over edges. A choice of a random edge refers to an edge chosen from this distribution.
We also assume that
the graph is regular in the sense that
the sum of weights of edges incident on a vertex is the same for all vertices. A choice of a random edge incident
on a vertex $v$ refers to a choice of a random edge conditional on having one endpoint as $v$.
\end{defn}

\begin{conjecture}[{UGC} ~\cite{KhotUCSP}] \label{conj:ugc}
For every pair of  constants $\eta, \zeta > 0$, there exists a sufficiently large
constant $N  = N(\eta,
\zeta)$ such that given a {\UG} instance $\;\calU=\left(G(V, E), [N], \{ \pi_e\}_{e \in E},
{\rm wt}  \right)$, it is {NP}-hard to distinguish whether:
\begin{itemize}
\item {\rm opt}$(\calU) \geq 1-\eta,$ or
\item {\rm opt}$(\calU) \leq \zeta$.
\end{itemize}
\end{conjecture}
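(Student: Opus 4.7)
Since no proof of the UGC is currently known, what follows is a plan for an attempt in the standard PCP composition style, with honest acknowledgement of where it gets stuck. The outer verifier would be a hard gap Label Cover instance obtainable from the PCP theorem combined with Raz's parallel repetition: a bipartite graph whose super-vertices are labeled from $[R]$ on one side and $[K]$ on the other, with projection constraints $\pi_e : [R] \to [K]$ on each edge, and gap $(1, 1/R^{\Omega(1)})$. The inner verifier would use the Long Code (Definition \ref{def:longcode}) as its proof encoding: the prover supplies, for every super-vertex $u$, a purported Long Code $f_u : \{-1,1\}^{R_u} \to \{-1,1\}$ of its label, where $R_u \in \{R, K\}$ depends on the side of $u$. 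To obtain a {\UG} instance, one identifies each Long Code with a labeling into the huge alphabet $[N = 2^R]$, and must exhibit, for each outer edge $e$, a \emph{bijection} $\pi_e^{\mathrm{UG}} : [N] \to [N]$ such that the inner test on $(f_u, f_v)$ is equivalent to a unique constraint $\lambda(u) = \pi_e^{\mathrm{UG}}(\lambda(v))$.

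The first test I would try is the ``noise-and-project'' test: pick a random edge $e$, a random $x \in \{-1,1\}^R$, a random $y \in \{-1,1\}^K$ obtained by projecting $x$ through $\pi_e$ and flipping each bit independently with small noise $\eta$, and accept iff $f_u(x) = f_v(y)$. Completeness $1 - \eta$ follows by taking $f_u, f_v$ to be true dictators on consistent labels. Soundness $\zeta$ would be proved by Fourier analysis: assuming the test passes with probability exceeding $\zeta$, one would expand $f_u, f_v$ in the character basis, apply the Bonami-Beckner inequality (Theorem \ref{thm:bb}) to smooth the expansion, and invoke Bourgain's junta theorem (Theorem \ref{thm:bourgain-junta}) to conclude that at least one of the $f_u$ is close to a junta on a few coordinates. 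Decoding a random coordinate of this junta as the Label Cover label yields a labeling satisfying significantly more than $1/R^{\Omega(1)}$ of the outer constraints, contradicting the outer soundness.

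The main obstacle, and the reason the conjecture is still open as of the writing of this paper, is that the natural composition above produces a \emph{projection-type} (many-to-one) constraint between the decoded Long Code labels, not a bijection: the label on the $[R]$ side is determined only up to the fiber of $\pi_e$. Forcing bijectivity by folding or pre-composing with random permutations tends to either destroy completeness or inflate soundness. Bijectivity is essentially what the UGC asks for, so starting from projection Label Cover makes the reduction look circular; a proof would need a genuinely new gadget that converts projection into bijection while preserving a $(1-\eta, \zeta)$ gap, or else bypass Label Cover altogether. Compounding the difficulty, the very integrality gap instance constructed later in this paper (Theorem \ref{thm:ugc-igap}) exhibits \UG\ instances where the canonical SDP relaxation has near-perfect value yet the true optimum is vanishing, ruling out any soundness argument that factors through that SDP. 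I expect any proof to require a genuinely new idea rather than a routine assembly of these pieces, and my template does not close the projection-to-bijection gap.
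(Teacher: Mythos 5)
This is the Unique Games Conjecture itself, which the paper states as an unproven hypothesis (citing \cite{KhotUCSP}) and does not attempt to prove; there is no ``paper's own proof'' to compare against. You correctly recognized that, and your account of the standard Label-Cover-plus-Long-Code composition and of the projection-versus-bijection obstruction is a fair summary of why the conjecture remains open, so there is nothing to correct here.

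One small remark: your last point --- that the integrality gap instance of Theorem \ref{thm:ugc-igap} ``rules out any soundness argument that factors through that SDP'' --- is accurate and worth keeping, but it cuts against only a narrow class of potential proofs (those trying to \emph{certify} low \UG{} optimum via the basic SDP of Figure \ref{sdp:ugc}); it does not bear on the NP-hardness question directly, since a hardness reduction need not produce an SDP certificate. The rest of the obstacle you identify (converting projection constraints into bijections without losing the gap) is the substantive one, and your honesty about the template not closing it is exactly the right posture for a conjecture.
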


\noindent
Consider a
{\UG} instance $\calU=\left(G(V, E), [N], \{ \pi_e\}_{e \in E}, {\rm wt}  \right).$
Khot \cite{KhotUCSP} proposed the {SDP} relaxation in Figure \ref{sdp:ugc} (inspired by a paper of
Feige and Lov\'asz \cite{FL}). Here,
for every $v \in V,$ we associate a set of $N$ orthogonal vectors
$\{ \v_1,\ldots,\v_N\}$. The intention is that if $i_0 \in [N]$ is a label for
vertex $v \in V$, then $\v_{i_0} = \sqrt{N} {\bf 1},$ and $\v_i = {\bf 0}$ for all $i \not= i_0$.
Here, ${\bf 1}$ is some fixed unit vector and ${\bf 0}$ is the zero-vector. However, once we
take the {SDP} relaxation, this may no longer be true and $\{\v_1, \v_2, \ldots, \v_N\}$ could be
any set of orthogonal vectors.

\begin{figure}[htb]
\begin{equation}\label{sdp:ugcobj}
{\rm Maximize} \quad  \sum_{e\{v,w\}\in E} {\rm wt}(e) \cdot \frac{1}{N} \left( \sum_{i=1}^N
\left\langle \v_{\pi_e(i)}, \w_{i} \right\rangle \right) \end{equation}
Subject to
\begin{eqnarray}
\forall \ v \in V & \sum_{i =1}^N \left\langle\v_i, \v_i \right\rangle = N     \label{sdp:ugc1} \\
\forall \ v \in V \ \ \forall \ \ i
\neq j  & \left\langle\v_i, \v_j \right\rangle= 0   \label{sdp:ugc2} \\
 \forall \ v, w \in V \ \ \forall
 \ \ i, j & \left\langle\v_i, \w_j \right\rangle\geq  0 \label{sdp:ugc3} \\
\forall
 \ v, w \in V & \sum_{1\leq i, j\leq N}  \ \left\langle\v_i, \w_j\right\rangle = N     \label{sdp:ugc4}
\end{eqnarray}
 \caption{SDP for {\UG}}\label{sdp:ugc}
\end{figure}

\subsection*{The Noisy Hypercube  and an Overview of the Integrality Gap Instance}
With a {\UG} instance with $N$ labels, one can associate a related graph called the {\it label extended graph}. It
turns out that the optimum of the {\UG} instance is closely related to the expansion of {\it small} sets, namely those of relative size
$\nfrac{1}{N}$, in the label extended graph. In particular, if all sets of size $\nfrac{1}{N}$ in the label extended
graph have a near-full expansion, then the optimum of the {\UG} instance is low.
Our integrality gap construction  starts with a so-called {\it noisy hypercube graph} on vertex set $\{-1,1\}^N$
and obtain a {\UG} instance from it so that the former is precisely the label extended graph of the latter. The fact that
the {\UG} instance has low optimum then follows directly from the observation that the noisy hypercube
graph is a small set expander (its proof via the Bonami-Beckner inequality was pointed out to us
by Ryan O'Donnell).
The SDP solution for the {\UG} instance is constructed using the vertices of the hypercube
thought of as vectors in $\mathbb{R}^N$.

\begin{remark} The idea of the label extended graph and the implication that the small set expansion in the
label extended graph implies low optimum for the {\UG} instance were implicit in the conference version of this paper  \cite{KhotV05}.
We choose to make this more explicit here for the ease of presentation as well as in light of
recent works that we briefly mention. Raghavendra and Steurer recently proposed the Small Set Expansion Conjecture \cite{RS-SSE} and showed that it implies the UGC. The former states that for every constant $\epsilon > 0$, there exists a constant $\delta > 0$ such that
given an $n$-vertex graph that has a {\it small} non-expanding set,
i.e., of size $\delta n$ and with edge expansion at most $\epsilon$, it is NP-hard to find a set of size (roughly) $\delta n$
that is even somewhat non-expanding, i.e., with expansion at most $1-\epsilon$. The SSE Conjecture has led to many interesting works including a new
algorithm for {\UG} by Arora, Barak and Steurer \cite{ABS} and the construction of the {\em short code} \cite{BarakGHMRS12}.
\end{remark}

\begin{defn} Given a {\UG} instance $\calU=\left(G(V, E), [N], \{ \pi_e\}_{e \in E}, {\rm wt}  \right),$ the corresponding
label extended graph $G'(V',E', {\rm wt}')$ is defined  as follows:
\begin{itemize}
\item $V' = V \times [N]$.
\item $\forall e\{v,w\} \in E, i \in [N]$, we let $e' \{ (v, \pi_e(i)), (w,i) \} \in E'$ and ${\rm wt}'(e') = {\rm wt}(e)$.
\end{itemize}
Note that $\sum_{e' \in E'} {\rm wt}'(e') = N$.
\end{defn}

\noindent
It is helpful to view the label extended graph as being obtained from the {\UG} graph by replacing every vertex $v$ by a
group of $N$ vertices representing labels to $v$
and replacing every edge $e\{v,w\}$ by an {\it edge-bundle} of $N$ edges that form a perfect matching between the two
groups and capture the bijective constraint $\pi_e$.

\smallskip
The expansion $\Phi(S')$   of a set $S' \subseteq V'$ in the label extended graph
is defined to be  the probability
of leaving $S'$ when a random vertex in $S'$ and then a random edge leaving that vertex (w.r.t. the weights ${\rm wt}'$)
is chosen. Note that $\Phi(S') \in [0,1]$.
Any labeling $\lambda: V \mapsto [N]$ to a {\UG} instance corresponds to the set $S'_\lambda \subseteq V'$
as follows:
 $$S'_\lambda  \defeq \{ (v, \lambda(v)) \ |  \ v \in V \}. $$
An easy observation is that the (weighted) fraction of edges satisfied by a labeling $\lambda$ is related to
the expansion of the set  $S'_\lambda$:
\begin{equation} \label{eqn:ug-exp}
  {\rm val}(\lambda) =  1 - \Phi(S'_\lambda).
\end{equation}
Here is a quick proof of the above equality.
Pick a random vertex $(v, \lambda(v))$ in $S'_\lambda$ by
choosing a random vertex $v \in V$. Choosing a random edge incident on $(v, \lambda(v))$
(w.r.t. ${\rm wt}'$) amounts to choosing a random edge $e\{v,w\}$ incident on $v$ (w.r.t. ${\rm wt}$) and outputting
$\{ (v, \lambda(v)), (w, \pi_e^{-1} (\lambda(v))) \}$. The expansion of $S'_\lambda$ is now related to
the event that $(w, \pi_e^{-1} (\lambda(v))) \in  S'_\lambda$ which is same as the event that
$\pi_e^{-1} (\lambda(v)) = \lambda(w)$ which is same as the event that $\lambda$ satisfies the edge $e\{v,w\}$.

\smallskip
As remarked before, our construction starts with the noisy hypercube graph and uses the fact that the graph is a small
set expander. A natural way to describe this graph is by describing one step of the random walk on it (which then naturally
leads to edge-weights with unit total weight).
\begin{defn}\label{def:nh} The noisy hypercube graph $H$ with parameters $N$ and $0 < \eta < \nfrac{1}{2}$ has
\begin{itemize}
\item the vertex set $\{-1,1\}^N$ with uniform distribution and
\item for any vertex ${x} \in \{-1,1\}^N$, choosing a random edge $({x}, {y})$ incident on ${x}$  amounts to
      flipping every bit of ${x}$ with probability $\eta$ independently and letting ${y}$ to be the string so obtained.
\end{itemize}
\end{defn}

\begin{lemma} \label{l:noisy-hyp}
Let $H$ be the noisy hypercube with parameters $N$ and $\eta$ and $S \subseteq \{-1,1\}^N$ be a set of relative
size $\nfrac{1}{N}$. Then $1-\Phi(S) \leq  \nfrac{1}{N^{\eta+\eta^2}}$.
\end{lemma}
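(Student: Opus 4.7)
The plan is to express $1 - \Phi(S)$ as an inner product involving the Bonami-Beckner operator and then invoke the hypercontractive inequality (Theorem~\ref{thm:bb}).

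First, I would let $f \defeq \mathbf{1}_S : \{-1,1\}^N \to \{0,1\}$ be the indicator of $S$, so $\|f\|_1 = \mathbb{E}[f] = \nfrac{1}{N}$. The probability that a random edge incident to a random vertex of $S$ stays inside $S$ equals
\[
1 - \Phi(S) \;=\; \frac{\mathbb{E}_{x \sim U}\bigl[f(x)\, T_{1-2\eta}[f](x)\bigr]}{\mathbb{E}_{x \sim U}[f(x)]} \;=\; N \cdot \bigl\langle f,\, T_{1-2\eta} f \bigr\rangle_2,
\]
using the fact that the random neighbor of $x$ under the noisy hypercube walk is precisely the distribution defining $T_{1-2\eta}[f](x)$.

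Next, I would use the semigroup property of the Bonami-Beckner operator, $T_{1-2\eta} = T_{\sqrt{1-2\eta}} \circ T_{\sqrt{1-2\eta}}$, together with self-adjointness, to rewrite
\[
\bigl\langle f,\, T_{1-2\eta} f \bigr\rangle_2 \;=\; \bigl\| T_{\sqrt{1-2\eta}} f \bigr\|_2^{\,2}.
\]
Now I would invoke Theorem~\ref{thm:bb} with $q = 2$ and $p = 2 - 2\eta$: since $\sqrt{1-2\eta} \leq \sqrt{(p-1)/(q-1)} = \sqrt{1-2\eta}$ (equality), hypercontractivity gives $\|T_{\sqrt{1-2\eta}} f\|_2 \leq \|f\|_{p}$. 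Because $f$ is $\{0,1\}$-valued, $\|f\|_p^{\,p} = \mathbb{E}[f^p] = \mathbb{E}[f] = \nfrac{1}{N}$, hence $\|f\|_p = N^{-1/p} = N^{-1/(2-2\eta)}$.

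Combining the three steps:
\[
1 - \Phi(S) \;\leq\; N \cdot N^{-2/(2-2\eta)} \;=\; N^{\,1 - 1/(1-\eta)} \;=\; N^{-\eta/(1-\eta)}.
\]
Finally I would note the elementary inequality $\eta/(1-\eta) = \eta + \eta^2 + \eta^3 + \cdots \geq \eta + \eta^2$ for $\eta \in (0, \nfrac{1}{2})$, which yields $1 - \Phi(S) \leq \nfrac{1}{N^{\eta+\eta^2}}$, as required. The only nontrivial ingredient is selecting the exponents so that the hypercontractive bound fires exactly at $p = 1 + \rho$ with $\rho = 1 - 2\eta$; everything else is bookkeeping. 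I do not expect a significant obstacle — the argument is essentially the standard hypercontractive proof that the noisy hypercube is a small-set expander.
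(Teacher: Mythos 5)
Your proposal is correct and follows essentially the same route as the paper: rewrite $1-\Phi(S)$ as $N\cdot\langle f, T_{1-2\eta}f\rangle_2 = N\cdot\|T_{\sqrt{1-2\eta}}f\|_2^2$ and apply Theorem~\ref{thm:bb} with $p=2-2\eta$, $q=2$. The only cosmetic difference is that you obtain the identity $\langle f,T_{1-2\eta}f\rangle_2=\|T_{\sqrt{1-2\eta}}f\|_2^2$ via the semigroup/self-adjointness property of $T_\rho$, whereas the paper writes out the Fourier expansion $\sum_\alpha \widehat{f}_\alpha^2(1-2\eta)^{|\alpha|}$ explicitly; these are the same computation.
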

\begin{proof} Let $f: \{-1,1\}^N \mapsto \{0,1\}$ be the indicator function of the set $S$ so that $\| f\|_p^p =
\nfrac{1}{N}$ for any $1 \leq p < \infty$.  An application of Bonami-Beckner inequality gives
 (the probability is taken over choice of a random vertex ${x}$ and a random
edge $({x}, {y})$ incident on it)
\begin{eqnarray*}
1 - \Phi(S) & =  &  \Pr \left[ {y} \in S  \  |  \ {x} \in S \right]
             =    \frac{ \Pr \left[ {x} \in S, \  {y} \in S \right] }{ \Pr  \left[ {x} \in S\right]  }
             =   N \cdot  \Pr \left[ {x} \in S, \  {y} \in S \right] \\
            & = & N \cdot  { \E}_{{x}, {y}} [ f({x}) f({y}) ]
             =  N \cdot  \sum_{ \alpha \subseteq [N] }  \widehat{f}_\alpha^2 (1-2\eta)^{|\alpha|}
             \stackrel{{\rm Def. \ref{def:t}}}{=}   N \cdot \| T_{\sqrt{1-2\eta}} f  \|_2^2  \\
            & \stackrel{{\rm Thm.} \ref{thm:bb}}{\leq} & N \cdot \| f  \|_{2-2\eta}^2 = N \cdot \left( \frac{1}{N} \right)^{\nfrac{2}{(2-2\eta)}} \leq
            N \cdot \frac{1}{N^{1+\eta+\eta^2}} = \frac{1}{ N^{\eta+\eta^2}}.
\end{eqnarray*}
\end{proof}

\noindent
Call an edge $({x}, {y})$ of the noisy hypercube {\it typical} if the Hamming distance between
${x}$ and ${y}$ is close to $\eta N$, say between $\frac{\eta}{2}N$ and $2\eta N$. By the Chernoff bound,
the (weighted) fraction of edges which are not {\it typical} is at most $2^{-\Omega(\eta N)}$ which is negligible
in our context. We delete all these edges (mainly for the ease of presentation) and observe that the conclusion of Lemma
\ref{l:noisy-hyp} still holds with the bound $1-\Phi(S) \leq \nfrac{1}{N^{\eta}}$.
The weights of the edges change
slightly, due to a re-normalization to preserve the unit total weight, but we ignore this issue.

\smallskip
We are now ready to construct an integrality gap instance for the SDP in Figure \ref{sdp:ugc}. To be precise,
for parameters $N$ and $\eta$,
we  construct an instance $\calU= \left(G(V, E), [N], \{ \pi_e\}_{e \in E}, {\rm wt}  \right)$ of
{\UG} such that
\begin{itemize}
\item (Soundness) ${\rm opt}(\calU) \leq \nfrac{1}{N^\eta}$ and
\item (Completeness) There is an SDP solution with objective value at least $1-9\eta$.
\end{itemize}
This construction is  used later to construct integrality gap instances for cut problems.
As mentioned earlier, the {\UG} instance is constructed precisely so that the noisy hypercube graph happens to be its
label extended graph and then the soundness guarantee follows from Lemma \ref{l:noisy-hyp}.  The vertex
set of the noisy hypercube graph is $\{-1,1\}^N$ where $N = 2^k$. It is convenient
for us to identify a point in $\{-1,1\}^N$ as a Boolean function $f: \{-1,1\}^k \mapsto \{-1,1\}$. We
describe the construction formally now.

\subsection{The Integrality Gap Instance}

Let $\calF$ denote the family of all Boolean functions on $\{-1,1\}^k.$  For $f,g \in \calF,$ define the product $f  g$ as $$(f g)(\x) \defeq f(\x)g(\x).$$ Consider the equivalence relation $\equiv$ on $\calF$ defined to be
 $ f \equiv g$  if and only if there is an $S \subseteq [k],$ such that $f=g  \chi_S$ (recall that
 $\chi_S$ is the Fourier character function).  This
 relation partitions $\calF$ into equivalence classes $\calP_1,\ldots, \calP_m$, each class containing exactly $N=2^k$
 functions.
We denote by $[\calP_i]$ one arbitrarily chosen function in  $\calP_i$ as its representative.
Thus, by definition,
$$ \calP_i=\{ [\calP_i]\chi_S \ | \ S \subseteq [k]\}.$$
It follows from the orthogonality of the characters $\{\chi_S\}_{S \subseteq [k]}$, that all the functions in any class
are also mutually orthogonal.
Further, for a function $f \in \calF,$ let $\calP(f)$ denote the class $\calP_i$ in which $f$ belongs.

\smallskip
Let $\bmu \in_\eta \calF$ denote a random {\it perturbation} function on $\{-1,1\}^k$ where
for every $\x \in \{-1,1\}^k,$ independently, $\bmu(\x)=1$ with probability $1-\eta,$ and $-1$ with probability $\eta.$
Let $H$ be the noisy hypercube graph: It is a graph with vertex set $\calF$ and
for Boolean functions $f,g \in \calF,$ the weight of the edge $\{f,g\}$ is defined as follows:
$$ {\rm wt}'(\{f,g\}) \defeq   \Pr _{h \in  \calF,  \; \bmu \in_\eta \calF}
\left[ ((f=h) \wedge (g= h \bmu) )  \vee ((f = h\bmu) \wedge (g = h))  \right],$$
where $h$ is a uniformly random function and $\bmu$ is a random perturbation function.
Note that the
sum of weights over all (undirected) edges is $1$.
Moreover, for any $S \subseteq [k],$ we have ${\rm wt}' (\{f,g\})= {\rm wt}' (\{f\chi_S,g \chi_S\}).$ We delete
all edges $\{f,g\}$ such that the Hamming distance between $f$ and $g$ is outside the range
$[\frac{\eta}{2}N, 2\eta N]$ without really affecting anything as observed before.

\smallskip
The {\UG} instance  $\calU=\left(G(V, E), [N], \{ \pi_e\}_{e \in E}, {\rm wt}  \right)$
is now obtained by taking the
noisy hypercube graph $H$ as above with a {\it grouping} of its vertices into classes $\calP_1,\ldots,\calP_m$.
The edges of $H$ are grouped neatly into edge-bundles: A typical bundle is a set of
$N$ edges between $\calP_i$ and  $\calP_j$, all with
the same weight, and forming a perfect matching between the $N$ vertices in each group.
With this grouping
in mind, the graph can now be naturally thought of as a label extended graph. The {\UG} instance is
obtained by thinking of each class $\calP_i$ as a (super-)vertex, each function $f \in \calP_i$ as a
potential label to it, and the edge bundle between $\calP_i, \calP_j$ as defining the bijective
constraint between them.  Here is a formal (somewhat tedious) description.

\smallskip
The {\UG}
graph $G(V,E)$ is defined as follows.  The set of vertices is
$V \defeq\{\calP_1,\ldots,\calP_m\}$ as above.   For every $f,g \in \calF$ with Hamming
distance in the range $[\frac{\eta}{2}N, 2\eta N]$,   there
is an edge in $E$ between the vertices $\calP(f)$ and $\calP(g)$
with weight $${\rm wt}(\{ \calP(f), \calP(g)  \}) \defeq N \cdot {\rm wt}'( \{f,g\})$$ (the factor
of $N$ reflects the fact that there are $N$ pairs of functions that define the same edge).
The set of labels for the
{\UG} instance is $2^{[k]} \defeq \{S: S \subseteq [k] \}$, i.e.,
the set of labels $[N]$ is identified with the set $2^{[k]}$ (and by design
$N=2^k$).
Note that $f = [\calP_i] \chi_S$ and $g = [\calP_j] \chi_T$ for some sets $S, T \subseteq [k]$.
The bijection $\pi_e,$ for the edge $e\{\calP_i, \calP_j\}$,
can now be
defined:
$$ \pi_{e}( T \star U)  \defeq S \star U, \ \ \ \forall  \ U \subseteq [k].$$
Here, $ \star$ is the symmetric difference operator on sets.     Note that $\pi_e : 2^{[k]} \mapsto
2^{[k]}$ is a permutation on the set of allowed labels. An alternate view is that the potential labels to
class $\calP_i$ are really the functions in that class and for the edge defined by a pair $f \in \calP_i$
and $g \in \calP_j$ as above, $\pi_e$ designates $(f\chi_U, g\chi_U)$ as a matching pairs of labels for all
$U \subseteq [k]$. We emphasize that every matching pair of labels corresponds to a pair of functions with
Hamming distance in $[\frac{\eta}{2}N, 2\eta N]$.

\subsection*{Soundness: No Good Labeling}
Using Lemma \ref{l:noisy-hyp} and Equation \eqref{eqn:ug-exp}, i.e., the connection between the optimum of {\UG}  and the small set expansion of the label extended graph, it follows immediately that any labeling to the {\UG} instance described above
achieves an objective of at most $\nfrac{1}{N^\eta}.$

\subsection*{Completeness: A Good SDP Solution}\label{sec:ULC-complete}

 For $f \in \calF,$ let $\u_f$ denote the unit vector (w.r.t. the $\ell_2$ norm) corresponding to the truth-table of $f.$
Formally, indexing the vector $\u_f$ with coordinates $\x \in \{-1,1\}^k,$ 
$$(\u_f)_\x \defeq \frac{f(\x)}{\sqrt{N}}.$$

\smallskip
\noindent
Recall that in the SDP relaxation of {\UG} (Figure \ref{sdp:ugc}), for every vertex
in $V,$ we need to assign a set of orthogonal vectors. For every vertex $\calP_i \in V$, we choose a function
$f \in \calP_i$ arbitrarily, and
with $\calP_i,$ we associate the set of vectors
$\left\{\u_{f \chi_S} ^{\otimes 2}\right\}_{S \subseteq [k]}.$ The
following facts are easily verified:

\begin{enumerate}
\item  $\sum_{S \subseteq [k]} \left\langle\u_{f \chi_S} ^{\otimes 2}, \u_{f \chi_S} ^{\otimes 2} \right\rangle =  \sum_{S \subseteq [k]} \left\langle \u_{f \chi_S},\u_{f \chi_S}  \right\rangle^2 = N.$

\item For $S \neq T \subseteq [k],$ $\left\langle \u_{f\chi_S}^{\otimes 2},\u_{f \chi_T}^{\otimes 2} \right\rangle =  \left\langle \u_{f \chi_S},\u_{f \chi_T}  \right\rangle^2=  \left\langle \u_{\chi_S},\u_{ \chi_T}  \right\rangle^2=0.$

\item For $f,g \in \calF$ and $S,T \subseteq [k],$  $\left\langle \u_{f\chi_S}^{\otimes 2},\u_{g \chi_T}^{\otimes 2} \right\rangle =  \left\langle \u_{f \chi_S},\u_{g \chi_T}  \right\rangle^2 \geq 0.$

\item For $f \in \calP_i,$  $g\in \calP_j$ for $i \neq j,$
$$\sum_{S,T \subseteq [k]}  \left\langle\u_{f \chi_S} ^{\otimes 2}, \u_{g \chi_T} ^{\otimes 2}
\right\rangle=\sum_{S,T \subseteq [k]} \left\langle \u_{f
\chi_S},\u_{g \chi_T} \right\rangle^2= \sum_{T \subseteq [k]}
\left\|\u_{g \chi_T} \right\|^2=N.$$ Here, the second last equality
follows from the fact that, for any $f \in \calF,$  $\{\u_{f
\chi_S}\}_{S \subseteq[k]}$ forms an orthonormal  basis for
$\mathbb{R}^N.$
\end{enumerate}

\noindent
Hence, all the conditions \eqref{sdp:ugc1}-\eqref{sdp:ugc4}  of the SDP are
satisfied. Next, we show that this vector assignment has an
objective at least $1-9\eta.$
Consider any {\UG} edge
defined by a pair $f,g$ with Hamming distance in the range $[\frac{\eta}{2}N, 2\eta N]$. For any $S \subseteq [k]$,
note that the same edge is defined by the pair $f\chi_S, g \chi_S$ with the same Hamming distance and

$$\left\langle \u_{f \chi_S}^{\otimes 2},\u_{g \chi_S}^{\otimes 2} \right\rangle
 =  \left\langle \u_{f \chi_S},\u_{g \chi_S} \right\rangle^2    \geq (1-4\eta)^2  \geq 1-8\eta.$$
Since the pairs $(f\chi_S, g\chi_S)$ are precisely the
  matching pairs of labels for the {\UG} constraint, it follows that
the objective of this SDP solution is at least
$1-9\eta$ (accounting possibly for the {\it non-typical} pairs $f,g$ with
Hamming distance outside of range $[\frac{\eta}{2}N, 2\eta N]$ that were deleted and
ignored throughout).
Finally, note that since all the vectors have coordinates either $1$ or $-1$ (up to a normalization factor),
any three vectors $u,v,w$ among those described above satisfy the triangle inequality:
$$ 1+ \langle u,v \rangle \geq \langle v,w \rangle + \langle u, w \rangle.$$

\subsection*{Summarizing and Abstracting the Unique Games Instance}
 For future reference, we summarize and abstract out the key
properties of the integrality gap construction in the theorem below.  Therein,
for every vertex $v\in V$ of the {\UG}
instance, there is an associated  set
of vectors $\{\v_i^{\otimes 2} \}_{i \in [N]}.$ Moreover, $[N]$ has
a group structure with addition operator $\oplus$ (the group being
$\mathbb{F}_2^k$ and $i \in [N]$ identified with the corresponding group element).
Additionally, we keep track of the parameter $\eta$ and denote the instance by $\calU_\eta.$

\begin{theorem}\label{thm:ulc-sdp}
For any  $0< \eta <\nfrac{1}{2}$ and any integer $N$ that is a power of $2$,
there is a {\UG} instance $\calU_\eta=\left(G(V, E), [N], \{ \pi_e\}_{e \in E}, {\rm wt}\right)$ along with
a set of vectors
$\{\v_i^{\otimes 2} \}_{i \in [N]}$ for every vertex such that:
\begin{enumerate}
\item $|V| = \tilde{n} = \nfrac{2^N}{N}$  and  ${\rm opt}(\calU_\eta) \leq \log^{-\eta} \tilde{n}$.
\item {\bf Orthonormal Basis} \\
The set of vectors $\{\v_{i}\}_{i \in [N]}$ forms an
orthonormal basis for the space $\mathbb{R}^{N}.$  Hence, for any
vector $\w \in \mathbb{R}^N,$ $\|\w\|^2= \sum_{i \in [N] }
 \langle \w, \v_i \rangle^2.$
\item {\bf Triangle Inequality} \\
For any $u,v,w \in V,$ and any $i,j,\ell \in [N],$ $ 1+\langle
\u_i , \v_j \rangle \geq \langle \u_i,\w_\ell \rangle +  \langle
\v_j,\w_\ell \rangle.$

\item {\bf Matching Property} \\
For any $v,w \in V,$ and $i,j,\ell \in [N],$
$\langle \v_i,\w_j \rangle =\langle \v_{i \oplus \ell},\w_{j \oplus \ell} \rangle.$

\item {\bf Closeness Property}  \\
For any $e\{v,w\} \in  E$,
there are  $i_0, j_0 \in [N]$
such that $\langle \v_{i_0}, \w_{j_0} \rangle \geq 1-4\eta.$  Moreover,
if $\pi_e$ is the bijection corresponding to this edge, then
$i_0 \oplus \ell = \pi_e(j_0\oplus \ell)  $ for all $\ell \in [N]$.

\end{enumerate}
\end{theorem}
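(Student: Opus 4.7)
The plan is to verify each of the five properties of Theorem \ref{thm:ulc-sdp} directly from the explicit construction of $\calU_\eta$ described above. With each vertex $\calP_i\in V$, the natural candidate for the vector $\v_S$ indexed by $S\in 2^{[k]}\cong\mathbb{F}_2^k\cong[N]$ is the unit vector $\u_{[\calP_i]\chi_S}$ (the truth-table of $[\calP_i]\chi_S$ scaled by $1/\sqrt{N}$), and the group operation $\oplus$ on $[N]$ is the symmetric difference $\star$ on subsets of $[k]$. With these identifications fixed, items (2), (4), (5) are essentially mechanical, item (1) is a combination of counting and the small-set expansion lemma, and item (3) reduces to a coordinate-wise $\pm 1$ check.

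For item (1), by construction $|V|=m=2^N/N=\tilde{n}$. For soundness, observe that any labeling $\lambda:V\to[N]$ picks exactly one function from each equivalence class, so the set $S'_\lambda\subseteq\{-1,1\}^N$ in the noisy hypercube has relative size $\tilde{n}/2^N=1/N$. Lemma \ref{l:noisy-hyp} applied with parameter $\eta$ (after discarding the negligible mass of atypical edges) gives $1-\Phi(S'_\lambda)\leq N^{-\eta}$, and Equation \eqref{eqn:ug-exp}, which expresses ${\rm val}(\lambda)$ in terms of the expansion of $S'_\lambda$ in the label-extended graph (which here is exactly the noisy hypercube by design), yields ${\rm val}(\lambda)\leq N^{-\eta}\leq\log^{-\eta}\tilde{n}$, since $N\geq\log_2\tilde{n}$.

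Items (2), (4), (5) are then short algebraic verifications. Orthonormality of $\{\u_{[\calP_i]\chi_S}\}_{S\subseteq[k]}$ follows from $\langle\u_{[\calP_i]\chi_S},\u_{[\calP_i]\chi_T}\rangle=\frac{1}{N}\sum_\x[\calP_i](\x)^2\chi_S(\x)\chi_T(\x)=\frac{1}{N}\sum_\x\chi_{S\star T}(\x)=\delta_{S,T}$, which gives (2). For (4), writing $\v_S=\u_{[\calP_i]\chi_S}$, $\w_T=\u_{[\calP_j]\chi_T}$ and $U\leftrightarrow\ell$, the inner product $\langle\v_{S\star U},\w_{T\star U}\rangle$ depends on the coordinates only through $\chi_{(S\star U)\star(T\star U)}=\chi_{S\star T}$, and is therefore invariant under $U$. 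For (5), every edge $e$ of $G$ is by definition witnessed by a pair $(f,g)=([\calP_i]\chi_{S_0},[\calP_j]\chi_{T_0})$ whose Hamming distance lies in $[\tfrac{\eta}{2}N,2\eta N]$, so $\langle\v_{S_0},\w_{T_0}\rangle=1-2\,\mathrm{Ham}(f,g)/N\geq 1-4\eta$; taking $i_0=S_0$, $j_0=T_0$ and using the definition $\pi_e(T_0\star U)=S_0\star U$ for all $U$ immediately gives $\pi_e(j_0\oplus\ell)=i_0\oplus\ell$.

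The one remaining step is the triangle inequality (3), which is slightly more substantive. Every vector $\v_S$ has coordinates in $\{\pm 1/\sqrt{N}\}$, so for three such unit vectors $\u,\v,\w$ I set $a_i=Nu_iv_i$, $b_i=Nu_iw_i$, $c_i=Nv_iw_i$, all in $\{\pm 1\}$. Since $a_ib_ic_i=N^3(u_iv_iw_i)^2=1$, at each coordinate $(a_i,b_i,c_i)$ lies in $\{(1,1,1),(1,-1,-1),(-1,1,-1),(-1,-1,1)\}$, and a direct case check shows $1+a_i-b_i-c_i\geq 0$ in every one of these cases. Averaging over $i$ yields $1+\langle\u,\v\rangle\geq\langle\u,\w\rangle+\langle\v,\w\rangle$, which is precisely (3). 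There is no real obstacle in this proof---everything is a structured verification---but the only point requiring care is keeping the identifications consistent across (2)--(5), in particular making sure that the bijection $\pi_e$ extracted from the specific witness $(f,g)$ in (5) matches the $\star$-shift action used in (4).
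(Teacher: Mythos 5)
Your proof is correct and follows essentially the same route as the paper: the same identification of $[N]$ with $2^{[k]}$, the same SDP vectors $\u_{[\calP_i]\chi_S}$, soundness via the noisy-hypercube small-set-expansion lemma together with the label-extended-graph identity, and the same observation that the $\pm 1/\sqrt N$ coordinate structure gives the triangle inequality (which you spell out a bit more explicitly via the $a_ib_ic_i=1$ case check, a minor elaboration of what the paper states in one line).
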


\section{A PCP Reduction from Unique Games to Balanced Edge-Separator}\label{sec:inner-bs}

This section presents the reduction from {\UG} to non-uniform {\BS} which underlies the proof of Theorem \ref{thm:bs-main}.
Remark \ref{rem:sc-bs} implies that if non-uniform {\BS} is hard to approximate within a factor of $C,$ then so is non-uniform {\SC}
up to a factor $\Omega(C)$.
Hence, Theorem \ref{thm:bs-main} can be strengthened as follows.

\begin{theorem}\label{thm:hardness-sc}
Assuming the {UGC}, it is {NP}-hard to approximate (non-uniform versions of) {\BS} and {\SC} to within any constant factor.
\end{theorem}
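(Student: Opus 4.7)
The plan is to prove the stronger Theorem \ref{thm:bs-main} (super-constant {UG}-hardness for non-uniform {\BS}); the extension to non-uniform {\SC} is then immediate by Remark \ref{rem:sc-bs}, which converts any constant-factor hardness for {\BS} into an $\Omega(1)$-factor hardness for {\SC}. I would give a {PCP} reduction from {\UG} to non-uniform {\BS} parameterized by a small $\epsilon > 0$, matching the informal Theorem \ref{thm:bs-pcp-informal}, and then let $\epsilon$ shrink to produce an arbitrarily large constant gap. The reduction: starting from a {UG} instance $\calU=(G(V,E),[N],\{\pi_e\},{\rm wt})$ supplied by Conjecture \ref{conj:ugc}, for each $u\in V$ introduce a Long-Code block $V'[u]\defeq\{(u,x):x\in\{-1,1\}^N\}$ and set $V'\defeq\bigcup_u V'[u]$. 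For each {UG} edge $e\{u,v\}$ place weighted edges between $V'[u]$ and $V'[v]$ implementing an $\epsilon$-noisy equality test through $\pi_e$: the weight on $\{(u,x),(v,y)\}$ is proportional to ${\rm wt}(e)\cdot\Pr_\mu[x_{\pi_e(i)}=\mu_i y_i\ \forall i]$, where $\mu\in\{-1,1\}^N$ is an $\epsilon$-noise pattern. Demands are placed only between pairs inside the same block, uniformly; this is what distinguishes the non-uniform version and forces the notion of ``piecewise balance'' in the soundness analysis.

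\textbf{Completeness.} Given $\lambda:V\to[N]$ satisfying a $1-\eta$ fraction of $\calU$, set $A(u,x)\defeq x_{\lambda(u)}$. This dictatorship splits every block exactly in half, so every demand pair is cut and the cut is fully balanced. For each satisfied edge $e\{u,v\}$, i.e.\ $\lambda(u)=\pi_e(\lambda(v))$, the probability that $A(u,x)\ne A(v,y)$ is exactly $\epsilon$ (the probability the relevant noise bit flips), while unsatisfied edges contribute at most $1$. Altogether the weighted fraction of cut edges is at most $\eta+\epsilon$.

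\textbf{Soundness.} This is the main obstacle. Suppose $A$ cuts at most $\sqrt{\epsilon}$ fraction of edges and is piecewise balanced on a constant fraction of blocks. Writing $A_u(x)\defeq A(u,x)$ and expanding in Fourier, the cut condition rewrites as
\[
\E_{e\{u,v\}}\ \sum_{S\subseteq[N]}(1-2\epsilon)^{|S|}\,\widehat{A_u}(\pi_e(S))\,\widehat{A_v}(S)\ \ge\ 1-O(\sqrt{\epsilon}).
\]
The geometric decay $(1-2\epsilon)^{|S|}$ together with Cauchy--Schwarz and averaging over edges imply that, for a constant fraction of vertices $u$, essentially all Fourier mass of $A_u$ lies on levels of size at most $k=O(1/\epsilon)$. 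Apply Bourgain's Junta Theorem (Theorem \ref{thm:bourgain-junta}) to each such $A_u$ to extract a set $S_u$ with $|S_u|\le k$ on which $|\widehat{A_u}(S_u)|$ is bounded below by a constant depending only on $k$; the piecewise balance hypothesis guarantees $|\widehat{A_u}(\emptyset)|$ is bounded away from $1$, so we may take $S_u\ne\emptyset$. Define $\lambda(u)$ by picking a uniformly random element of $S_u$ (arbitrarily otherwise); a standard second-moment calculation, combined with the Fourier identity above and the fact that the coefficients must match across $\pi_e$, shows that $\lambda$ satisfies a $1/{\rm poly}(k)$ fraction of edges of $\calU$, contradicting the {NO} case of Conjecture \ref{conj:ugc} once $\zeta$ is small enough.

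\textbf{Finishing.} With $\eta\approx\epsilon$ and $\zeta$ taken sufficiently small as a function of $\epsilon$, the completeness gives cut value $O(\epsilon)$ while soundness forces every piecewise balanced cut to exceed $\sqrt{\epsilon}$, producing a gap of $\Omega(1/\sqrt{\epsilon})$. Since $\epsilon>0$ is an arbitrary constant, this yields {UG}-hardness of non-uniform {\BS} within any constant factor, establishing Theorem \ref{thm:bs-main}. Remark \ref{rem:sc-bs} then transfers the same arbitrary constant-factor hardness to non-uniform {\SC}, completing Theorem \ref{thm:hardness-sc}. The step I expect to be hardest is the soundness analysis: specifically, converting a cheap cut into a {UG} labeling via Bourgain's junta theorem while only assuming piecewise (rather than globally) balanced cuts is precisely what makes the non-uniform demand structure necessary and what drives the choice of parameters.
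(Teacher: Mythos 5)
Your proposal matches the paper's proof essentially step for step: the Long-Code {PCP} reduction from {\UG} to non-uniform {\BS} with an $\epsilon$-noisy test and within-block demands, dictatorship completeness, a Fourier/Bourgain-junta soundness argument under the piecewise-balance hypothesis, and the transfer to {\SC} via Remark~\ref{rem:sc-bs}. One small remark on the soundness: the paper's labeling samples a set $\alpha$ with probability $(\widehat{A}^v_\alpha)^2$ and then a random element of $\alpha$, so the cross-edge consistency falls out of Cauchy--Schwarz directly, whereas fixing a single extracted set $S_u$ per vertex, as you describe, would require a separate coherence argument to match sets across $\pi_e$.
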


\noindent
We present the reduction and the proof of this theorem, modulo the soundness proof of the PCP reduction. The soundness proof is (by now) standard and relegated to Appendix \ref{sec:app-pcp}.
The reduction underlying the proof of this theorem is used in the construction of the integrality gap  for {\BS} presented in Section \ref{sec:INTGAP}.

\subsection*{Overview of the Reduction} The reduction  starts with a {\UG} instance $\calU=\left(G(V, E), [N], \{ \pi_e\}_{e \in E}, {\rm wt}\right)$.  Each vertex $v\in V$ is replaced with a {\em block} of vertices $\{(v,x): x \in \{-1,1\}^N \}.$  The reduction has a parameter $\epsilon$ which is to be thought of as a small constant. For each edge $e\{v,w\}$ in $\calU,$ a {\em bundle} of weighted edges are put between the two corresponding blocks of vertices taking into account the permutation $\pi_e$ corresponding to that edge. The weight of the edge between $(v,x)$ and $(w,y)$ is equal to the product of the weight of the edge $e\{v,w\}$ and the probability that, if we flip each bit of $x$ independently with probability $\epsilon,$ we obtain $ y \circ \pi_e.$
 Here $y \circ \pi_e$ is the reordering of the coordinates of $y$ as dictated by $\pi_e$; formally, $(y \circ \pi_e)_i=y_{\pi_e(i)}$ for all $i \in [N].$

\medskip
Note that if we contract the vertices of the two hypercubes after identifying the coordinates according to $\pi_e,$ we obtain exactly the noisy hypercube introduced in Definition \ref{def:nh}. To complete the reduction, we need to specify the demand pairs. For reasons that will become clear in a bit, any pair of vertices  in the same block is set to have   demand one and the remaining pairs have  demand zero.

\medskip
Our reduction has  the property that if the {\UG} instance has a good labeling then there is a cut that cuts a constant fraction of the demand pairs and the weight of the edges crossing the cut is small. This is by construction: If the {\UG} instance $\calU$ has a good labeling, i.e., a  $\lambda: V \mapsto [N]$ which satisfies  at least a $1-\epsilon$ fraction of the constraints of $\calU,$ then we consider the cut in the reduced graph whose one side consists of the vertices  $(v,x)$ such that $x_{\lambda(v)}=1$ and the other side with vertices $(v,x)$ such that $x_{\lambda(v)}=-1.$ It is easy to see  that the weight of the edges that cross this cut is $1-(1-\epsilon)(1-\epsilon)=O(\epsilon).$ Moreover, the number of demand pairs cut is half that of the total demand pairs as the cut described above cuts each hypercube along a coordinate into two equal parts.  This is the completeness of the reduction.

\medskip
For soundness, we show that  if every labeling of the {\UG} instance satisfies a negligible  (as a function of $\epsilon$) fraction of the constraints,  any cut in the reduced graph that cuts a constant fraction of demand pairs must have about $\sqrt{\epsilon}\gg \epsilon$ weight of edges crossing it. Since the reduction is local in the sense that it replaces each vertex in $\calU$ by a set of vertices, and each edge in $\calU$ by a bundle of edges between the corresponding sets, the weighted graph obtained by applying this reduction on $\calU$ inherits  connectivity properties of $\calU.$ For instance, if $\calU$ is disconnected, then there is a cut in the reduced graph which has no edges crossing it. Such a cut, however, puts each hypercube entirely on one side of the cut or the other, thus, cutting {\em no} demand pair. Hence, the way we have enforced demands essentially ensures that each cut in the reduced graph that cuts a constant fraction of demand pairs cuts most of the hypercubes into two roughly equal parts. Hence, for each vertex $v$ in $\calU$ we can look at the restriction of this cut to the corresponding hypercube and assign to $v$ the label corresponding to the dimension of the hypercube which is the most {\em correlated} with the cut restricted to that hypercube. Since $\calU$ does not have a good labeling, this strategy of converting a cut in the reduced graph to a labeling for $\calU$ should not be good. Hence, one can deduce that, for any cut that cuts a constant fraction of the demand in the reduced graph, its restrictions to most hypercubes must not be well-correlated to any coordinate cut.  This is where Bourgain's Junta theorem (Theorem \ref{thm:bourgain-junta}) comes in. It essentially implies that such a cut must be close to a majority cut in most hypercubes.
This  allows us to deduce that such a cut has at least  $\sqrt{\epsilon}$ weight edges crossing it, giving us the hardness of approximation ratio $\approx \nfrac{\sqrt{\epsilon}}{\epsilon}$ which can be made larger than any constant by choosing $\epsilon$ small enough.

\medskip
We now describe the reduction formally. Here, it is instructive to break the reduction into two parts: The first consists of presenting a PCP verifier for {\UG} and the second step involves  translating the PCP verifier into a {\BS} instance. The completeness and the soundness of this verifier give us the proof of Theorem \ref{thm:hardness-sc}.

\subsection{The PCP Verifier}

For $\epsilon \in(0,1),$  we present a  PCP verifier  which
  given a {\UG} instance $\ {\calU}=(G(V,E),[N],
\{\pi_{e}\}_{e\in E})$ decides whether ${\rm opt}(\calU) \sim 1$ or ${\rm opt}(\calU) \sim 0.$
The verifier $V_\epsilon$
expects, as a proof, the Long Code (see Definition \ref{def:longcode}) of the label of every vertex $v \in V.$ Formally, a proof  $\Pi$ is $\{A^v\}_{v \in V},$ where each $A^v: \{-1,1\}^N \mapsto \{-1,1\}$ is the supposed Long Code of the label of $v.$
The actions of $V_\epsilon$ on $\Pi$ are as follows.
\begin{enumerate}
\item Pick $e\{v,w\} \in E$ with probability ${\rm wt}(e)$.
\item  Pick a random $\x \in _{\nfrac{1}{2}} \{-1,1\}^N$
and $\bmu \in_{\epsilon}  \{-1,1\}^N$.

\item Let $\pi_{e} : [N] \mapsto [N]$ be the bijection
corresponding to $e\{v,w\}.$ {\bf Accept} if and only if
$$ A^v(\x) = A^w((\x\bmu)\circ\pi_e).$$
\end{enumerate}

\noindent
The completeness of verifier is easy and we provide a proof here.

\begin{lemma}[Completeness] \label{thm:bs-c}
For every $\epsilon \in (0,1),$ if {\rm opt}$(\calU) \geq 1-\eta,$
 there is a proof $\; \Pi$ such that $$\Pr\left[
V_\epsilon\;\rm{accepts} \; \Pi \right] \geq
(1-\eta)(1-\epsilon).$$   Moreover, every table $A^v$ in $\Pi$
is balanced, i.e., exactly half of its entries are $+1$ and the
rest  are $-1$.
\end{lemma}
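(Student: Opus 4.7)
The plan is to construct the proof $\Pi$ in the most natural way, namely from an optimal labeling, and then carry out the acceptance calculation directly. Let $\lambda : V \mapsto [N]$ be a labeling of $\calU$ that satisfies a fraction at least $1-\eta$ of edges (which exists by hypothesis). For each $v \in V$, define $A^v$ to be the Long Code of $\lambda(v)$, i.e., $A^v(\x) \defeq x_{\lambda(v)}$ for every $\x = (x_1,\ldots,x_N) \in \{-1,1\}^N$. Since the single coordinate $x_{\lambda(v)}$ is uniform on $\{-1,1\}$ when $\x$ is uniform on $\{-1,1\}^N$, every table $A^v$ is automatically balanced, giving the ``moreover'' clause for free.

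Next I would compute the acceptance probability. Fix an edge $e\{v,w\} \in E$ that is satisfied by $\lambda$, so $\lambda(v) = \pi_e(\lambda(w))$. For $\x, \bmu \in \{-1,1\}^N$, using the definition $(y \circ \pi_e)_i = y_{\pi_e(i)}$, I get
\[
A^w\bigl((\x\bmu) \circ \pi_e\bigr) \;=\; \bigl((\x\bmu)\circ\pi_e\bigr)_{\lambda(w)} \;=\; (\x\bmu)_{\pi_e(\lambda(w))} \;=\; (\x\bmu)_{\lambda(v)} \;=\; x_{\lambda(v)}\,\mu_{\lambda(v)}.
\]
On the other hand $A^v(\x) = x_{\lambda(v)}$, so the verifier's acceptance condition $A^v(\x) = A^w((\x\bmu)\circ\pi_e)$ reduces to $\mu_{\lambda(v)} = 1$. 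Since $\bmu \in_\epsilon \{-1,1\}^N$ sets each bit to $-1$ independently with probability $\epsilon$, this coordinate equals $1$ with probability exactly $1-\epsilon$, independently of the choice of $\x$ and of the edge.

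Putting things together, conditioning on the random edge $e\{v,w\}$ drawn according to ${\rm wt}$, the verifier accepts with probability at least $1-\epsilon$ whenever $e$ is satisfied by $\lambda$. Since the total ${\rm wt}$-measure of satisfied edges is at least $1-\eta$, a union-type bound (really just the law of total probability together with nonnegativity of the conditional acceptance probability on unsatisfied edges) gives
\[
\Pr\bigl[V_\epsilon \text{ accepts } \Pi\bigr] \;\geq\; (1-\eta)(1-\epsilon),
\]
as required. There is no real obstacle here: the whole point of the Long Code encoding is that on a satisfied edge the verifier's test becomes a noise test on a single coordinate, which is precisely the $(1-\epsilon)$ factor. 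The calculation above is essentially a sanity check that the permutation $\pi_e$ is composed into the test in the correct direction, which is why the verifier uses $(\x\bmu)\circ\pi_e$ rather than $\x\bmu$.
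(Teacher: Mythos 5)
Your proposal is correct and follows essentially the same route as the paper: take the Long Codes of an optimal labeling, note they are automatically balanced, observe that on a satisfied edge the test reduces to the single-coordinate noise event $\mu_{\lambda(v)}=1$, and combine via the law of total probability to get the $(1-\eta)(1-\epsilon)$ bound. The only difference is that you spell out the index-chasing through $\pi_e$ in a bit more detail than the paper does.
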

\begin{proof}
Since {\rm opt}$(\calU) \geq 1-\eta,$ there is a labeling
$\lambda$ for which the total weight of the edges satisfied  is
at least $1-\eta.$ Hence, if we pick an edge $e\{v,w\}$ with
probability ${\rm wt}(e),$ with probability at least $1-\eta,$ we
have  $\lambda(v)=\pi_e(\lambda(w)).$ Let the proof consist of
Long Codes of the labels assigned by $\lambda$ to the vertices.
With probability $1-\epsilon,$ we have ${\mu}_{\lambda(v)}=1.$
Hence, with probability at least $(1-\eta)(1-\epsilon),$
$$A^v(\x)=\x_{\lambda(v)}=(\x {\mu})_{\pi_e(\lambda(w))}=A^w((\x{\mu}) \circ \pi_e).$$
Noting that a Long Code is balanced, this completes the proof. \end{proof}

\noindent
The soundness of the reduction involves more work and, since \cite{KhotUCSP, KhotV05},
has become standard. We state the result here and the proof appears in Appendix \ref{sec:app-pcp}.
We say that a proof $\Pi=\{A^v\}_{v\in V}$ is
$\theta$-piecewise balanced if
$$ \E _{v}\left[ |\widehat{A}_\emptyset ^v|\right] \leq \theta.$$
Here, $\widehat{A}_\emptyset ^v$ is the Fourier coefficient corresponding to the empty set of the Boolean function $A^v$ and the expectation is over
a uniformly random vertex $v \in V$.

\begin{lemma}[Soundness]\label{lem:bs-s} For every $t \in
(\nfrac{1}{2}, 1)$, there exists a constant $b_t > 0$ such that the
following holds: Let $\epsilon > 0$ be sufficiently small and let
$\: \calU$  be an instance of {\UG} with ${\rm opt}(\calU) <
2^{-O(\nfrac{1}{\epsilon^2})}.$ Then, for every $\nfrac{5}{6}$-piecewise balanced
proof $\Pi,$  $$\; \Pr\left[ V_\epsilon\;\rm{accepts} \;
\Pi \right] < 1-b_t\epsilon^t.$$
\end{lemma}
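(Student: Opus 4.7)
The plan is to prove the contrapositive: assume some $\tfrac{5}{6}$-piecewise balanced proof $\Pi=\{A^v\}_{v\in V}$ is accepted by $V_\epsilon$ with probability at least $1-b_t\epsilon^t$ for a suitably small constant $b_t>0$ (to be chosen), and use $\Pi$ to build a randomized labeling $\lambda:V\mapsto [N]$ of expected value exceeding $2^{-O(1/\epsilon^2)}$, contradicting ${\rm opt}(\calU)<2^{-O(1/\epsilon^2)}$.

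The first step is a Fourier expansion of the test. Since $x$ is uniform on $\{-1,1\}^N$ and $\mu$ is $\epsilon$-biased, the Bonami--Beckner operator together with the identity $\chi_T(y\circ\pi_e)=\chi_{\pi_e(T)}(y)$ yield
\[
2\Pr[V_\epsilon\text{ accepts }\Pi]-1 \;=\; \E_{e\{v,w\}}\sum_{S\subseteq[N]}(1-2\epsilon)^{|S|}\,\widehat{A}^v_S\,\widehat{A}^w_{\pi_e^{-1}(S)}.
\]
Writing $Q^v=\sum_S(1-2\epsilon)^{|S|}(\widehat{A}^v_S)^2\leq 1$, per-edge Cauchy--Schwarz combined with $\sqrt{ab}\leq(a+b)/2$ and the regularity of $G$ gives $\E_v[1-Q^v]\leq 2b_t\epsilon^t$. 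Choosing $k=\lceil 1/\epsilon\rceil$, the estimate $1-(1-2\epsilon)^k\geq 1-e^{-2}$ rearranges this into $\E_v\sum_{|S|>k}(\widehat{A}^v_S)^2=O(b_t\epsilon^t)$. Picking $b_t$ small enough, Markov's inequality ensures that on a majority of the vertices---the \emph{concentrated} ones---the hypothesis $\sum_{|S|>k}(\widehat{A}^v_S)^2<c_t k^{-t}$ of Bourgain's Junta Theorem (Theorem~\ref{thm:bourgain-junta}) is met.

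Apply Bourgain's theorem with a small constant $\gamma$ at every concentrated $v$: all but $\gamma^2$ of $A^v$'s Fourier mass lies on the heavy family $\mathcal{H}^v=\{S:|S|\leq k,\ |\widehat{A}^v_S|>\gamma 4^{-k^2}\}$, and $|\mathcal{H}^v|\leq\gamma^{-2}4^{2k^2}=2^{O(k^2)}$ by Parseval. The piecewise-balance assumption $\E_v|\widehat{A}^v_\emptyset|\leq 5/6$ then (via another Markov step) furnishes a constant fraction of concentrated vertices on which $|\widehat{A}^v_\emptyset|$ is bounded away from $1$, so the Fourier mass on non-empty sets in $\mathcal{H}^v$ is $\Omega(1)$; call these the \emph{good} vertices. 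Decode as follows: for each good $v$, sample a non-empty $S\in\mathcal{H}^v$ with probability proportional to $(\widehat{A}^v_S)^2$ and set $\lambda(v)$ to a uniformly random element of $S$; for every other $v$ assign an arbitrary label.

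It remains to show $\E[{\rm val}(\lambda)]\geq 2^{-O(1/\epsilon^2)}$. By the same averaging as above, a constant fraction of edges $e\{v,w\}$ are \emph{aligned and good}: both endpoints are good and the restricted sum $\sum_{|S|\leq k}(1-2\epsilon)^{|S|}\widehat{A}^v_S\widehat{A}^w_{\pi_e^{-1}(S)}$ is at least an absolute constant $c$. On such an edge, dropping the non-heavy $S$ costs only $O(\gamma)$ (by a Cauchy--Schwarz tail bound), so by pigeonhole over the at most $2^{O(k^2)}$ heavy sets, some aligned pair $(S,\pi_e^{-1}(S))$ with both coefficients heavy contributes magnitude $\geq 2^{-O(k^2)}$, forcing $(\widehat{A}^v_S)^2(\widehat{A}^w_{\pi_e^{-1}(S)})^2\geq 2^{-O(k^2)}$. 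The decoding then aligns labels through $\pi_e$ with probability at least $2^{-O(k^2)}/|S|^2=2^{-O(1/\epsilon^2)}$, and averaging over aligned-and-good edges yields the needed lower bound on ${\rm val}(\lambda)$. The main obstacle I anticipate is this final step: the Fourier correlation is a \emph{signed} sum, so extracting a squared-mass aligned pair requires ruling out cancellations among heavy $S$, which is exactly where the bounded support size provided by Bourgain's theorem is indispensable.
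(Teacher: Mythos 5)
Your proposal is correct and follows essentially the same route as the paper's proof: Fourier-expand the acceptance probability, use a convexity/Cauchy--Schwarz step plus Markov to show most vertices satisfy the hypothesis of Bourgain's Junta Theorem, handle the empty set via the piecewise-balance assumption, and decode by sampling a heavy nonempty set proportionally to its squared Fourier weight and then a random coordinate of that set. The only substantive difference is the last step: you extract a single heavy aligned pair by pigeonholing over the $2^{O(k^2)}$ heavy sets, whereas the paper instead lower-bounds the expected decoding success directly via a second application of Cauchy--Schwarz and Jensen's inequality to $\E_{e}\bigl[\sum_{\alpha}(\widehat{A}^v_\alpha)^2(\widehat{A}^w_{\pi_e^{-1}(\alpha)})^2/|\alpha|\bigr]$; both give ${\rm opt}(\calU)=2^{-O(1/\epsilon^2)}$, and your worry about cancellation is exactly what the bounded count of heavy sets (or, in the paper's version, the Cauchy--Schwarz/Jensen chain) is there to resolve.
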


\subsection{From the PCP Verifier to a Balanced Edge-Separator Instance}\label{sec:reduction}
The reduction from the {PCP} verifier  to an instance ${\cal I}_\epsilon$ of
 non-uniform {\BS}  is as follows. Replace the bits in the proof by vertices and replace every ($2$-query) {PCP} test by an edge
of the graph.
 The weight of the edge is equal to the probability that
the test is performed by the {PCP} verifier.
Formally, we start with a {\UG} instance $\calU=\left(G(V, E), [N], \{ \pi_e\}_{e \in E}, {\rm wt}  \right),$
and replace each vertex  $v \in V$  by a {\em block} of vertices $(v,\x)$ for each $\x \in \{-1,1\}^N.$
For an edge $e\{v,w\}\in E,$  there is an edge in  ${\cal I}_\epsilon$ between $(v,\x)$ and $(w,\y),$ with  weight
$$ {\rm wt}(e)\cdot \Pr_{\stackrel{\x' \in_{\nfrac{1}{2}} \{-1,1\}^N} {\bmu \in _\epsilon \{-1,1\}^N}}\left[\left(\x=\x'\right) \wedge \left( \y=\x'{\bmu} \circ \pi_e \right) \right].$$
This is exactly the probability that $V_\epsilon$
picks the edge $e\{v,w\},$ and decides to look at the $\x$-th (resp. $\y$-th) coordinate in the Long Code of the label of $v$ (resp. $w$).

\medskip
 The demand function {\rm dem}$(\cdot)$ is $1$ for any edge between vertices in the same block, and $0$ otherwise. Let $B \defeq \frac{1}{2} \cdot |V| \cdot  \binom{2^N}{2}$ be half of the total demand.

\medskip
 Assuming the {UGC}, for any $\eta, \zeta > 0$, for a sufficiently large $N$,  it is
{NP}-hard to determine whether an instance $\calU$ of {\UG}
has ${\rm opt}(\calU) \geq 1-\eta$ or ${\rm opt}(\calU) \leq
\zeta$. We choose $\eta = \epsilon$ and $\zeta \leq
2^{-O(\nfrac{1}{\epsilon^2})}$ so that 
\begin{enumerate}
\item[(a)] when {\rm opt}$(\calU) \geq
1-\eta,$ there is a (piecewise balanced)  proof that the verifier
accepts with probability at least $1-2\epsilon$ and 
\item [(b)] when {\rm opt}$(\calU) \leq \zeta$,  the verifier does not accept any
$\nfrac{5}{6}$-piecewise balanced proof with probability more than
$1-b_t\epsilon^t.$
\end{enumerate}
 Note that $b_t$ is defined as in the statement of Lemma  \ref{lem:bs-s}.

\medskip
Suppose that {\rm opt}($\calU$) $\geq 1-\eta.$ Let $\lambda$ be a
labeling that achieves the optimum. Consider the partition
$(S,\overline{S})$ in   ${\cal I}_\epsilon$ such that $S$
consists of all vertices $(v,\x)$ with the property that the Long
Code of $\lambda(v)$ evaluated at $\x$ is $+1.$ Clearly, the demands
cut by this partition is exactly equal to $B$. Moreover, it
follows from Lemma \ref{thm:bs-c} that this partition cuts edges
with weight  at most $\eta+\epsilon=2\epsilon$.

\medskip
 Now, suppose that {\rm opt}($\calU$) $\leq \zeta.$ Then,
it follows from Lemma \ref{lem:bs-s}, that any $B'$-balanced
partition, with $B' \geq \nfrac{B}{3},$ cuts at least
$b_t\epsilon^t$ fraction of the edges.  This is due to the
following: Any partition $(S,\overline{S})$ in ${\cal
I}_\epsilon$ corresponds to a proof $\Pi$ in which we let
the (supposed) Long Code of the label of $v$ to be $+1$ at the
point $\x$ if $(v,\x) \in S,$ and  $-1$ otherwise. Since $B' \geq \nfrac{B}{3},$ as in the proof of Theorem
\ref{thm:graph-construction}, $\Pi$ is $\nfrac{5}{6}$-piecewise balanced
and we apply Lemma \ref{lem:bs-s}.

\medskip
Thus, we get a hardness factor of $\Omega\left(\nfrac{1}{\epsilon}^{1-t}\right)$ for {\BS} and, hence, by Remark \ref{rem:sc-bs}, for
{\SC} as well. This completes the proof of Theorem \ref{thm:hardness-sc}.

\section{The Integrality Gap Instance for Balanced Edge-Separator}\label{sec:INTGAP}
In this section, we describe the integrality gap instance for
{\BS} along with its {SDP} solution and prove Theorem \ref{thm:graph-construction}. As pointed out in Section \ref{sec:sc-bs}, this
 also implies an integrality gap for non-uniform {\SC}. The following is, thus, a strengthening of Theorem \ref{thm:bs-main}.
\begin{theorem}\label{thm:bs-restate} Non-uniform versions of {\SC} and {\BS} have an integrality gap of
at least $(\log\log n)^{\nfrac{1}{6}-\delta},$ where $\delta > 0$ is
arbitrary. The integrality gaps hold for standard {SDP}s with
triangle inequality constraints.
\end{theorem}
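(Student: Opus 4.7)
The plan is to exhibit an explicit integrality gap instance for non-uniform {\BS}; the corresponding bound for non-uniform {\SC} then follows by Remark~\ref{rem:sc-bs}. The construction combines the {\UG} integrality gap of Theorem~\ref{thm:ulc-sdp} with the PCP reduction of Section~\ref{sec:inner-bs}; once its five properties, listed in Theorem~\ref{thm:graph-construction}, are verified, the computation given in Section~\ref{sec:GLC} immediately after Theorem~\ref{thm:graph-construction} converts them into the claimed gap.

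I would first fix $\nfrac{1}{2} < t < 1$ (eventually pushing $t$ toward $\nfrac{1}{2}$) and a small $\epsilon > 0$, set $\eta \asymp \epsilon$, and take $N$ large enough that $N^\eta \geq 2^{\Omega(1/\epsilon^2)}$; the choice $\epsilon \asymp (\log N)^{-\nfrac{1}{3}}$ suffices. I would then feed the {\UG} instance $\calU_\eta = (G(V,E),[N],\{\pi_e\},{\rm wt})$ produced by Theorem~\ref{thm:ulc-sdp}, equipped with its orthonormal bases $\{\v_1,\ldots,\v_N\}$ for each vertex $v \in V$, into the PCP reduction of Section~\ref{sec:inner-bs} with noise parameter $\epsilon$. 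The output is a {\BS} instance $G'(V',E',{\rm wt}')$ whose vertex set $V' = \bigcup_{u \in V} V'[u]$ is naturally partitioned into blocks $V'[u] = \{(u,x) : x \in \{-1,1\}^N\}$, with demand $1$ on intra-block pairs and $0$ elsewhere. These blocks play the role of the partition $V = V_1 \cup \cdots \cup V_r$ in Theorem~\ref{thm:graph-construction}. To each vertex $(u,x) \in V'$ I would attach the unit vector $\V_{u,x}^{\otimes t^\star}$, where $t^\star$ is a fixed large odd integer and $\V_{u,x} \defeq \nfrac{1}{\sqrt{N}}\sum_{i=1}^N x_i\,\u_i^{\otimes 8}$.

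The next step is to verify the five conditions of Theorem~\ref{thm:graph-construction}. The size bound~(1) is immediate from $|V'| \leq (\nfrac{2^N}{N})\cdot 2^N$ and $N \leq 2^{O(1/\epsilon^3)}$. Condition~(4), well-separation within each block, follows from the identity $\langle \V_{u,x},\V_{u,y}\rangle = \nfrac{1}{N}\sum_i x_i y_i$ (by orthonormality of the basis) together with $\langle \V_{u,x}^{\otimes t^\star},\V_{u,y}^{\otimes t^\star}\rangle = \langle \V_{u,x},\V_{u,y}\rangle^{t^\star}$; since $t^\star$ is odd, the expectation over uniform $x,y$ vanishes, producing the required value $1$. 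Condition~(5), the low SDP objective, reduces to the edge-wise bound $\langle\V_{u,x}^{\otimes t^\star},\V_{v,y}^{\otimes t^\star}\rangle \geq 1 - O(t^\star(\eta+\epsilon))$, which combines the Closeness Property of Theorem~\ref{thm:ulc-sdp}(5) with the fact that the reduction places an edge between $(u,x)$ and $(v,y)$ only when the appropriately permuted strings agree on all but $\Theta(\epsilon N)$ coordinates. Condition~(2), the piecewise-balanced cut lower bound, is where the PCP soundness enters: any $\nfrac{5}{6}$-piecewise balanced cut in $G'$ gives a $\nfrac{5}{6}$-piecewise balanced proof for the verifier of Section~\ref{sec:inner-bs}, so Lemma~\ref{lem:bs-s} applied with ${\rm opt}(\calU_\eta) \leq N^{-\eta} \leq 2^{-O(1/\epsilon^2)}$ forces any such cut to cut at least $\epsilon^t$ fraction of the edge weight.

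The main obstacle is Condition~(3), the triangle inequality for the vectors $\V_{u,x}^{\otimes t^\star}$, which is the technical heart of the paper. The starting point is the identity $\langle \V_{u,x},\V_{v,y}\rangle = \nfrac{1}{N}\sum_{i,j} x_i y_j \langle \u_i,\v_j\rangle^8$, after which the plan is to combine the triangle inequality among the basis vectors provided by Theorem~\ref{thm:ulc-sdp}(3) with the sign cancellation supplied by the hypercube coordinates $x,y$ and the smoothing effect of the large tensor power $t^\star$. The detailed, essentially brute-force, verification is carried out in Appendix~\ref{sec:sdp-triangle}. Once all five conditions are in place, the integral optimum over $\nfrac{B}{3}$-balanced cuts is at least $\epsilon^t$ while the SDP objective is at most $\epsilon$, giving a {\BS} integrality gap of $\Omega(\epsilon^{t-1})$. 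With $\epsilon \asymp (\log N)^{-\nfrac{1}{3}}$ and $n \leq 2^{2^{O(1/\epsilon^3)}}$, one has $\log\log n \asymp \nfrac{1}{\epsilon^3}$, so the gap is $\Omega\bigl((\log\log n)^{(1-t)/3}\bigr)$; letting $t \searrow \nfrac{1}{2}$ yields the desired $(\log\log n)^{\nfrac{1}{6}-\delta}$. Applying Remark~\ref{rem:sc-bs} extends the lower bound to {\SC}.
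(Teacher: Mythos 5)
Your proposal is correct and takes essentially the same route as the paper: you feed the {\UG} integrality gap instance $\calU_\eta$ of Theorem~\ref{thm:ulc-sdp} into the PCP reduction of Section~\ref{sec:inner-bs}, attach the tensored vectors $\V_{u,x}^{\otimes t^\star}$, verify the five conditions of Theorem~\ref{thm:graph-construction} (size bound, piecewise-balanced cut lower bound via Lemma~\ref{lem:bs-s}, triangle inequality via Appendix~\ref{sec:sdp-triangle}, well-separation via the odd-$t^\star$ cancellation, and low SDP objective via the Closeness Property), and then run the calculation of Section~\ref{sec:GLC} with $\eta \asymp \epsilon \asymp (\log N)^{-\nfrac{1}{3}}$ to obtain the $(\log\log n)^{\nfrac{1}{6}-\delta}$ gap, extending to {\SC} via Remark~\ref{rem:sc-bs}. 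This matches the paper's argument step for step, including the parameter choices and the reliance on the technical triangle-inequality lemma.
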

\noindent
We present a proof of this theorem (by proving Theorem \ref{thm:graph-construction}). The fact that our SDP solution satisfies the triangle inequality constraints  relies on a technical lemma whose proof is via an extensive case analysis and is not very illuminating, hence, relegated to Appendix \ref{sec:sdp-triangle}.

\subsection*{Overview of the Integrality Gap Instance}
The integrality gap instance for non-uniform {\BS} has two parts: A (weighted) graph $(V^*,E^*)$ on $n$ vertices along with demand pairs and a unit vector $\V_u$ for each vertex $u \in V^*.$ The integrality gap instance is parameterized by $\epsilon>0$
and  $\mathcal{I}_\epsilon$  denotes the instance. We show that 
\begin{enumerate}
\item every cut in $V^*$ that cuts a constant fraction of the demand pairs must have at least $\sqrt{\epsilon}$ fraction of edges crossing it and that 
\item  the set of vectors $\{\V_u\}_{u \in V^*}$ satisfy the  constraints in the SDP in Figure \ref{fig:sdpbs} and have an objective value $O(\epsilon),$ thus, giving us an integrality gap of $\Omega (\sqrt{\epsilon}).$ 
\end{enumerate}
The smallest value $\epsilon$ can take turns out to be $(\log \log n)^{\nfrac{-1}{3}}$, giving us the lower bound $\Omega ((\log \log n)^{\nfrac{-1}{6}}).$

\medskip
The graph in $\mathcal{I}_\epsilon$ is obtained by applying the reduction from {\UG} to {\BS} presented in Section \ref{sec:inner-bs} to the {\UG} integrality gap instance $\calU_\eta$ from Section \ref{sec:UGC}, see  Theorem \ref{thm:ulc-sdp} for a summary.
Recall that $\calU_\eta$ consists of the constraint graph $(G(V,E),[N], \{\pi_e\}_{e \in E}, {\rm wt})$ and a set of vectors $\{v_i ^{\otimes 2}\}_{i \in [N]}$ for each vertex $v \in V.$ Further,  $\tilde{n}=|V|=\nfrac{2^N}{N}$ and ${\rm opt}(\calU_\eta) \leq \log^{-\eta} \tilde{n}.$ 

\medskip
The reduction implies that $n=   |V^*| = 2^N \cdot  |V| \leq O(\tilde{n}^2 \log \tilde{n})$ and, hence, $\log^{-\eta} \tilde{n} \approx \log^{-\eta} n$ up to a constant.    
Thus, if $\log^{-\eta} n \leq 2^{-O(\nfrac{1}{\epsilon^2})},$ then it follows from Lemma \ref{lem:bs-s} and the discussion in Section \ref{sec:reduction}
that every cut in $\mathcal{I}_\epsilon$ that cuts at least a constant fraction of demand pairs cuts at least $\sqrt{\epsilon}$ fraction of edges. This proves the first claim. A constraint on $\eta,$ as we see shortly, is that $\eta \leq \epsilon.$ Thus, choosing $\eta=\epsilon$ implies that in order to ensure
$\log^{-\epsilon} {n} \leq 2^{-O(\nfrac{1}{\epsilon^2})},$ it is sufficient to set $\epsilon$ to be $(\log \log  {n})^{\nfrac{-1}{3}}.$

\medskip
Thus, to complete the proof of Theorem \ref{thm:bs-restate}, it remains to construct vectors $\V_u$ for each vertex $u \in V^*$ that satisfy the required constraints and have a small objective value. This is the focus of this section. Here again the starting point is the SDP solution to the {\UG} integrality gap $\calU_\eta.$ Recall that the vectors $\{v_i \}_{i \in [N]}$ form an orthonormal basis of $\mathbb{R}^N$ for each $v \in V$ and, in addition satisfy Triangle Inequality, the Matching Property and the Closeness Property in Theorem \ref{thm:ulc-sdp}. In addition, the SDP objective value of these vectors for $\calU_\eta$ is $1-9\eta.$

\medskip
For each vertex $v \in V$ there is a block of vertices $\{(v,x): x \in \{-1,1\}^N\}$ in $V^*.$ Thus, we need a unit vector for each $(v,x).$
A choice for such a vector is
\begin{equation}\label{eq:vec1}
\V_{(v,x)} \defeq \frac{1}{\sqrt{N}} \sum_{i \in [N]} x_i v_i^{\otimes 2}.
\end{equation}
The fact that this is a unit vector is easy to see. Recall that for a typical edge in $\calU_\eta,$ the basis vectors are $\eta$-close when matched according to the permutation corresponding to that edge. Further, recall that for an edge between $(v,x)$ and $(w,y),$ there must be an edge between $v$ and $w$ in $\calU_\eta.$ Moreover, for a typical edge in $\mathcal{I}_\epsilon,$ except with probability $\epsilon,$ the relative Hamming distance between $x$ and $y$ is at most $2\epsilon$ (after taking into account the permutation between $v$ and $w$ in $\calU_\eta$). This easily implies that for a typical edge in $\mathcal{I}_\epsilon,$
$$ \left\langle \V_{(v,x)}, \V_{(w,y)} \right \rangle \geq 1- O(\eta+\epsilon).$$
Since the vectors are of unit length, this implies that
$$ \left\| \V_{(v,x)}- \V_{(w,y)} \right\|^2 \leq O(\eta+\epsilon).$$
This is what dictates the choice of $\eta=\epsilon$ and we obtain that our SDP solution to $\mathcal{I}_\epsilon$ has an objective value at most $O(\epsilon).$ To see the well-separatedness of this SDP solution, observe that for each $v \in V$,  $\V_{(v,x)}$ and $\V_{(v,-x)}$ are unit vectors in opposite direction.

\medskip
It remain to prove that the vectors $\left\{\V_{(v,x)}\right\}$ satisfy the triangle inequality. This is the technically hardest part of the paper and is shown via an extensive case analysis that repeatedly uses the fact that the vectors for $\calU_\eta$ satisfy the properties they do.
In fact, we do not know whether  the vectors described above work for this proof. We need to modify the vectors in \eqref{eq:vec1} as follows
$$ \left( \frac{1}{\sqrt{N}}\sum_{i \in [N]} x_i v_i^{\otimes 8} \right)^{\otimes (2^{240}+1)}.$$
While the inner tensor, which goes to $8$ from $2$, is a minor modification, it ensures that when we take inner products
of the form
$$ \left \langle \frac{1}{\sqrt{N}} \sum_{i \in [N]} v_{i}^{\otimes 8},  \frac{1}{\sqrt{N}} \sum_{i' \in [N]} w_{i'}^{\otimes 8} \right\rangle,
$$
and if $\langle v_i,w_i \rangle \approx 1-\eta$ for all $i \in [N],$ then the contribution of the cross terms is negligible and the inner product remains around $1-\eta.$ This $8$-th tensor also implies the converse: If 
$$ \left \langle  \frac{1}{\sqrt{N}} \sum_{i \in [N]}  v_{i}^{\otimes 8},  \frac{1}{\sqrt{N}} \sum_{i' \in [N]} w_{i'}^{\otimes 8} \right\rangle \geq 1-\eta,$$ then there is a permutation $\pi:[N] \mapsto [N]$ such that for all $i \in [N],$
 $$|\langle v_{\pi(i)},w_{i} \rangle| \geq 1-2\eta.$$ This latter property and
the outer tensor are crucial in the proof of the triangle inequality.\footnote{This property has also been key in the results of Arora {et al.} \cite{AroraKKSTV08}.}
This new SDP solution is also easily seen to satisfy the properties satisfied by the previous SDP solution up to a loss of an additional constant factor.

\medskip
We conclude this overview by giving the reader some idea of why we have the outer tensor.  Start by noting that proving the triangle inequality is the same as showing
 $$ 1 + \langle \V_{u,x},\V_{v,y}\rangle^t \geq \langle \V_{u,x},\V_{w,z}\rangle^t +
        \langle \V_{v,y},\V_{w,z}\rangle^t$$
since all the vectors have unit length.
If none of  the dot-products has magnitude
at least $\nfrac{1}{3}$  the inequality  holds trivially. Thus, we may assume that one of the inner products, say,  $|\langle \V_{v,y},\V_{w,z}\rangle|^t \geq \nfrac{1}{3}$. This implies
that
 $|\langle  \V_{v,y},\V_{w,z} \rangle| = 1-O(\nfrac{1}{t})$.
By the converse property mentioned earlier, it can be deduced  that, for some $i_0,j_0 \in [N],$  $|\langle \v_{i_0},\w_{j_0}\rangle| =  1-O(\nfrac{1}{t})$ which can be made very close to $1$ by picking $t$ large enough. This turns out to be convenient towards proving the triangle inequality via a case analysis, see Lemma \ref{lemma:allclose1}. 

\medskip
Unfortunately,  we cannot provide much more intuition than this and,
as mentioned in the introduction, for a more intuitive proof of the triangle inequality one can refer to the papers \cite{KhotS09,RaghavendraS09}.
We now present the graph construction and the SDP solution formally and prove the claims above for the SDP solution.

\subsection{The Graph}\label{sec:INTGAP-graphs}
We recall the following
notations which are needed. For a permutation
$\pi:[N]\mapsto [N]$ and a vector $\x\in \{-1,1\}^N,$ the vector
$\x \circ \pi$ is defined to be the vector with its $j$-th entry
as $(\x \circ \pi)_j \defeq \x_{\pi(j)}.$ For $\epsilon>0,$ the notation
$\x \in _\epsilon \{-1,1\}^N$ means that the vector $\x$ is a
random $\{-1,1\}^N$ vector, with each of its bits independently set
to $-1$ with probability $\epsilon,$ and set to $1$ with probability
$1-\epsilon.$

\medskip
The {\BS} instance has a parameter $\epsilon>0$ and we refer to it as
${\cal I}_\epsilon(V^*,E^*).$ We start with the {\UG}
instance $\calU_\eta=\left(G(V, E), [N], \{ \pi_e\}_{e \in E}, {\rm
wt} \right)$ of Theorem \ref{thm:ulc-sdp}. In ${\cal
I}_\epsilon,$ each vertex $v \in V$ is replaced by a
{block} of vertices denoted by $V^*[v]$. This block consists
of vertices $(v,\x)$ for each $\x \in \{-1,1\}^N.$ Thus, the set of
vertices for the {\BS} instance is
$$ V^* \defeq \{(v,\x) \ | \ v \in V, \ \x \in \{-1,1\}^N \}
\quad \quad {\rm and} \quad \quad V^* = \cup_{v \in V} V^*[v].$$
The edges in the {\BS} instance are defined as follows: For
$e\{v,w\}\in E,$ there is an edge $e^*$ in ${\cal I}_\epsilon$ between $(v,\x)$ and $(w,\y),$ with weight
$$ {\rm wt}_{\rm BS}(e^*) \defeq
{\rm wt}(e) \cdot \Pr_{\stackrel{\x' \in_{\nfrac{1}{2}} \{-1,1\}^N} {\bmu
\in _\epsilon \{-1,1\}^N}} \left[\left(\x=\x'\right) \wedge \left(
\y=\x'{\bmu} \circ \pi_e\right) \right].$$ Notice that the size of
${\cal I}_\epsilon$ is $|V^*| = |V|\cdot 2^N =  O({\tilde{n}^2}\log \tilde{n}).$
The following theorem establishes that every cut in ${\cal
I}_\epsilon$ that cuts a constant fraction of the demand cuts a large fraction of the edges. It is a restatement of Lemma \ref{lem:bs-s}.
See Section \ref{sec:inner-bs} for details.

\begin{theorem}[No Small Balanced Cut]\label{thm:bs-instance-graph} For every $t \in
(\nfrac{1}{2}, 1)$, there exists a constant $c_t > 0$ such that the
following holds: Let $\epsilon > 0$ be sufficiently small and
let $\: \calU_\eta\left(G(V, E), [N], \{ \pi_e\}_{e \in E}, {\rm wt}
\right)$ be an instance of {\UG} with ${\rm opt}(\calU_\eta) <
2^{-O(\nfrac{1}{\epsilon^2})}$. Let ${\cal I}_\epsilon$ be the
corresponding instance of {\BS} as defined above. Let $V^* = \cup_{v \in V}
V^*[v]$ be the partition of its vertices as above. Then, any
$\nfrac{5}{6}$-piecewise balanced cut $(A,\overline{A})$ in ${\cal
I}_\epsilon$ (in the sense of Definition
\ref{def:part-bl}) satisfies
$$ \sum _{e^* \in E^*(A,\overline{A})} {\rm wt}_{\rm BS}(e^*) \geq c_t\epsilon^{t}.
$$
\end{theorem}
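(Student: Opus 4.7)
The plan is to prove Theorem \ref{thm:bs-instance-graph} by showing that a $\nfrac{5}{6}$-piecewise balanced cut in $\mathcal{I}_\epsilon$ is in one-to-one correspondence with a $\nfrac{5}{6}$-piecewise balanced proof $\Pi$ for the verifier $V_\epsilon$, in such a way that the (weighted) fraction of edges cut equals exactly the rejection probability of $V_\epsilon$ on $\Pi$. Once this correspondence is in place, the conclusion follows by simply applying Lemma \ref{lem:bs-s} with $c_t = b_t$.

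First, given a cut $(A, \overline{A})$ of $V^*$, define for each $v \in V$ the Boolean function $A^v : \{-1,1\}^N \mapsto \{-1,1\}$ by $A^v(\x) = +1$ if $(v,\x) \in A$ and $A^v(\x) = -1$ otherwise, and let $\Pi \defeq \{A^v\}_{v \in V}$. Since $\widehat{A^v}_\emptyset = \E_{\x}[A^v(\x)] = \E_{\x \in_R V^*[v]}[A(v,\x)]$ (identifying $A$ with its $\{-1,1\}$-indicator), the piecewise-balanced hypothesis of Definition \ref{def:part-bl} with parts $V^*[v]$ becomes
\[
\E_{v}\bigl[\,|\widehat{A^v}_\emptyset|\,\bigr] \;=\; \E_{v}\Bigl[\Bigl|\E_{\x \in_R V^*[v]}[A(v,\x)]\Bigr|\Bigr] \;\leq\; \tfrac{5}{6},
\]
which is exactly the definition of a $\nfrac{5}{6}$-piecewise balanced proof used in Lemma \ref{lem:bs-s}.

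Second, I will match the cut weight with the PCP rejection probability. By the construction of $\mathcal{I}_\epsilon$ in Section \ref{sec:reduction}, an edge between $(v,\x)$ and $(w,\y)$ carries weight ${\rm wt}(e)\cdot \Pr_{\x',\bmu}[(\x=\x')\wedge(\y=\x'\bmu \circ \pi_e)]$, which is precisely the probability with which $V_\epsilon$ queries the pair of bits $A^v(\x)$ and $A^w((\x\bmu)\circ\pi_e)$. The verifier rejects on exactly those $(v,\x)$--$(w,\y)$ pairs for which $A^v(\x) \neq A^w((\x\bmu)\circ\pi_e)$, i.e., those that are separated by the cut $(A,\overline{A})$. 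Summing over all edges gives
\[
\sum_{e^* \in E^*(A,\overline{A})} {\rm wt}_{\rm BS}(e^*) \;=\; \Pr\bigl[V_\epsilon \text{ rejects } \Pi\bigr].
\]

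Third, I invoke Lemma \ref{lem:bs-s}: since the hypothesis ${\rm opt}(\calU_\eta) < 2^{-O(1/\epsilon^2)}$ is exactly the soundness hypothesis there, and since $\Pi$ was shown to be $\nfrac{5}{6}$-piecewise balanced, the lemma yields
\[
\Pr\bigl[V_\epsilon \text{ accepts } \Pi\bigr] \;<\; 1 - b_t\,\epsilon^t,
\]
so the total weight of cut edges is at least $b_t\,\epsilon^t$, proving the theorem with $c_t \defeq b_t$.

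No technical obstacle arises in this proof: it is a direct bookkeeping translation between the cut formulation and the proof-verifier formulation. All the analytic content (Fourier analysis, Bourgain's junta theorem, invariance-type arguments) is encapsulated inside Lemma \ref{lem:bs-s}, whose proof is deferred to Appendix \ref{sec:app-pcp}. The only points requiring care are (i) making sure the blocks $V^*[v]$ used to define piecewise balance for cuts coincide with the blocks over which the proof-side expectation $\E_v[|\widehat{A^v}_\emptyset|]$ is taken, and (ii) matching edge weights with verifier test probabilities exactly; both follow immediately from the definitions in Section \ref{sec:inner-bs}.
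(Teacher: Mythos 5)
Your proposal is correct and takes essentially the same route as the paper: translate the cut $(A,\overline{A})$ into a PCP proof $\Pi=\{A^v\}$, observe that cut weight in $\mathcal{I}_\epsilon$ equals the rejection probability of $V_\epsilon$ on $\Pi$, note that the two notions of $\nfrac{5}{6}$-piecewise balance coincide (Definition \ref{def:part-bl} with parts $V^*[v]$ matches $\E_v[|\widehat{A^v}_\emptyset|]\leq\nfrac{5}{6}$), and invoke Lemma \ref{lem:bs-s}. The paper itself simply declares Theorem \ref{thm:bs-instance-graph} to be ``a restatement of Lemma \ref{lem:bs-s}'' and points to Section \ref{sec:reduction}, where the argument is phrased in terms of $B'$-balanced cuts first being shown to be piecewise balanced; since the theorem you are proving already hypothesizes piecewise balance directly, your bookkeeping is in fact a slightly cleaner match to the statement, though the substance is identical.
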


\subsection{The {SDP} Solution}\label{sec:INTGAP-sdp}
Now we present an {SDP} solution for ${\cal I}_{\epsilon}(V^*,E^*, {\rm wt}_{\rm BS})$ that satisfies Properties (3),
(4) and (5) of Theorem \ref{thm:graph-construction}. This 
proves Theorem \ref{thm:graph-construction} and, hence, Theorem \ref{thm:bs-restate}.

\medskip
We begin with the {SDP} solution of Theorem \ref{thm:ulc-sdp}.
Recall that $[N]$ is identified with the group $\mathbb{F}_2^k$ where $N=2^k,$ and $\oplus$ is the corresponding group operation.
We construct the following unit vectors, one for each pair
$(v,\x),$ where $v \in V$ and $\x\in \{-1,1\}^N$ (note that $V$
is the set of vertices of the {\UG} instance of Theorem
\ref{thm:ulc-sdp}):
\begin{equation}\label{eq:vectors}
\V_{v,\x} \defeq \frac{1}{\sqrt{N}} \sum _{i \in [N]} \x_i\v_i ^{\otimes 8}.
\end{equation}
For $(v,\x)\in V^*,$ we associate the vector
$\V_{v,\x}^{\otimes t},$ where $t=2^{240}+1$.
We start by noting that this vector is indeed a unit vector.
Since $\{v_i  \}_{i \in [N]}$ is an orthonormal basis for $\mathbb{R}^N$ and $x_i \in \{-1,1\},$
$$\left\langle \V_{v,\x} , \V_{v,\x} \right\rangle = \frac{1}{N} \sum_{i \in [N]} x_i^2 \langle v_i,v_i \rangle ^8 = \frac{1}{N} \sum_{i \in [N]} 1 = 1.$$
Hence, for every $v \in V$ and $\x \in \PM,$
\begin{equation}\label{eq:norm1}
\|\V_{v,\x}^{\otimes t} \|=1.
\end{equation}

\noindent
Next, we show  Property (5) in Theorem \ref{thm:graph-construction} which establishes that the SDP solution has value $O(\epsilon)$ when $\eta=\epsilon.$
\begin{theorem}[Low Objective Value]\label{thm:bs-complete}
$\sum _{e^*\{(v,\x),(w,\y)\} \in E^*} {\rm wt}_{\rm BS}(e^*)
\| \V_{v,\x}^{\otimes t}-\V_{w,\y}^{\otimes t} \|^2 \leq O(\eta+\epsilon).$
\end{theorem}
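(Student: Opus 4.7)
The plan is to reduce the problem to bounding an expectation of the inner product of pairs of vectors. Since the vectors $\V_{v,\x}^{\otimes t}$ all have unit length by \eqref{eq:norm1}, we have
\[
\| \V_{v,\x}^{\otimes t} - \V_{w,\y}^{\otimes t}\|^2 \;=\; 2 - 2\,\langle \V_{v,\x},\V_{w,\y}\rangle^{t},
\]
and the sum of weights ${\rm wt}_{\rm BS}(e^*)$ is $1$. So it is enough to show that the expectation (over the distribution on edges $e^*$ induced by ${\rm wt}_{\rm BS}$) of $\langle \V_{v,\x},\V_{w,\y}\rangle^{t}$ is at least $1 - O(\eta+\epsilon)$. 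Recall that a random edge $e^*$ is obtained by sampling $e\{v,w\}\in E$ with probability ${\rm wt}(e)$, a uniform $\x \in \{-1,1\}^N$, and an independent $\bmu \in_{\epsilon} \{-1,1\}^N$, and setting $\y = \x\bmu\circ\pi_e$.

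First, I would expand the inner product using orthonormality of $\{\v_i\}$ and $\{\w_j\}$: by definition \eqref{eq:vectors} and bilinearity of the $8$-th tensor power,
\[
\langle \V_{v,\x}, \V_{w,\y}\rangle \;=\; \frac{1}{N}\sum_{i,j\in [N]} \x_i \y_j\, \langle \v_i,\w_j\rangle^{8}.
\]
Combining the Closeness and Matching properties in Theorem \ref{thm:ulc-sdp}, every edge $e\{v,w\}$ admits a bijection $\pi_e$ such that $\langle \v_{\pi_e(j)},\w_j\rangle \geq 1-4\eta$ for \emph{every} $j\in[N]$ (apply the Matching Property with the shift $\ell = j\oplus j_0$ to the pair $(i_0,j_0)$ given by the Closeness Property). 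By Parseval in the orthonormal basis $\{\v_i\}$, for each $j$ this forces $\sum_{i\ne \pi_e(j)} \langle \v_i,\w_j\rangle^{2} \leq 1-(1-4\eta)^{2} \leq 8\eta$, and in particular each such term is at most $8\eta$.

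Next, split the inner product sum into diagonal terms ($i=\pi_e(j)$) and off-diagonal terms. For the diagonal, since $\y_j = \x_{\pi_e(j)}\,\bmu_{\pi_e(j)}$, the product $\x_i\y_j$ collapses to $\bmu_{\pi_e(j)}$, erasing the dependence on $\x$, and the diagonal contribution becomes $\frac{1}{N}\sum_j \bmu_{\pi_e(j)}\,\langle \v_{\pi_e(j)},\w_j\rangle^{8}$; this is at least $(1-4\eta)^{8}\cdot(1-\epsilon) - \epsilon \geq 1-O(\eta+\epsilon)$ whenever $\bmu$ has at most, say, $2\epsilon N$ negative entries. For the off-diagonal, the crucial use of the degree-$8$ tensor gives $\langle \v_i,\w_j\rangle^{8} = (\langle \v_i,\w_j\rangle^{2})^{4} \leq (8\eta)^{3}\,\langle \v_i,\w_j\rangle^{2}$ for $i\ne \pi_e(j)$; summing and using the Parseval bound twice yields
\[
\frac{1}{N}\Bigl|\sum_{i\ne \pi_e(j),\,j}\x_i\y_j\,\langle \v_i,\w_j\rangle^{8}\Bigr|
\;\leq\; \frac{(8\eta)^{3}}{N}\sum_j\sum_{i\ne \pi_e(j)}\langle \v_i,\w_j\rangle^{2}\;\leq\; (8\eta)^{4}\;=\; O(\eta^{4}),
\]
regardless of $\x,\y$. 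Thus for every ``typical'' $\bmu$ (Hamming weight at most $2\epsilon N$), $\langle \V_{v,\x},\V_{w,\y}\rangle \geq 1-O(\eta+\epsilon)$, which, since $t$ is a fixed constant, yields $\langle \V_{v,\x},\V_{w,\y}\rangle^{t} \geq 1-O(t(\eta+\epsilon)) = 1-O(\eta+\epsilon)$.

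Finally, I would dispatch non-typical $\bmu$ by Chernoff: the probability that $\bmu$ has more than $2\epsilon N$ negative entries is $2^{-\Omega(\epsilon N)}$, and on such $\bmu$ the quantity $\langle \V_{v,\x},\V_{w,\y}\rangle^{t}$ is trivially bounded below by $-1$ (using that $t$ is odd so the range is $[-1,1]$). For the parameter regime of interest ($N$ is large enough that $2^{-\Omega(\epsilon N)} \ll \epsilon$), this contributes negligibly, so the overall expectation is $\geq 1-O(\eta+\epsilon)$, giving the claimed bound. The main obstacle in this plan is the off-diagonal estimate: it is exactly where the degree-$8$ inner tensor in the definition of $\V_{v,\x}$ is used crucially, since a smaller power would leave an $O(\eta)$ cross-term that does not go to zero with $\eta$. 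Everything else is either an appeal to the structural properties of Theorem \ref{thm:ulc-sdp}, an application of Chernoff, or bookkeeping on the constant $t$.
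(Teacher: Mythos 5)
Your proposal is correct and takes essentially the same route as the paper. The paper factors the core estimate out as Lemma \ref{lem:innerprod} (a pointwise two-sided bound on $\langle\V_{v,\x},\V_{w,\y}\rangle$ in terms of $\Delta(\x\circ\pi_e,\y)$) and then applies it together with the Closeness Property and Chernoff; you instead inline the same diagonal/off-diagonal split with the same ingredients — the Matching Property to extend the Closeness estimate to all $j$, Parseval in the orthonormal basis to control the cross terms, the eighth tensor power to drive the off-diagonal contribution down to $O(\eta^4)$, and Chernoff for the atypical $\bmu$. The only cosmetic differences are that you carry the $1-4\eta$ inequality directly (the paper renames the closeness deficit to an exact $1-\eta$ inside its lemma) and you quote the sharper $2^{-\Omega(\epsilon N)}$ Chernoff failure probability where the paper simply writes $\epsilon$; neither affects the $O(\eta+\epsilon)$ conclusion.
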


\noindent
The proof of this theorem uses the following lemma which shows that, if $e\{v,w\}$ is an edge in the {\UG} instance $\calU_\eta$ so that the corresponding orthonormal bases are $\eta$-close (via the permutation $\pi$), then $\V_{v,x}$ and $\V_{w,y}$ are also close if $x \circ \pi$ and $y$ are close.

\begin{lemma}\label{lem:innerprod} Let $0<\eta<\nfrac{1}{2}$ and
assume that for $v,w \in V$ and $i_0,j_0\in [N],$ $\langle
\v_{i_0},\w_{j_0} \rangle=1-\eta.$ Let $\pi : [N] \mapsto [N]$
be defined to be $\pi(j_0 \oplus j ) \defeq i_0\oplus j \ \forall \ j \in
[N]$.
Then,
\begin{itemize}
\item {\bf Lower Bound:}
$(1-\eta)^{8}(1-2{\Delta}(\x\circ\pi,\y))-(2\eta)^{4} \leq \langle \V_{v,\x},\V_{w,\y} \rangle.$
\item {\bf Upper Bound:} $\langle \V_{v,\x},\V_{w,\y} \rangle
\leq (1-\eta)^{8}(1-2{\Delta}(\x\circ\pi,\y))+(2\eta)^{4}.$
\end{itemize}
Here, ${\Delta}(\x,\y)$ denotes the fraction of points where $\x$ and $\y$ differ.
\end{lemma}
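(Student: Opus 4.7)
\medskip
\noindent
\textbf{Proof proposal for Lemma \ref{lem:innerprod}.}

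The plan is to expand the inner product on the vertex indices and then split it into a ``diagonal'' contribution coming from the matched pairs $(i,j)$ with $i=\pi(j)$ and an ``off-diagonal'' error term. By the definition of the vectors in \eqref{eq:vectors} and the fact that $\langle a^{\otimes 8},b^{\otimes 8}\rangle = \langle a,b\rangle^{8}$,
\[
\langle \V_{v,\x},\V_{w,\y}\rangle \;=\; \frac{1}{N}\sum_{i,j\in[N]} x_i\,y_j\,\langle \v_i,\w_j\rangle^{8}.
\]
First, I would identify the diagonal terms. The Matching Property in Theorem \ref{thm:ulc-sdp}, together with the hypothesis $\langle \v_{i_0},\w_{j_0}\rangle=1-\eta$ and the definition $\pi(j_0\oplus j)=i_0\oplus j$, gives $\langle \v_{\pi(j)},\w_{j}\rangle = 1-\eta$ for every $j\in[N]$ (setting $\ell = j\oplus j_0$ in the Matching Property). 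Hence the diagonal contribution equals
\[
\frac{(1-\eta)^{8}}{N}\sum_{j\in[N]}x_{\pi(j)}\,y_j \;=\;(1-\eta)^{8}\bigl(1-2\Delta(\x\circ\pi,\y)\bigr),
\]
using the elementary identity $\tfrac{1}{N}\sum_j (\x\circ\pi)_j\,y_j = 1-2\Delta(\x\circ\pi,\y)$ for $\{-1,1\}$-vectors.

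Next, I would control the off-diagonal term $\frac{1}{N}\sum_j\sum_{i\ne\pi(j)} x_i y_j\,\langle\v_i,\w_j\rangle^{8}$. Here is where the choice of the inner tensor power $8$ is crucial. Fix $j$. Since $\{\v_i\}_{i\in[N]}$ is an orthonormal basis of $\mathbb{R}^{N}$ and $\w_j$ is a unit vector (Property 2 of Theorem \ref{thm:ulc-sdp}), Parseval gives $\sum_{i\in[N]}\langle\v_i,\w_j\rangle^{2}=1$. Subtracting the matched contribution $\langle\v_{\pi(j)},\w_j\rangle^{2}=(1-\eta)^{2}$ leaves
\[
\sum_{i\ne\pi(j)}\langle\v_i,\w_j\rangle^{2}\;=\;1-(1-\eta)^{2}\;=\;2\eta-\eta^{2}\;\le\;2\eta.
\]
In particular, each individual squared inner product $\langle\v_i,\w_j\rangle^{2}$ with $i\ne\pi(j)$ is bounded by $2\eta$, so
\[
\sum_{i\ne\pi(j)}\langle\v_i,\w_j\rangle^{8} \;\le\; (2\eta)^{3}\sum_{i\ne\pi(j)}\langle\v_i,\w_j\rangle^{2} \;\le\; (2\eta)^{3}\cdot 2\eta \;=\;(2\eta)^{4}.
\]
Since $|x_i y_j|=1$, averaging over $j$ yields that the absolute value of the off-diagonal sum is at most $(2\eta)^{4}$.

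Combining the diagonal and off-diagonal estimates gives both the claimed lower and upper bounds simultaneously. The step I expect to be the main obstacle (conceptually, not technically) is the off-diagonal estimate: it is tempting to bound $\sum_{i\ne\pi(j)}\langle\v_i,\w_j\rangle^{8}$ by $2\eta$ using Parseval alone, which is far too weak for later applications. The point is that the \emph{eighth} power lets one gain an extra factor $(2\eta)^{3}$ from the pointwise bound $\langle\v_i,\w_j\rangle^{2}\le 2\eta$, giving the much sharper $(2\eta)^{4}$ error that is needed later to establish the triangle inequality for the tensored vectors $\V_{v,\x}^{\otimes t}$.
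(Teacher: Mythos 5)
Your proof is correct and is essentially the paper's own argument: expand the inner product over index pairs, peel off the matched (``diagonal'') terms via the Matching Property to get $(1-\eta)^8(1-2\Delta(\x\circ\pi,\y))$, and bound the off-diagonal sum for each fixed index by $(2\eta)^4$ using Parseval together with the pointwise bound $\langle\v_i,\w_j\rangle^2\le 2\eta-\eta^2\le 2\eta$ and the $8$th tensor power. The only (immaterial) difference is that the paper fixes the $\v$-index and sums the orthonormality identity over the $\w$-index, whereas you fix the $\w$-index $j$ and sum over $i$; the roles are symmetric and the resulting estimate is identical.
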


\noindent
We first show how Lemma \ref{lem:innerprod} implies Theorem \ref{thm:bs-complete}.

\begin{proof}[of Theorem \ref{thm:bs-complete}]
It is sufficient to prove that for an edge $e\{v,w\} \in E$ picked
with probability ${\rm wt}(e)$ (from the {\UG} instance
$\calU_\eta$), $\x \in _{\nfrac{1}{2}} \{-1,1\}^N,$ and ${\mu}
\in_\epsilon \{-1,1\}^N,$
$$\E_{e\{v,w\}}\left[ \E_{\begin{subarray}{c} \x \in _{\nfrac{1}{2}} \{-1,1\}^N \\
{ {\mu}} \in _\epsilon \{-1,1\}^N \end{subarray}} \left\langle
\V_{v,\x}^{\otimes t}, \V_{w,\x{{\mu}}\circ \pi_e }^{\otimes t}\right\rangle \right]
\geq 1-O(t(\eta+\epsilon)).$$
Since $e\{v,w\}$ is an edge of $\calU_\eta,$ we know
from the Closeness Property of Theorem \ref{thm:ulc-sdp}, that  there are $i_0,j_0 \in [N]$
such that $\langle \v_{i_0}, \w_{j_0} \rangle \geq 1-O(\eta).$
Moreover, $\pi_e(j_0\oplus j)=i_0\oplus j, \ \forall \ j \in [N]$.
Further, it follows from a simple Chernoff Bound argument that, except with
probability $\epsilon$, ${\Delta}(\x,\x{\mu}) \leq
2\epsilon.$ Thus, using the lower bound estimate from Lemma
\ref{lem:innerprod},
we get that
$$ \left\langle \V_{v,\x}^{\otimes t}, \V_{w,\x{\mu}\circ\pi_e}^{\otimes t}\right\rangle \geq
1-O(t(\eta+\epsilon)).$$ This completes the proof.
\end{proof}

\noindent
We now present the proof of Lemma \ref{lem:innerprod}.

\begin{proof}[of Lemma \ref{lem:innerprod}]
Note that
\begin{eqnarray*} \langle \V_{v,\x},\V_{w,\y} \rangle &=& \frac{1}{N} \sum_{i,i' \in [N]} \x_i\y_{i'}\langle \v_i,\w_{i'} \rangle ^{8}\\ &=&\frac{1}{N} \sum_{i,i' \in [N]} \x_{i_0\oplus i}\y_{j_0\oplus i'}\langle \v_{i_0\oplus i},\w_{j_0\oplus i'} \rangle ^{8}.
\end{eqnarray*}
We first show that in the above summation, terms with $i=i'$ dominate the summation.
\smallskip
\noindent
Since $\langle \v_{i_0},\w_{j_0} \rangle=1-\eta,$ the Matching Property implies that for all $i \in [N],$ $\langle \v_{i_0 \oplus i},\w_{j_0 \oplus i} \rangle= 1-\eta.$
Further, since the vectors $\{ \w_{i'} \}_{i'\in [N]}$ form an orthonormal
basis for $\mathbb{R}^{N},$
$ \sum_{i' \in [N]} \ \langle \v_{i_0\oplus i},\w_{j_0 \oplus i'}\rangle^2 = 1.$ Hence,
$$
\sum_{i' \in [N], i'\neq i}
\ \langle\v_{i_0 \oplus i}, \w_{j_0\oplus i'}\rangle^{8} \leq \left(1 - (1- \eta)^2\right)^{4} =
(2 \eta - \eta^2)^4 \leq (2\eta)^4.$$
Now, $\langle \V_{v,\x},\V_{w,\y} \rangle$ is at least
$$
\frac{1}{N} \sum_{i \in [N]} \x_{i_0\oplus i} \y_{j_0 \oplus i} \langle \v_{i_0\oplus i},\w_{j_0 \oplus i} \rangle ^{8}
- \frac{1}{N} \sum_{\begin{subarray}{c} i, i' \in [N] \\ i\not= i' \end{subarray}}
\langle \v_{i_0\oplus i},\w_{j_0 \oplus i'} \rangle ^{8},$$ and at most
$$
\frac{1}{N} \sum_{i \in [N]} \x_{i_0\oplus i} \y_{j_0 \oplus i} \langle \v_{i_0\oplus i},\w_{j_0\oplus i} \rangle ^{8}
+ \frac{1}{N} \sum_{\begin{subarray}{c} i, i' \in [N] \\ i\not= i' \end{subarray}}
\langle \v_{i_0\oplus i},\w_{j_0\oplus i'} \rangle ^{8}.$$
The first term in both these expressions is
$$ \frac{1}{N} \sum_{i \in [N]} \x_{i_0\oplus i} \y_{j_0\oplus i} (1-\eta)^{8}= (1-2{\Delta}(\x\circ \pi,\y)) (1-\eta)^{8}.$$
The second term is bounded by $(2\eta)^4$ as seen above.
This completes the proof of the lemma.
\end{proof}

\noindent
The well-separatedness of the SDP solution, or Property (4) in Theorem \ref{thm:graph-construction}, follows
from the following lemma.
\begin{lemma}[Well Separatedness] \label{lem:well-separated}
For any odd integer $t>0,$
$$
\frac{1}{2} \E_{\x \in_{\nfrac{1}{2}} \PM, \ \y \in_{\nfrac{1}{2}} \PM}\left[\|
\V_{v,\x}^{\otimes t}- \V_{v,\y}^{\otimes t} \|^2 \right] =1.$$
\end{lemma}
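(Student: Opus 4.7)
The plan is to reduce the expectation to an inner product computation, exploit the orthonormality of the $\v_i$'s, and then use an odd-moment symmetry argument.

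First I would use the fact (already established in Equation \eqref{eq:norm1}) that $\V_{v,\x}^{\otimes t}$ is a unit vector for every $\x \in \PM$. Since for any two unit vectors $a,b$ we have $\frac{1}{2}\|a-b\|^2 = 1 - \langle a,b\rangle$, the quantity we want to compute becomes
$$ \frac{1}{2} \E_{\x,\y}\left[\|\V_{v,\x}^{\otimes t}-\V_{v,\y}^{\otimes t}\|^2\right] = 1 - \E_{\x,\y}\left[\langle \V_{v,\x}^{\otimes t}, \V_{v,\y}^{\otimes t}\rangle\right] = 1 - \E_{\x,\y}\left[\langle \V_{v,\x}, \V_{v,\y}\rangle^{t}\right],$$
where the last equality uses the standard tensor identity $\langle a^{\otimes t}, b^{\otimes t}\rangle = \langle a,b\rangle^{t}$. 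Thus it suffices to show that the expectation on the right vanishes.

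Next I would compute the inner product $\langle \V_{v,\x}, \V_{v,\y}\rangle$ explicitly. Expanding and using that $\{\v_i\}_{i \in [N]}$ is an orthonormal basis of $\mathbb{R}^N$ (so $\langle \v_i, \v_j\rangle^{8} = \delta_{ij}$), we get
$$ \langle \V_{v,\x}, \V_{v,\y}\rangle = \frac{1}{N}\sum_{i,j \in [N]} \x_i \y_j \langle \v_i,\v_j\rangle^{8} = \frac{1}{N}\sum_{i \in [N]} \x_i \y_i.$$

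Finally, I would observe that if $\x,\y$ are independent uniform points of $\PM$, then the coordinates $z_i \defeq \x_i \y_i$ are independent uniform $\pm 1$ random variables. Therefore $\sum_{i\in[N]} z_i$ has a distribution that is symmetric around $0$, and since $t$ is odd, $\E\bigl[(\tfrac{1}{N}\sum_i z_i)^{t}\bigr] = 0$. Combining this with the reduction above yields
$$ \frac{1}{2}\E_{\x,\y}\left[\|\V_{v,\x}^{\otimes t}-\V_{v,\y}^{\otimes t}\|^2\right] = 1 - 0 = 1,$$
as required. There is no genuine obstacle here; the lemma is essentially a consequence of orthonormality of the basis together with the odd-moment cancellation, and the tensoring is exploited purely to convert inner products into $t$-th powers.
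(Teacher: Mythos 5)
Your proof is correct and uses essentially the same idea as the paper: reduce to showing $\E_{\x,\y}[\langle \V_{v,\x},\V_{v,\y}\rangle^t]=0$ and then invoke odd-moment symmetry. The paper phrases the cancellation as pairing $(\x,\y)$ with $(\x,-\y)$, while you first simplify the inner product via orthonormality to $\frac{1}{N}\sum_i \x_i\y_i$ and then observe the sum of iid signs is symmetric about zero; the extra simplification step is fine but not needed.
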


\begin{proof}
Observe that
\begin{eqnarray*}
\frac{1}{2} \E_{\x,\y} \left[\| \V_{v,\x}^{\otimes t}- \V_{v,\y}^{\otimes t} \|^2 \right]&=&
\E_{\x,\y} \left[ 1- \langle \V_{v,\x}^{\otimes t}, \V_{v,\y}^{\otimes t} \rangle \right] \\
& = & 1- \E_{\x,\y}\left[ \left( \frac{1}{N} \sum_{i, j \in [N]} \x_i \y_j\langle \v_{i}, \v_{j}\rangle^{8} \right)^t \right] \\
& = & 1.
\end{eqnarray*}
The last equality follows from the fact that the contribution of
$(\x,\y)$ to the expectation is canceled by that of $(\x,-\y).$
\end{proof}

\noindent
Finally, the following theorem establishes that our SDP solution satisfies the triangle inequality, Property (3)
of Theorem \ref{thm:graph-construction}.

\begin{theorem}[Triangle Inequality]\label{thm:metric}
For $t=2^{240}+1,$ the set of vectors $\{\V_{v,\x}^{\otimes t}\}_{v \in V, \x \in \PM}$ give rise to a negative-type metric.
\end{theorem}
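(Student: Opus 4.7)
\smallskip
\noindent
The plan is to verify, for any three vertices $(u,x), (v,y), (w,z) \in V^*$, the inequality
$$\|\V_{u,x}^{\otimes t}-\V_{v,y}^{\otimes t}\|^2+\|\V_{v,y}^{\otimes t}-\V_{w,z}^{\otimes t}\|^2 \geq \|\V_{u,x}^{\otimes t}-\V_{w,z}^{\otimes t}\|^2.$$
Setting $a=\langle\V_{u,x},\V_{v,y}\rangle$, $b=\langle\V_{v,y},\V_{w,z}\rangle$, $c=\langle\V_{u,x},\V_{w,z}\rangle$, and using that each $\V$ has unit length (Equation \eqref{eq:norm1}) together with $\langle p^{\otimes t},q^{\otimes t}\rangle=\langle p,q\rangle^t$, this reduces to the scalar inequality $1+a^{t}\geq b^{t}+c^{t}$, which I will prove by case analysis on the magnitudes of $a,b,c$.

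\smallskip
\noindent
\textbf{Easy case.} If $\max(|a|,|b|,|c|)\leq \tfrac{1}{3}$, then since $t=2^{240}+1$ is a large odd integer, each of $|a^t|,|b^t|,|c^t|\leq 3^{-t}$ is negligible, so $1+a^t\geq b^t+c^t$ trivially. Thus I may assume one of the inner products, WLOG $|b|^t\geq \tfrac{1}{3}$; since $t$ is odd this means $b\geq(1/3)^{1/t}$ or $b\leq -(1/3)^{1/t}$, and in either case $|b|\geq 1-O(1/t)$, i.e.\ arbitrarily close to $1$ (by choosing $t$ large).

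\smallskip
\noindent
\textbf{Rigidity from the inner 8-th tensor.} The central technical tool is a rigidity lemma: if $|\langle \V_{v,y},\V_{w,z}\rangle|\geq 1-\delta$ for sufficiently small $\delta$, then there exist a permutation $\pi:[N]\to[N]$ and signs $\sigma_i\in\{\pm1\}$ such that $\langle v_{\pi(i)},w_i\rangle\geq 1-O(\delta)$ in absolute value for every $i$, and $\sigma_i y_{\pi(i)}=z_i$ for almost all coordinates. To see why, expand
$$\langle \V_{v,y},\V_{w,z}\rangle \;=\; \tfrac{1}{N}\sum_{i,i'}y_i z_{i'}\langle v_i,w_{i'}\rangle^{8}.$$
Because $\{v_i\}$ and $\{w_{i'}\}$ are orthonormal, $\sum_{i,i'}\langle v_i,w_{i'}\rangle^2=N$, and the $8$-th power crushes every off-diagonal term below a threshold $1-\Omega(1)$. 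Only terms whose inner product is close to $\pm1$ survive, and these necessarily form a signed permutation matching (by orthonormality of both bases). Summing and comparing against the assumption $|\langle\V_{v,y},\V_{w,z}\rangle|\geq 1-\delta$ forces the Hamming distance between $y$ and $\sigma\cdot(z\circ\pi^{-1})$ to be $O(\delta N)$. The quantitative version of this statement, essentially already contained in the proof of Lemma~\ref{lem:innerprod}, is the reason the inner exponent is $8$ rather than $2$.

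\smallskip
\noindent
\textbf{Reduction and case analysis.} Once $b$ is known to be close to $\pm 1$, the rigidity lemma identifies $(v,y)$ with $(w,z)$ up to a signed permutation of bases and a tiny Hamming-distance error. This lets us ``substitute'' $(w,z)$ for $(v,y)$ in the other two inner products with only an $O(1/t)$ loss, replacing the asymmetric trilateral inequality $1+a^t\geq b^t+c^t$ with an essentially one-variable estimate: $c$ must be within $O(1/t)$ of $\pm a$ (same sign as $b$), so $1+a^t-b^t-c^t$ is either manifestly nonnegative (when $b=1-o(1)$, forcing $c\approx a$) or becomes the hypercube inequality within a single basis, where the metric is literally $\ell_1$ and hence negative type (Fact~\ref{fct:l1neg}). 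The sign $b<0$ is handled symmetrically by flipping $z\mapsto -z$, using the antipodal symmetry $\V_{w,-z}=-\V_{w,z}$. The outer tensor power $t=2^{240}+1$ is there so that any inner product exceeding $\tfrac{1}{3}$ after raising to the $t$-th power must in fact be $1-o(1)$ before, triggering the rigidity lemma with a gap large enough for the $O(1/t)$-error terms to be absorbed.

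\smallskip
\noindent
\textbf{Main obstacle.} The delicate part is the intermediate regime where $|b|$ is close to $1$ but $|a|$ and $|c|$ are only moderately large (say both around $(1/3)^{1/t}$): here none of the three rigidity substitutions is automatic, and one must verify the scalar inequality $1+a^t\geq b^t+c^t$ by carefully tracking, via the orthonormal-basis machinery of Theorem~\ref{thm:ulc-sdp} (Matching and Closeness), the exact trade-off between Hamming-distance loss and basis-alignment loss along the triangle. This case analysis is what is pushed to Appendix~\ref{sec:sdp-triangle}; it is essentially brute force and gives little additional intuition, as the authors themselves note.
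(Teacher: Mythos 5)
Your high-level scaffolding is correct — the reduction to the scalar inequality $1+a^t\geq b^t+c^t$ via unit-length and the tensor product identity, the easy case when all three inner products are small, and the observation that $|b|^t\geq\tfrac{1}{3}$ together with the inner $8$th tensor and the orthonormality forces some pair of basis vectors $\v_{i_0},\w_{j_0}$ to satisfy $|\langle\v_{i_0},\w_{j_0}\rangle|\geq 1-\beta$ with $\beta$ tiny. All of this matches the paper. But your ``Reduction and case analysis'' paragraph contains a genuine gap, and it misses the one structural tool that actually closes the argument.

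The substitution heuristic does not work. From $b=1-O(1/t)$ you get $\|\V_{v,\y}-\V_{w,\z}\|=O(1/\sqrt{t})$, so replacing $(w,\z)$ by $(v,\y)$ in the other inner products costs $O(1/\sqrt{t})$, not $O(1/t)$. More seriously, even with that correction, ``$c$ is within $O(1/\sqrt{t})$ of $\pm a$, hence $1+a^t\geq b^t+c^t$'' does not follow: when $|a|$ and $|c|$ are themselves close to $1$, the map $x\mapsto x^t$ has Lipschitz constant $\approx t$, so $|c^t-a^t|$ can be of order $t\cdot\nfrac{1}{\sqrt{t}}=\sqrt{t}$, which overwhelms the $1-b^t=\Theta(1)$ slack. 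The reason such configurations cannot arise is precisely the $t=1$ triangle inequality $1+a\geq b+c$ for the untensored $\V$'s — but that is the thing being proved, not something you get to assume. The appeal to ``the hypercube inequality within a single basis, where the metric is literally $\ell_1$'' is also not available after an approximate substitution: the three bases $B(u),B(v),B(w)$ are genuinely different, and reducing to a single basis is exactly what the error terms prevent.

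What actually closes the proof in the paper is an elementary monotonicity lemma (Lemma~\ref{lemma:pow-t1}): if $a,b,c\in[-1,1]$ satisfy $1+a\geq b+c$, then $1+a^t\geq b^t+c^t$ for every odd $t$. This transfers the burden entirely to the untensored case $t=1$, and the role of the large $t$ is only to guarantee $\beta\leq 2^{-160}$ so that the hypothesis of the $t=1$ case-analysis lemma (Lemma~\ref{lemma:allclose1}) is met; it is not there to make perturbative error terms small. Once you know this, the proof is clean: establish $\beta$ small, invoke the case analysis to get $1+a\geq b+c$, and then lift by the odd-power lemma. Your write-up is missing this reduction, and the argument that attempts to replace it would not survive the intermediate regime you correctly flag as the delicate one.
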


\paragraph{Proof of Theorems \ref{thm:graph-construction} and \ref{thm:bs-restate}.}
Before we go into the proof of Theorem \ref{thm:metric}, we note that  Theorem
\ref{thm:bs-instance-graph} and Theorem \ref{thm:bs-complete}, along with \eqref{eq:norm1}, Lemma \ref{lem:well-separated} and Theorem \ref{thm:metric}, for the choices $\epsilon= (\log
\log n)^{\nfrac{-1}{3}}$ and $\eta = O(\epsilon)$ complete the proof of Theorem \ref{thm:graph-construction} and Theorem \ref{thm:bs-restate} 
(note that ${\rm opt}(\calU_\eta) \leq \log^{-\eta} \tilde{n} \leq 3\log^{-\eta} n$).

\medskip

\paragraph{Proof of Theorem \ref{thm:metric}.} Theorem \ref{thm:metric} requires proving  that any three vectors $\V_{u,\x}^{\otimes t}$, $\V_{v,\y}^{\otimes t}$ and
$\V_{w, \z}^{\otimes t}$ satisfy
\begin{equation}\label{eq:tr1}
 1 + \langle \V_{u,\x}^{\otimes t},\V_{v,y}^{\otimes t} \rangle  \geq \langle \V_{u,\x}^{\otimes t},\V_{w,\z}^{\otimes t} \rangle
  +\langle \V_{v,\y}^{\otimes t},\V_{w,\z}^{\otimes t}\rangle.
\end{equation}
We can assume that at least one of the dot-products has magnitude
at least $\nfrac{1}{3}$; otherwise, the inequality  holds trivially. Assume,
w.l.o.g., that $$|\langle \V_{v,\y}^{\otimes t},\V_{w,\z}^{\otimes t}\rangle| \geq \nfrac{1}{3}.$$ This implies
that $|\langle \V_{v,y},\V_{w,z}\rangle|^t \geq \nfrac{1}{3},$ and therefore,
 $$|\langle  \V_{v,y},\V_{w,z} \rangle| = 1-\eta',$$  for some $\eta' = O(\nfrac{1}{t})$.
It follows that, for some $i,j \in [N],$  $|\langle \v_i,\w_j\rangle| =  1-\beta$ for some
$\beta  \leq 2^{-160}.$
We give a quick proof of this.
Let $i_0,j_0$ be $\arg\max_{i,j} |\langle v_i, w_j\rangle| $ and $1-\beta = |\langle v_{i_0}, w_{j_0}\rangle|.$
Then,
$$
 \left| \langle  \V_{v,y},\V_{w,z} \rangle \right| = \left| \frac{1}{N} \sum_{i,j \in [N]} y_iz_j \langle v_i, w_j\rangle ^8 \right|\leq  \frac{1}{N} \sum_{i\in [N]} \langle v_{i_0\oplus i }, w_{j_0 \oplus i}\rangle ^8 +  \frac{1}{N} \sum_{i \neq j \in [N]}\langle v_{i_0\oplus i}, w_{j_0 \oplus j}\rangle ^8.$$
By the Matching Property, $\langle v_{i_0\oplus i }, w_{j_0 \oplus i}\rangle= \langle v_{i_0}, w_{j_0}\rangle$ for all $i \in [N].$ Hence, $$\frac{1}{N} \sum_{i \in[N]} \langle v_{i_0\oplus i}, w_{j_0 \oplus i}\rangle ^8  = (1-\beta)^8.$$
 Moreover, by orthonormality, for all $i \in [N],$ 
$$
\sum_{i' \in [N], i'\neq i}
\ \langle\v_{i_0 \oplus i}, \w_{j_0\oplus i'}\rangle^{8} \leq \left(1 - (1- \beta)^2\right)^{4} =
(2 \beta - \beta^2)^4 \leq (2\beta)^4.$$
Thus, 
$$1-\eta' =  \left| \langle  \V_{v,y},\V_{w,z} \rangle \right| \leq (1-\beta)^8 + (2 \beta)^4,$$ giving us the claimed upper bound on $\beta.$
 By relabeling, if necessary, we may assume that
$|\langle \v_{1},\w_{1} \rangle|=1-\beta.$

\medskip
\noindent
Note that \eqref{eq:tr1} is equivalent to showing  that
 $$ 1 + \langle \V_{u,x},\V_{v,y}\rangle^t \geq \langle \V_{u,x},\V_{w,z}\rangle^t +
        \langle \V_{v,y},\V_{w,z}\rangle^t. $$
The following elementary lemma, whose proof appears at the end of this section, implies that it is sufficient to prove that
\begin{equation}\label{eq:tr2}
  1 + \langle \V_{u,x},\V_{v,y}\rangle\geq \langle \V_{u,x},\V_{w,z}\rangle +
        \langle \V_{v,y},\V_{w,z}\rangle.
\end{equation}

\begin{lemma} \label{lemma:pow-t1}
Let $a,b,c \in [-1,1]$ such that $1+a \geq b +c$.
Then, $1+a^t \geq b^t+c^t$ for every odd integer $t \geq 1$.
\end{lemma}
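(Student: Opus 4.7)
The plan is to recast the inequality as showing $b^t + c^t \leq 1 + a^t$ whenever $b + c \leq 1 + a$ and $a,b,c \in [-1,1]$. Since both hypothesis and conclusion are symmetric in $b$ and $c$, I would assume without loss of generality that $c \geq b$. If in addition $c \leq a$, then $b^t \leq 1$ and $c^t \leq a^t$ --- both using that $x \mapsto x^t$ is monotone increasing on $[-1,1]$ because $t$ is odd --- and summing gives the conclusion immediately. So the remaining case is $c > a$; I would set $d := c-a \in (0,2]$. The hypothesis $b + c \leq 1+a$ rewrites as $b \leq 1-d$, while $c \leq 1$ gives $a \leq 1-d$, so the problem reduces to bounding
\[
(a+d)^t - a^t + b^t \leq 1, \qquad a,b \in [-1,1-d].
\]

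I would then maximize the two parts separately. The $b^t$ piece is strictly increasing in $b$ (odd $t$), so its maximum on the interval is $(1-d)^t$, attained at $b = 1-d$. For the $a$-piece, set $f(a) := (a+d)^t - a^t$ and compute
\[
f'(a) \;=\; t\bigl[(a+d)^{t-1} - a^{t-1}\bigr].
\]
Since $t-1$ is \emph{even}, $x^{t-1} = |x|^{t-1}$ is monotone in $|x|$, so the sign of $f'(a)$ equals the sign of $|a+d| - |a|$; hence $f$ is nonincreasing on $[-1,-d/2]$ and nondecreasing on $[-d/2, 1-d]$, and its maximum on $[-1,1-d]$ is attained at one of the two endpoints. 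A direct computation yields $f(1-d) = 1 - (1-d)^t$ and $f(-1) = 1 + (d-1)^t$, and using the oddness of $t$ once more to write $(d-1)^t = -(1-d)^t$, both endpoint values equal $1-(1-d)^t$.

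Putting the two pieces together, $(a+d)^t - a^t + b^t \leq [1-(1-d)^t] + (1-d)^t = 1$, which is exactly the required inequality. The places where the oddness of $t$ enters are (i) the monotonicity of $x \mapsto x^t$, used in the opening reduction and in locating the maximum in $b$, and (ii) the sign flip $(d-1)^t = -(1-d)^t$ that makes the two endpoint values of $f$ coincide; the evenness of $t-1$ is what drives the clean sign analysis of $f'$. I do not anticipate a substantive obstacle; the only mild care is in the sign analysis of $f'$ and in verifying that the interval $[-1,1-d]$ is nonempty, which holds since $d = c-a \leq 1-(-1) = 2$.
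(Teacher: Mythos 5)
Your proof is correct, and it takes a genuinely different route from the paper's. The paper first reduces by sign casework to the regime $0 \leq a,b,c \leq 1$, assumes $a < b \leq c$, and then combines the two nonnegative inequalities $1 - c \geq b - a$ and $\sum_{i=0}^{t-1} c^i \geq \sum_{i=0}^{t-1} a^i b^{t-1-i}$ (the latter a term-by-term comparison) by multiplying them to obtain $1 - c^t \geq b^t - a^t$; the argument is entirely algebraic. You instead dispose of the easy case $c \leq a$, set $d = c-a > 0$, observe that the hypothesis and $c \leq 1$ confine $a$ and $b$ to $[-1,\,1-d]$, and then maximize $(a+d)^t - a^t$ and $b^t$ separately over that interval by a short derivative analysis, using the evenness of $t-1$ to locate the critical point at $a=-d/2$ and the oddness of $t$ to show both endpoint values of $f$ coincide at $1-(1-d)^t$; the two maxima sum to exactly $1$. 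The paper's route avoids calculus and is a bit slicker once the sign reduction is done; your route requires a small derivative computation but makes the extremal configurations (and the tightness of the bound) explicit, and it avoids the paper's up-front reduction to the nonnegative orthant. Both are sound; the only point worth noting is that your maximization silently relaxes the additional constraint $b \leq c$ coming from your WLOG, which is harmless since enlarging the feasible region only strengthens the upper bound.
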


\noindent
Equation \eqref{eq:tr2} is the same as showing
$$ N + \sum_{i,j=1}^N x_i y_j \langle \u_i,\v_j\rangle^{8} \geq \sum_{i,j=1}^N
    x_i z_j \langle \u_i,\w_j\rangle^{8} + \sum_{i,j=1}^N y_i z_j \langle \v_i,\w_j\rangle^{8}.$$
As noted before, we may assume that  $|\langle \v_1,\w_1 \rangle|=1-\beta$ and, hence,
by the Matching Property,
 $$ \langle \v_1,\w_1 \rangle= \langle \v_2,\w_2\rangle = \cdots=\langle \v_N,\w_N\rangle = \pm (1-\beta).$$
Let $\lambda \defeq \max_{1\leq i, j  \leq N} |\langle \u_i,\w_j\rangle|.$ We may
assume, w.l.o.g., that the maximum is achieved for $\u_1, \w_1,$ and again by the Matching Property,
$$ \langle \u_1,\w_1\rangle = \langle \u_2,\w_2 \rangle = \cdots =\langle \u_N,\w_N \rangle= \pm\lambda.
  $$
Now, Theorem \ref{thm:metric} follows from the following  lemma.

\begin{lemma} \label{lemma:allclose1}
 Let $\{\u_i\}_{i=1}^N, \{\v_i \}_{i=1}^N,
\{\w_i\}_{i=1}^N$ be three sets of unit vectors in $\Real^N,$
such that the vectors
in each set are mutually orthogonal. Assume that any three of
these vectors satisfy the triangle inequality. Assume, moreover, that
\begin{gather*}
  \langle \u_1, \v_1 \rangle= \langle \u_2, \v_2 \rangle= \cdots = \langle \u_N , \v_N \rangle, \\
 \; \;   \lambda \defeq \langle \u_1,\w_1 \rangle =  \langle \u_2, \w_2 \rangle= \cdots = \langle \u_N,\w_N \rangle
   \geq 0, \\
 \forall 1\leq i,j \leq N, \; \;  |\langle \u_i, \w_j\rangle | \leq  \lambda, \\
 1-\beta \defeq \langle \v_1, \w_1 \rangle=  \langle \v_2, \w_2 \rangle=  \cdots = \langle \v_N , \w_N \rangle
 ,
\end{gather*}
 where $0 \leq \beta \leq 2^{-160}.$  Let
 $x_i, y_i, z_i \in \{-1,1\}$ for $1\leq i \leq N$. Define unit vectors
  $$ \u \defeq  \frac{1}{\sqrt{N}} \sum_{i=1}^N x_i \u_i^{\otimes 8},
  \quad \v \defeq  \frac{1}{\sqrt{N}}  \sum_{i=1}^N y_i \v_i^{\otimes 8} \quad
  \w \defeq \frac{1}{\sqrt{N}}  \sum_{i=1}^N z_i \w_i^{\otimes 8} .$$
Then, the vectors $\u,\v,\w$ satisfy the triangle inequality $1+\langle \u,\v \rangle \geq
\langle \u,\w \rangle + \langle \v,\w\rangle,$ i.e.,
\begin{equation*}   N + \sum_{i,j=1}^N x_i y_j \langle \u_i, \v_j\rangle^{8} \geq \sum_{i,j=1}^N
    x_i z_j \langle \u_i, \w_j\rangle^{8} + \sum_{i,j=1}^N y_i z_j \langle \v_i, \w_j\rangle^{8}.\end{equation*}
\end{lemma}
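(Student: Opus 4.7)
The plan is to prove the triangle inequality
\begin{equation*}
  N + \sum_{i,j=1}^N x_iy_j A_{ij}^{8} \;\ge\; \sum_{i,j=1}^N x_iz_j B_{ij}^{8} + \sum_{i,j=1}^N y_iz_j C_{ij}^{8},
\end{equation*}
where $A_{ij}=\langle\u_i,\v_j\rangle$, $B_{ij}=\langle\u_i,\w_j\rangle$, $C_{ij}=\langle\v_i,\w_j\rangle$ are $N\times N$ orthogonal matrices (since each of the three families is an orthonormal basis), with constant diagonals $A_{ii}\equiv\alpha$, $B_{ii}\equiv\lambda=\max_{i,j}|B_{ij}|$, $C_{ii}\equiv 1-\beta$, and satisfying the identity $B=AC$.

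First I will extract a battery of structural consequences from the hypotheses. Applying the original triangle inequalities to the triples $(\u_1,\v_1,\w_1)$, $(\u_i,\v_j,\w_j)$ and $(\w_j,\u_i,\w_k)$ yields respectively
\begin{equation*}
  \alpha\in[\lambda-\beta,\lambda+\beta],\qquad A_{ij}\geq B_{ij}-\beta,\qquad B_{ij}+B_{ik}\leq 1 \text{ for } j\neq k,
\end{equation*}
the last of which, taking $i=j$, gives the sharper off-diagonal bound $B_{ik}\leq 1-\lambda$ (and symmetric bounds for $A,C$). Orthogonality of $C$ with $C_{ii}=1-\beta$ forces $\sum_{j\neq i}C_{ij}^{2}\leq 2\beta$, so every $|C_{ij}|\leq\sqrt{2\beta}$ and $\sum_{j\neq i}C_{ij}^{8}\leq 16\beta^{4}$, making the $C$-sum equal to its diagonal up to $O(\beta^{4})$. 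The relation $B=AC$ together with $\|C-I\|_F^{2}\leq 2N\beta$ gives the Frobenius bound $\|A-(1-\beta)B\|_F^{2}\leq 2N\beta$ and the pointwise bound $|A_{ij}|\leq\lambda+\sqrt{2\beta}$. Finally, $|B_{ij}|\leq\lambda$ combined with orthogonality gives $\sum_{i,j}B_{ij}^{8}\leq\lambda^{6}N$ and the per-column estimate $|\sum_{i}x_iB_{ij}^{8}|\leq\lambda^{6}$.

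The proof then splits on the size of $\lambda$. When $\lambda$ is close to $1$, the sharper off-diagonal bound $|B_{ij}|\leq 1-\lambda$ (and its analogue for $A$) forces $B$ and $A$ to be essentially diagonal matrices, so the three sums reduce, up to errors polynomial in $(1-\lambda)$ and $\beta$, to their diagonal parts $\lambda^{8}\cdot\tfrac{1}{N}\sum x_iz_i$, $(1-\beta)^{8}\cdot\tfrac{1}{N}\sum y_iz_i$, $\alpha^{8}\cdot\tfrac{1}{N}\sum x_iy_i$, and the target reduces to the coordinatewise sign inequality $1+x_iy_i\geq x_iz_i+y_iz_i$ (which holds for all sign patterns) together with the tight arithmetic $\alpha^{8},\lambda^{8},(1-\beta)^{8}\in[1-O(\beta+(1-\lambda)),1]$. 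When $\lambda$ is bounded away from $1$, I will combine the Frobenius bound with a binomial expansion of $A_{ij}^{8}-(1-\beta)^{8}B_{ij}^{8}$ and Cauchy--Schwarz (using $\sum B_{ij}^{14}\leq\lambda^{12}N$) to obtain
\begin{equation*}
  T_{3} \;=\; (1-\beta)^{8}\cdot\tfrac{1}{N}\sum_{i,j}x_iy_jB_{ij}^{8} + O(\lambda^{6}\sqrt{\beta}),
\end{equation*}
and then invoke the auxiliary identity
\begin{equation*}
  1 + \tfrac{1}{N}\sum_{i,j}x_iy_jB_{ij}^{8} - \tfrac{1}{N}\sum_{i,j}x_iz_jB_{ij}^{8} - \tfrac{1}{N}\sum_i y_iz_i \;=\; \frac{2}{N}\sum_{j\,:\,y_j\neq z_j}\Big(1-z_j\sum_i x_iB_{ij}^{8}\Big) \;\geq\; \frac{2|D|(1-\lambda^{6})}{N},
\end{equation*}
which is nonnegative because $|\sum_i x_iB_{ij}^{8}|\leq\lambda^{6}\leq 1$. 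Together these reduce the target to a bookkeeping of perturbative errors weighed against the slack $(1-(1-\beta)^{8})(1-\lambda^{6})$ arising from the $(1-\beta)^{8}$-factor on $T_{2}$ and the $(1-\lambda^{6})$ gain in the auxiliary identity.

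The main technical obstacle I expect is the intermediate regime where $\lambda$ is neither very close to $1$ nor small: neither the ``everything is diagonal'' bound nor the Taylor-type approximation is individually tight, and the $O(\lambda^{6}\sqrt{\beta})$ error from the approximation of $T_{3}$ is not obviously absorbable into the slack. Closing this regime requires combining the auxiliary identity with the further structural triangle constraints on $A,B,C$ and a finer case split on the sign patterns and on the Hamming distance between $y$ and $z$. The specific numeric constant $\beta\leq 2^{-160}$ (arising from the choice of outer tensor $t=2^{240}+1$) is calibrated precisely to give just enough slack for all the perturbative errors to fit inside the margin across all cases. The resulting case analysis is tedious, driven by the arithmetic relations among $\alpha,\lambda,\beta$, which is why the full proof is relegated to Appendix~\ref{sec:sdp-triangle}.
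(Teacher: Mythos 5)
Your proposal takes a genuinely different route from the paper's proof. Where the paper immediately decomposes the claim into $N$ per-column inequalities (one for each fixed index $j$) and then runs a four-way case split on $\lambda$, you instead work globally with the full $N\times N$ matrices $A$, $B$, $C$, deriving Frobenius-type estimates from the factorization $B=AC$ and deploying a clean auxiliary identity that converts the $B$-part of the target into a sum over the Hamming distance $\{j:y_j\neq z_j\}$. The structural consequences you extract from the triangle inequality and from orthogonality (e.g., $\alpha\in[\lambda-\beta,\lambda+\beta]$, $B_{ik}\leq 1-\lambda$, the bound $\sum_{j\ne i}C_{ij}^8\leq 16\beta^4$) are all correct, and your treatment of the regime $\lambda\approx 1$ is sound in outline. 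The auxiliary identity itself is also correct, and it is an elegant way to expose the nonnegativity coming from $|\sum_i x_iB_{ij}^8|\leq\lambda^6$.

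The gap is exactly where you flag it, but it is more serious than your phrasing suggests: the intermediate regime is not merely ``hard to close,'' it is where your error budget structurally fails. Your Taylor/Cauchy--Schwarz step gives
$$T_3 = (1-\beta)^8\cdot\tfrac{1}{N}\sum_{i,j}x_iy_jB_{ij}^8 + E, \qquad |E| = O\!\left(\lambda^6\sqrt{\beta}\right),$$
while the only slack you have identified is $(1-(1-\beta)^8)(1-\lambda^6)=O(\beta(1-\lambda^6))$. For these to compare favorably one needs $\lambda^6/(1-\lambda^6)\lesssim\sqrt{\beta}\leq 2^{-80}$, i.e.\ $\lambda$ already essentially as small as in the trivial case, and the diagonal-approximation regime requires $1-\lambda$ to be of comparable smallness; the entire range $2^{-13}\lesssim\lambda\lesssim 1-\sqrt{\beta}$ is left uncovered, and ``a finer case split on sign patterns and Hamming distance'' is not a concrete plan for it. The paper avoids this by never averaging over $i,j$ simultaneously: it fixes a column $j$, reduces to $1+\sum_is_it_jA_{ij}^8\geq\sum_is_ir_jB_{ij}^8 + t_jr_j(1-\beta)^8+\sum_{i\ne j}C_{ij}^8$, and in the critical Case 2 ($\beta\leq\lambda\leq 1-\sqrt{\beta}$) uses the pointwise triangle bound $\big||\langle\u_i,\w_j\rangle|-|\langle\u_i,\v_j\rangle|\big|\leq\beta$ together with the telescoping estimate of Lemma~\ref{lemma:kth-power} to bound $\sum_i\big|\theta_i^8-\mu_i^8\big|$ by $\sum_{\ell\geq 1}\binom{8}{\ell}\lambda^{6-\ell}\beta^\ell+O(\beta^6)$; this error is $O(\lambda^5\beta)$ rather than $O(\lambda^6\sqrt\beta)$, which is exactly what makes it absorbable into the $8\beta(1-\lambda^{5})$-type slack available after expanding $1-(1-\beta)^8$. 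If you want to stay with the global-matrix viewpoint, you would need to improve your Taylor error from $\sqrt\beta$ to $\beta$ (e.g.\ by exploiting the per-row concentration of $\delta_{ij}=A_{ij}-(1-\beta)B_{ij}$ coming from the pointwise bound $|\delta_{ij}|\leq\sqrt{2\beta}$ together with $\sum_j\delta_{ij}^2\leq 2\beta$ for each fixed $i$, not just the aggregate Frobenius bound), which in effect reproduces the paper's per-column analysis.
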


\noindent
Note that we only have $|\langle v_1,w_1\rangle|=1-\beta$ but we can remove the absolute value and use this lemma as it holds for {\em all} sign patterns $x_i,y_i,z_i.$ The  proof of this  lemma is very technical and  appears in Appendix \ref{sec:sdp-triangle}.
We conclude with a proof of Lemma \ref{lemma:pow-t1}.

\begin{proof}[of Lemma \ref{lemma:pow-t1}]
First, we notice that it is sufficient to prove this inequality when $0 \leq a,b,c \leq 1.$
Suppose that $b <0$ and $c < 0,$ then $b^t+c^t < 0 \leq 1+a^t.$ Hence, without loss of generality assume that $b \geq 0.$ If $c<0$ and $a \geq 0,$ then $b^t+c^t < b^t \leq 1+ a^t.$ If $c <0$ and $a <0,$ by hypothesis, $1-c \geq b-a,$ which is the same as $1+|c| \geq b+|a|,$ and proving $1+a^t \geq b^t+c^t$ is equivalent to proving $1+|c|^t \geq b^t+|a|^t.$  Hence, we may assume that $c \geq 0.$ If $a <0,$ then $1+a^t= 1-|a|^t \geq 1-|a|=1+a \geq b+c \geq b^t+c^t.$  Hence, we may assume that $0 \leq a,b,c \leq 1.$

Further, we may assume that $a<b \leq c.$ Since, if $a\geq b,$ then $1+a^t \geq c^t+b^t.$ $1+a \geq b+c$ implies that $1-c \geq b-a.$ Notice that both sides of this inequality are positive. It follows from the fact that $0\leq a<b \leq c \leq 1,$ that
$\sum_{i=0}^{t-1} c^i \geq \sum_{i=0}^{t-1}a^{i}b^{t-1-i}.$ Multiplying these two inequalities, we obtain $1-c^t \geq b^t-a^t,$ which implies that $1+a^t \geq b^t+c^t.$ This completes the proof.

\end{proof}

\paragraph{\bf Acknowledgments.}

We would like  to thank Assaf Naor and James Lee for ruling out some of our initial approaches. Many
thanks to Sanjeev Arora, Moses Charikar, Umesh Vazirani, Ryan
O'Donnell and Elchanan Mossel for insightful discussions at
various junctures.

\bibliographystyle{plain}
\bibliography{ug-sc}

\appendix

\section{Proof of Soundness of the PCP Reduction}\label{sec:app-pcp}

\begin{lemma}[Same as Lemma \ref{lem:bs-s}] For every $t \in
(\nfrac{1}{2}, 1)$, there exists a constant $b_t > 0$ such that the
following holds: Let $\epsilon > 0$ be sufficiently small and let
$\: \calU$  be an instance of {\UG} with ${\rm opt}(\calU) <
2^{-O(\nfrac{1}{\epsilon^2})}.$ Then, for every $\nfrac{5}{6}$-piecewise balanced
proof $\Pi,$  $$\; \Pr\left[ V_\epsilon\;\rm{accepts} \;
\Pi \right] < 1-b_t\epsilon^t.$$
\end{lemma}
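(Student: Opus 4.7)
The plan is to prove the contrapositive: assuming $\Pr[V_\epsilon \text{ accepts } \Pi] \geq 1 - b_t \epsilon^t$ for a $\nfrac{5}{6}$-piecewise balanced proof $\Pi = \{A^v\}_{v \in V}$, I will produce a randomized labeling $\lambda : V \to [N]$ whose expected value is at least $2^{-O(1/\epsilon^2)}$, contradicting ${\rm opt}(\calU) < 2^{-O(1/\epsilon^2)}$ once the constants are matched. The starting point is the Fourier identity for a single test on edge $e\{v,w\}$. Writing $y = (x\mu)\circ \pi_e$ so that $\chi_T(y) = \chi_{\pi_e(T)}(x)\chi_{\pi_e(T)}(\mu)$ and averaging over $x \in_{\nfrac{1}{2}} \PM$ and $\mu \in_\epsilon \PM$,
$$\Pr[V_\epsilon \text{ accepts on } e] \;=\; \tfrac{1}{2} + \tfrac{1}{2}\sum_{T \subseteq [N]} \widehat{A^v}_{\pi_e(T)}\widehat{A^w}_T (1-2\epsilon)^{|T|},$$
so the hypothesis reads $\E_e[\sum_T \widehat{A^v}_{\pi_e(T)}\widehat{A^w}_T (1-2\epsilon)^{|T|}] \geq 1 - 2b_t\epsilon^t$.

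I would next isolate a ``heavy'' Fourier spectrum on each side. Applying $2ab \leq a^2+b^2$ and using regularity of $\calU$, one gets $\E_v[\sum_S (\widehat{A^v}_S)^2(1 - (1-2\epsilon)^{|S|})] \leq 2b_t\epsilon^t$. Since $1 - (1-2\epsilon)^{|S|} \geq \tfrac{1}{2}$ for $|S| \geq k := \Theta(1/\epsilon)$, this bounds $\E_v[\sum_{|S| > k}(\widehat{A^v}_S)^2] \leq 4b_t\epsilon^t$. Picking $b_t$ small enough in terms of the Bourgain constant $c_t$ of Theorem \ref{thm:bourgain-junta} and applying Markov, a constant fraction of $v$ simultaneously satisfy $\sum_{|S|>k}(\widehat{A^v}_S)^2 < c_t k^{-t}$ (Bourgain's hypothesis) and $|\widehat{A^v}_\emptyset| \leq \nfrac{11}{12}$ (from $\E_v[|\widehat{A^v}_\emptyset|] \leq \nfrac{5}{6}$ and Markov). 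For such \emph{good} $v$, Bourgain's Junta Theorem with $\gamma$ a small absolute constant yields a set of heavy Fourier coefficients $H_v := \{S : 1 \leq |S| \leq k,\ |\widehat{A^v}_S| > \gamma \cdot 4^{-k^2}\}$ whose $\ell_2$ mass is bounded below by a positive constant, and by Parseval $|H_v| \leq \gamma^{-2}\cdot 16^{k^2} = 2^{O(1/\epsilon^2)}$.

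The labeling is then standard Long Code decoding: for each good $v$, draw $S \in H_v$ with probability proportional to $(\widehat{A^v}_S)^2$ and output a uniformly random $i \in S$ as $\lambda(v)$. To analyze it, I return to the Fourier sum and subtract off cheap contributions. The $T = \emptyset$ term is bounded in expectation by $\E_v[(\widehat{A^v}_\emptyset)^2] \leq \nfrac{5}{6}$ (Cauchy--Schwarz, $|\widehat{A^v}_\emptyset| \leq 1$, piecewise balance, regularity); the tail $|T| > k$ contributes at most $(1-2\epsilon)^k = o(1)$ in absolute value per edge; the light contributions where either $T \notin H_w$ or $\pi_e(T) \notin H_v$ are $O(\gamma)$ in expectation by Cauchy--Schwarz against Bourgain's $\gamma^2$-bound. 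Hence the matched-heavy correlation $\E_e[\sum_{T \in H_w,\, \pi_e(T) \in H_v} \widehat{A^v}_{\pi_e(T)}\widehat{A^w}_T (1-2\epsilon)^{|T|}]$ is at least a positive constant. By Markov, on a constant fraction of edges $e\{v,w\}$ with both endpoints good this correlation is $\Omega(1)$; Cauchy--Schwarz with $|H_v| \leq 2^{O(1/\epsilon^2)}$ then gives $\sum_{S = \pi_e(T)\in H_v,\, T \in H_w}(\widehat{A^v}_S)^2(\widehat{A^w}_T)^2 \geq 2^{-O(1/\epsilon^2)}$, so the labeling satisfies such an edge with probability at least $\tfrac{1}{k}\cdot 2^{-O(1/\epsilon^2)} = 2^{-O(1/\epsilon^2)}$, giving the desired bound after averaging.

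The main obstacle will be the passage from ``the weighted Fourier correlation is large'' to ``there exist matched \emph{heavy} coefficients on both sides synchronized by $\pi_e$''. A priori the correlation could be inflated by the constant term $\widehat{A^v}_\emptyset\widehat{A^w}_\emptyset$ or by the cumulative effect of many tiny coefficients; Bourgain's Junta Theorem is precisely what lets me prune each side's spectrum to a bounded-size heavy set $H_v$ without losing the $\Omega(1)$ correlation, while the $\nfrac{5}{6}$-piecewise balance is what prevents the constant part from swallowing the entire budget. The remaining work is bookkeeping: choose the constants in the order $k = \Theta(1/\epsilon)$, then $\gamma$ small, then $b_t$ small enough in terms of $c_t$, so that the sequence of error budgets ($O(b_t\epsilon^t)$, $(1-2\epsilon)^k$, $O(\gamma)$, $\nfrac{5}{6}$) leaves a positive constant of ``useful'' correlation to be decoded into labels.
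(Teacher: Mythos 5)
Your overall plan follows the same skeleton as the paper's proof: Fourier-expand the acceptance probability, truncate the spectrum at degree $k=\Theta(1/\epsilon)$, invoke Bourgain's Junta Theorem to extract a bounded-size heavy set $H_v$, and decode a labeling by Long Code decoding with probability proportional to $(\widehat{A^v}_S)^2$. Where you genuinely diverge is in the decoding analysis: you work \emph{per edge} and prune the spectrum to heavy sets on \emph{both} endpoints, while the paper works \emph{per (very good) vertex}, averaging over the edges incident to it, and prunes only the $v$-side (the $w$-side is controlled purely by Cauchy--Schwarz/Jensen/Parseval, so the other endpoint never has to be ``good''). This difference is substantive and is precisely where a gap appears.

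The concrete gap is the claim that ``on a constant fraction of edges $e\{v,w\}$ with both endpoints good this correlation is $\Omega(1)$.'' You have defined ``good'' to include the condition $|\widehat{A^v}_\emptyset| \leq \nfrac{11}{12}$, and Markov applied to $\E_v[|\widehat{A^v}_\emptyset|] \leq \nfrac{5}{6}$ only yields $\Pr[|\widehat{A^v}_\emptyset| \leq \nfrac{11}{12}] \geq \nfrac{1}{11}$; this cannot be pushed close to $1$. Once $\Pr[v\ \mathrm{good}]$ is bounded below by only a small constant (well below $\nfrac{1}{2}$), there is no union-bound (or any other) argument showing a constant fraction of edges have \emph{both} endpoints good --- the good vertices could a priori form an independent set. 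Consequently $\E_e[\mathbf{1}_{\mathrm{both\ good}}\cdot(\mathrm{matched\mbox{-}heavy})]$ cannot be lower-bounded as you intend: the loss $\Pr[\mathrm{not\ both\ good}]$ can exceed the entire $\nfrac{1}{6}$ budget left after subtracting the $\emptyset$ contribution. The fix is to remove the per-vertex condition on $\widehat{A^v}_\emptyset$ from the definition of ``good'' (you already handle the $T=\emptyset$ term only in edge-expectation via $\E_e[\widehat{A^v}_\emptyset \widehat{A^w}_\emptyset] \leq \nfrac{5}{6}$, so that condition buys you nothing anyway); then ``good'' $=$ Bourgain's hypothesis alone, and by shrinking $b_t$ you can push $\Pr[v\ \mathrm{good}]$ to $1-\delta$ for any small $\delta$, so $\Pr[\mathrm{both\ good}] \geq 1-2\delta$ and the subtraction goes through. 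Alternatively you could adopt the paper's one-sided pruning and avoid the both-endpoints condition altogether. A secondary slip worth noting: $(1-2\epsilon)^k$ is $\Theta(1)$, not $o(1)$, for $k=\Theta(1/\epsilon)$; you must take $k = C/\epsilon$ with $C$ a large enough absolute constant (adjusting $b_t$ accordingly) to make this a sufficiently small constant rather than hoping it vanishes.
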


\begin{proof}
The proof is by contradiction: We assume that there is a $\nfrac{5}{6}$-piecewise balanced proof $\Pi,$ which the verifier accepts with probability at least $1-b_t\epsilon^t,$ and deduce that
${\rm opt}(\calU) \geq
2^{-O(\nfrac{1}{\epsilon^2})}.$  We let $b_t\defeq \frac{1-e^{-2}}{96}c_t,$ where $c_t$ is the constant in Bourgain's Junta theorem.

\noindent
The probability of acceptance of the verifier is
$$  \frac{1}{2} +\frac{1}{2}\E_{v,e\{v,w\} , \x, \bmu }\left[
 A^v(\x)A^w(\x\bmu \circ \pi_e)\right]. $$
Using the Fourier expansion  $A^v = \sum_\alpha \widehat{A}^v_\alpha
\chi_\alpha$ and $A^w= \sum_\beta \widehat{A}^w_\beta \chi_\beta,$ and the orthonormality of characters, we get that this probability is
$$  \frac{1}{2} +\frac{1}{2}\E_{v,e\{v,w\}}\left[
 \sum _\alpha \widehat{A}^v_\alpha \widehat{A}^w_{\pi_e^{-1}(\alpha)} (1-2\epsilon)^{|\alpha|}\right]. $$
Here $\alpha \subseteq [N].$ Hence, the acceptance probability is
$$  \frac{1}{2} +\frac{1}{2}\E_{v}\left[
 \sum _\alpha \widehat{A}^v_\alpha \E_{e\{v,w\}} \left[\widehat{A}^w_{\pi_e^{-1}(\alpha)} \right] (1-2\epsilon)^{|\alpha|}\right]. $$
  If this
acceptance probability is at least $1-b_t\epsilon^t,$   then,
$$\E_{v}\left[
 \sum _\alpha \widehat{A}^v_\alpha \E_{e\{v,w\}} \left[\widehat{A}^w_{\pi_e^{-1}(\alpha)} \right] (1-2\epsilon)^{|\alpha|}\right] \geq 1-2b_t\epsilon^t.$$
Hence,
over the  choice of $v$, with probability at least $\nfrac{23}{24},$
$$\sum _\alpha \widehat{A}^v_\alpha \E_{e\{v,w\}} \left[\widehat{A}^w_{\pi_e^{-1}(\alpha)} \right] (1-2\epsilon)^{|\alpha|} \geq  1-48b_t\epsilon^t.$$
Call such vertices $v\in V$ {\em good}.
 Fix  a good vertex $v.$
Using the Cauchy-Schwarz inequality we get,
$$ \sum _\alpha \widehat{A}^v_\alpha \E_{e\{v,w\}} \left[\widehat{A}^w_{\pi_e^{-1}(\alpha)} \right] (1-2\epsilon)^{|\alpha|} \leq \sqrt{\sum_\alpha \left(\widehat{A}^v_\alpha\right)^2 (1-2\epsilon)^{2 |\alpha|}\sum_\alpha \E_{e\{v,w\}}^2 \left[\widehat{A}^w_{\pi_e^{-1}(\alpha)}\right]}.$$ Combining Jensen's inequality and Parseval's identity, we get that  $$\sum_\alpha \E_{e\{v,w\}}^2 \left[\widehat{A}^w_{\pi_e^{-1}(\alpha)}\right] \leq 1.$$ Hence,
$$
 1-96b_t\epsilon^t \leq \sum_\alpha \left(\widehat{A}^v_\alpha\right)^2 (1-2\epsilon)^{2 |\alpha|}.
$$
Now we combine Parseval's identity with the fact that $1-x \leq e^{-x}$ to obtain
$$\sum_{\alpha \ : \  |\alpha| > \nfrac{1}{\epsilon} }\left(\widehat{A}^v_\alpha\right)^2 \leq \frac{96}{1-e^{-2}}b_t\epsilon^t = c_t\epsilon^t.$$
Hence, by Bourgain's Junta theorem
$$
\sum_{\alpha\ : \ |\widehat{A}^v_\alpha| \leq
\frac{1}{50}4^{\nfrac{-1}{\epsilon^2}}}\left(\widehat{A}^v_\alpha\right)^2 \leq
\frac{1}{2500}.
$$
Call $\alpha$ {\em good} if $\alpha \subseteq [N]$  is nonempty,
$|\alpha| \leq \epsilon^{-1}$ and $|\widehat{A}^v_\alpha| \geq
\frac{1}{50} 4^{\nfrac{-1}{\epsilon^2}}.$

\medskip
\noindent
{\bf Bounding the contribution due to large sets.}
\smallskip

\noindent Using the Cauchy-Schwarz inequality, Parseval's identity
and Jensen's inequality, we get
$$
 \left| \sum_{\alpha \ : \ |\alpha| >\nfrac{1}{\epsilon}}
 \widehat{A}^v_\alpha \E_{e\{v,w\}} \left[ \widehat{A}^w_{\pi_e^{-1}(\alpha)} \right](1-2\epsilon)^{|\alpha|} \right|
\leq
 \sqrt{ \sum_{\alpha \ : \ |\alpha| > \nfrac{1}{\epsilon}}
\left(\widehat{A}^v_\alpha\right)^2}  < \sqrt{c_t\epsilon^t}.
$$
We can choose $\epsilon$ to be small enough so that the last term above is less than $\nfrac{1}{50}.$

\medskip\medskip
\noindent {\bf Bounding the contribution due to small Fourier
coefficients.}
\smallskip

\noindent Similarly, we use $\sum_{\alpha \ :\
|\widehat{A}^v_\alpha| \leq
\frac{1}{50}4^{\nfrac{-1}{\epsilon^2}}}\left(\widehat{A}^v_\alpha \right)^2 \leq
\nfrac{1}{2500},$ and  get
$$ \left| \sum_{\alpha \ : \  |\widehat{A}^v_\alpha| \leq
\frac{1}{50}4^{\nfrac{-1}{\epsilon^2}}}
 \widehat{A}^v_\alpha \E_{e\{v,w\}} \left[ \widehat{A}^w_{\pi_e^{-1}(\alpha)} \right](1-2\epsilon)^{|\alpha|} \right|
\leq   \frac{1}{50}.$$

\medskip\medskip
\noindent
{\bf Bounding the contribution due to the empty set.}
\smallskip

\noindent Since $\E_v \left[ |\widehat{A}^v _\emptyset |\right]\leq
\nfrac{5}{6},$ $\E_{v}\left[\E_{e \{v,w\}}\left[|\widehat{A}^v
_\emptyset \widehat{A}^w _\emptyset  | \right]\right]\leq
\nfrac{5}{6}.$ This is because each $|\widehat{A}^v _\emptyset| \leq
1.$ Hence, with probability at least $\nfrac{1}{12}$ over the
choice of $v,$ $\E_{e\{v,w\}}\left[|\widehat{A}^v _\emptyset
\widehat{A}^w _\emptyset |\right] \leq \nfrac{10}{11}.$ Hence, with
probability at least $\nfrac{1}{24}$ over the choice of $v,$ $v$ is
good and  $\E_{e\{v,w\}}\left[|\widehat{A}^v_\emptyset
\widehat{A}^w _\emptyset |\right] \leq \nfrac{10}{11}.$ Call such a
vertex {\em very good}.

\medskip\medskip
\noindent
{\bf Lower bound for a very good vertex with good sets.}
\nopagebreak

\noindent Hence, for a very good $v,$
\begin{equation}
\label{eqn:goodest}
\sum_{\alpha \; \text{is good}}
 \widehat{A}^v_\alpha \E_{e\{v,w\}} \left[ \widehat{A}^w_{\pi_e^{-1}(\alpha)} \right](1-2\epsilon)^{|\alpha|} \geq 1-\frac{1}{50}-\frac{1}{50}-\frac{10}{11}\geq \frac{1}{22}.
\end{equation}

\medskip\medskip
\noindent {\bf The labeling.}

\noindent
Now we define a labeling for the {\UG} instance $\calU$   as follows: For a  vertex $v \in V$, pick $\alpha$ with
probability $\left(\widehat{A}^v_\alpha\right)^2,$ pick a random element of $\alpha$
and define it to be the label of $v.$

\medskip
\noindent
Let $v$  be a very good vertex.
It follows
that the weight of the edges adjacent to $v$ satisfied by this labeling  is at least
$$
\E_{e\{v,w\}} \left[ \sum_{\alpha \; \text{is good}}
\left(\widehat{A}^v_\alpha\right)^2 \left(\widehat{A}^w_{\pi_e^{-1}(\alpha)}\right)^2
\frac{1}{|\alpha|} \right] \geq \epsilon \ \E_{e\{v,w\}}
\left[ \sum_{\alpha \; \text{is good}} \left(\widehat{A}^v_\alpha\right)^2
\left(\widehat{A}^w_{\pi_e^{-1}(\alpha)}\right)^2 \right].$$ This is at least
$$ \epsilon \ \frac{1}{2500}4^{\nfrac{-2}{\epsilon^2}}\; \E_{e\{v,w\}}
\left[ \sum_{\alpha \; \text{is good}} \left(\widehat{A}^w_{\pi_e^{-1}(\alpha)}\right)^2 \right],$$
which is at least
 $$
\epsilon \ \frac{1}{2500}4^{\nfrac{-2}{\epsilon^2}}\; \E_{e\{v,w\}}
\left[ \sum_{\alpha \; \text{is good}}
\left(\widehat{A}^w_{\pi_e^{-1}(\alpha)}\right)^2(1-2\epsilon)^{|\alpha|}
\right].$$ It follows from the Cauchy-Schwarz inequality and
Parseval's identity that this is at least
$$\epsilon \ \frac{1}{2500} 4^{\nfrac{-2}{\epsilon^2}}\; \E_{e \{v,w\}} \left[ \left| \sum_{\alpha \; \text{is good}} \widehat{A}^v_\alpha \widehat{A}^w_{\pi_e^{-1}(\alpha)}(1-2\epsilon)^{|\alpha|} \right|^2 \right].$$ Using Jensen's inequality, we get that this is at least
$$\epsilon \ \frac{1}{2500} 4^{\nfrac{-2}{\epsilon^2}}\; \left( \E_{e \{v,w\}}  \left[ \sum_{\alpha \;
\text{is good}} \widehat{A}^v_\alpha
\widehat{A}^w_{\pi_e^{-1}(\alpha)}(1-2\epsilon)^{|\alpha|} \right]
\right)^2 \geq \epsilon \ \frac{1}{2500}4^{\nfrac{-2}{\epsilon^2}}
\frac{1}{484}.$$ Here, the last inequality follows from our
estimate in Equation \eqref{eqn:goodest}. Since, with probability
at least $\nfrac{1}{24}$ over the choice of $v,$ $v$ is very good,
our labeling  satisfies edges with total weight at least
   $\Omega\left(\epsilon \ 4^{\nfrac{-2}{\epsilon^2}} \right).$   This completes the proof of the lemma.
\end{proof}

\section{Proof of Lemma \ref{lemma:allclose1} (Triangle Inequality Constraint)}\label{sec:sdp-triangle}

\begin{lemma} \label{lemma:allclose}[Same as Lemma \ref{lemma:allclose1} up to a renaming of variables]
 Let $\{\u_i\}_{i=1}^N, \{\v_i \}_{i=1}^N,
\{\w_i\}_{i=1}^N$ be three sets of unit vectors in $\Real^N,$
such that the vectors
in each set are mutually orthogonal. Assume that any three of
these vectors satisfy the triangle inequality. Assume, moreover, that
\begin{gather}
  \langle \u_1, \v_1 \rangle= \langle \u_2, \v_2 \rangle= \cdots = \langle \u_N , \v_N \rangle, \\
 \; \;   \lambda \defeq \langle \u_1,\w_1 \rangle =  \langle \u_2, \w_2 \rangle= \cdots = \langle \u_N,\w_N \rangle
   \geq 0, \\
 \forall 1\leq i,j \leq N, \; \;  |\langle \u_i, \w_j\rangle | \leq  \lambda, \\
 1-\eta \defeq \langle \v_1, \w_1 \rangle=  \langle \v_2, \w_2 \rangle=  \cdots = \langle \v_N , \w_N \rangle
 ,
\end{gather}
 where $0 \leq \eta \leq 2^{-160}.$  Let
 $s_i, t_i, r_i \in \{-1,1\}$ for $1\leq i \leq N$. Define unit vectors
  $$ \u \defeq  \frac{1}{\sqrt{N}} \sum_{i=1}^N s_i \u_i^{\otimes 8},
  \quad \v \defeq  \frac{1}{\sqrt{N}}  \sum_{i=1}^N t_i \v_i^{\otimes 8} \quad
  \w \defeq \frac{1}{\sqrt{N}}  \sum_{i=1}^N r_i \w_i^{\otimes 8} .$$
Then, the vectors $\u,\v,\w$ satisfy the triangle inequality $1+\langle \u,\v \rangle \geq
\langle \u,\w \rangle + \langle \v,\w\rangle,$ i.e.,
\begin{equation} \label{allclose-1}
  N + \sum_{i,j=1}^N s_i t_j \langle \u_i, \v_j\rangle^{8} \geq \sum_{i,j=1}^N
    s_i r_j \langle \u_i, \w_j\rangle^{8} + \sum_{i,j=1}^N t_i r_j \langle \v_i, \w_j\rangle^{8}.\end{equation}
\end{lemma}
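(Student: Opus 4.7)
The plan is to translate the inequality into a statement about three orthogonal matrices and then make a careful perturbative case analysis, exploiting the fact that $\eta\le 2^{-160}$ is extraordinarily small. Concretely, I will introduce the three change-of-basis matrices
\[
P_{ij} := \langle \u_i,\w_j\rangle,\qquad Q_{ij}:=\langle \v_i,\w_j\rangle,\qquad B_{ij}:=\langle \u_i,\v_j\rangle = (PQ^{T})_{ij}.
\]
Each is an orthogonal $N\times N$ matrix. The hypotheses give $|P_{ij}|\le\lambda$ with $P_{ii}=\lambda$, and $Q_{ii}=1-\eta$ with $\sum_{j\neq i}Q_{ij}^{2}=2\eta-\eta^{2}$. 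Applying the pairwise triangle inequality to the triple $(\u_i,\v_j,\w_j)$ in both orderings yields the sharp bound $|B_{ij}-P_{ij}|\le\eta$ (much stronger than what Cauchy--Schwarz alone would give from $\|\v_j-\w_j\|\le\sqrt{2\eta}$); in particular $|B_{ij}|\le\lambda+\eta$ uniformly.

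The next step is two ``diagonal'' approximations whose errors are polynomially small in $\eta$. First, since $|Q_{ij}|\le\sqrt{2\eta}$ for $i\neq j$, one has $\sum_{j\neq i}Q_{ij}^{8}\le(2\eta)^{3}\sum_{j\neq i}Q_{ij}^{2}\le 16\eta^{4}$, so
\[
\langle \v,\w\rangle \;=\; \frac{(1-\eta)^{8}}{N}\sum_{i}t_{i}r_{i}\;+\;E_{vw},\qquad |E_{vw}|\le 16\eta^{4}.
\]
Second, using $|B_{ij}^{8}-P_{ij}^{8}|\le 8\max(|B_{ij}|,|P_{ij}|)^{7}\eta$ together with the row estimate $\sum_{j}|P_{ij}|^{7}\le\lambda^{5}\sum_{j}P_{ij}^{2}=\lambda^{5}$ (and the analogous estimate for $B$), I obtain
\[
\langle \u,\v\rangle\;=\;\frac{1}{N}\sum_{i,j}s_{i}t_{j}P_{ij}^{8}\;+\;E_{uv},\qquad |E_{uv}|\le 16\eta(\lambda+\eta)^{5}\le 16\eta.
\]
Because $\langle \u,\w\rangle=\frac{1}{N}\sum s_{i}r_{j}P_{ij}^{8}$ exactly, the target triangle inequality reduces, after dividing the statement \eqref{allclose-1} by $N$ and collecting errors, to the cleaner inequality
\[
1\;+\;\frac{1}{N}\sum_{i,j}s_{i}(t_{j}-r_{j})P_{ij}^{8}\;\ge\;(1-\eta)^{8}\cdot\frac{1}{N}\sum_{i}t_{i}r_{i}\;+\;O(\eta).
\]

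Writing $c_{j}=\tfrac12(t_{j}-r_{j})\in\{-1,0,1\}$ and $k=|\{j:c_{j}\neq 0\}|$, the right side equals $(1-\eta)^{8}(1-\tfrac{2k}{N})+O(\eta)$, while $|\frac{1}{N}\sum s_{i}c_{j}P_{ij}^{8}|\le\frac{2k\lambda^{6}}{N}$ using the column bound $\sum_{i}|P_{ij}|^{8}\le\lambda^{6}$. In the worst-case sign pattern the reduction becomes
\[
\frac{2k}{N}\bigl[(1-\eta)^{8}-\lambda^{6}\bigr]\;\ge\;O(\eta),
\]
which holds comfortably whenever $\lambda$ is bounded away from $1$ and $k$ is not too small, finishing the ``generic'' case.

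The main obstacle, and the reason the Appendix requires an extensive case analysis, is the boundary regime $\lambda\to 1$ (forcing $P$ to be close to a signed permutation with equal diagonal, hence close to $I$) and the companion regime $k\to 0$ (where the slack $\frac{2k}{N}(1-\lambda^{6})$ is too small to absorb the $O(\eta)$ error). In the first regime I will use orthogonality, $\sum_{j\neq i}P_{ij}^{2}\le 1-\lambda^{2}$, to reduce the inequality to its $\lambda=1$ specialization where $u,v,w$ all lie in the single tensor subspace spanned by $\{w_i^{\otimes 8}\}$ and the triangle inequality holds pointwise in $i$, plus a quadratic correction in $(1-\lambda^{2})$. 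In the second regime I will refine the expansion $B_{ij}=(1-\eta)P_{ij}+R_{ij}$ with $|R_{ij}|\le\sqrt{2\eta}$ into $B_{ij}^{8}=(1-\eta)^{8}P_{ij}^{8}+8(1-\eta)^{7}P_{ij}^{7}R_{ij}+\cdots$, and check that the leading-order $\eta$ terms in $\langle \u,\v\rangle$ cancel against the $(1-\eta)^{8}$ factor on the right-hand side up to genuine $O(\eta^{2})$ errors. The purpose of the $8$-th inner tensor is precisely to make these off-diagonal contributions decay fast enough (as $\lambda^{6}$ rather than $\lambda$) that the various regimes separate cleanly; the appendix carries out this (largely mechanical but technically delicate) case split in full detail.
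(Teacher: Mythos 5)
Your strategy---translating to three orthogonal matrices $P,Q,B=PQ^T$ and doing global error bookkeeping---is a genuinely different decomposition from the paper's, and some of your estimates are correct (in particular $|B_{ij}-P_{ij}|\le\eta$ from the triangle inequality, and the row-sum bound $|E_{uv}|\le 16\eta(\lambda+\eta)^5$). But the proposal has a real gap: it defers essentially all of the hard work to unspecified boundary arguments, and the ``cleaner inequality'' you isolate fails precisely in the regime where the lemma is nontrivial.

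Concretely, after worst-case sign choices your reduction requires
\begin{equation*}
\bigl[1-(1-\eta)^8\bigr]\;+\;\frac{2k}{N}\bigl[(1-\eta)^8-\lambda^6\bigr]\;\ge\;16\eta(\lambda+\eta)^5+O(\eta^4).
\end{equation*}
Since $1-(1-\eta)^8<8\eta$, the first bracket alone cannot absorb the right-hand side once $16(\lambda+\eta)^5>8$, i.e.\ $\lambda\gtrsim 2^{-1/5}\approx 0.87$; and for $\lambda>(1-\eta)^{4/3}$ the middle term is actually \emph{negative}, so increasing $k$ hurts rather than helps. When $k=0$ (i.e.\ $t=r$) the middle term vanishes and the inequality is already false in this range. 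So your ``generic'' bound only covers $\lambda\lesssim 0.87$, leaving open the whole interval $[1-\sqrt{\eta},1]$ on which the inequality becomes tight (when $\lambda=1$, $\u=\pm\w$ and the triangle inequality holds with equality for one sign choice, so there is no slack to waste on crude $O(\eta)$ errors). Your proposed refinements for this regime---``reduce to the $\lambda=1$ specialization plus a quadratic correction,'' ``refine $B=(1-\eta)P+R$ and check that leading $\eta$ terms cancel''---are strategy fragments, not arguments; you explicitly defer them to ``the appendix,'' which is where the entire content of the lemma lives.

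By contrast, the paper avoids global error bookkeeping with a sharper first step: it reduces \eqref{allclose-1} to a \emph{per-column} inequality (fix $j$; keep the full $i$-sums for $\langle\u,\v\rangle$ and $\langle\u,\w\rangle$, but isolate the diagonal $(j,j)$ term of $\langle\v,\w\rangle$ and bound the off-diagonal contribution by $(2\eta-\eta^2)^4$). That per-$j$ statement is then split into four $\lambda$-ranges. The two devices your sketch does not reproduce, and which make the constants close, are: (a) in the middle range, a partition of indices $i$ by whether $\theta_i:=|\langle\u_i,\w_j\rangle|$ exceeds $\eta$, combined with the tailored binomial estimate of Lemma~\ref{lemma:kth-power} that yields the sharp $8\lambda^5\eta$ scaling (rather than your $16\lambda^5\eta$), precisely enough to fit under the $8\eta$ budget given $\lambda\le 1-\sqrt{\eta}$; and (b) near $\lambda=1$, a direct argument through the three-way triangle inequality among $\eta$, $\zeta=1-\lambda$, and $\delta=1-\langle\u_j,\v_j\rangle$, which has no analogue in your global expansion. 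Without something playing the role of these two steps, your proposal reduces the problem but does not solve it.
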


\begin{proof} It suffices to show that for every $1\leq j \leq N$,
\begin{equation} \label{allclose-1.5}
  1 + \sum_{i=1}^N s_i t_j  \langle \u_i, \v_j\rangle^{8}  \geq \sum_{i=1}^N
    s_i r_j \langle \u_i, \w_j\rangle^{8} +  t_j r_j \langle \v_j, \w_j\rangle^{8} + \sum_{1\leq i \leq N,
    i \not= j} \langle \v_i, \w_j\rangle^{8}.
\end{equation}
We consider four cases depending on value of $\lambda$.

\medskip
\noindent {\bf (Case 1) $\lambda \leq \eta$ : }
Since $\langle \v_j, \w_j \rangle= 1-\eta$,
and $\sum_{1\leq i \leq N} \langle \v_i, \w_j\rangle ^2 = 1,$ we have
$$\sum_{1\leq i \leq N; i \not= j}  \langle \v_i, \w_j\rangle^{8} \leq (2\eta-\eta^2)^4.$$ Also,
$\sum_{i=1}^N \langle \u_i,\w_j\rangle ^{8} \leq \lambda^{6} \leq \eta^{6}$.
Moreover, for any $1 \leq i \leq N$, by the triangle inequality,
$$ 1 \pm \langle \u_i,\v_j \rangle \geq  \langle \v_j,\w_j\rangle  \pm \langle \u_i,\w_j\rangle \geq 1-\eta-\lambda
\geq 1-2\eta,$$
and therefore, $$|\langle \u_i,\v_j\rangle|\leq 2\eta.$$
Therefore,
 $\sum_{i=1}^N \langle \u_i,\v_j\rangle^{8} \leq (2\eta)^{6}$. Thus, it suffices to prove that
 $$ 1  \geq (2\eta)^{6} + \eta^{6} + (1-\eta)^{8} + (2\eta-\eta^2)^4. $$
This is true when  $ \ \eta \leq 2^{-160}$.

\medskip
\noindent {\bf (Case 2)  $\eta \leq  \lambda \leq 1-\sqrt{\eta}$ : } We  show that
\begin{equation} \label{allclose-2}
  1 + \sum_{i=1}^N s_i t_j \langle \u_i,\v_j\rangle^{8}  \geq
     \sum_{i=1}^N s_i r_j \langle \u_i,\w_j\rangle^{8}  +
      t_j r_j (1-\eta)^{8}  + (2\eta-\eta^2)^4.
\end{equation}
\noindent {\bf (Subcase i) $t_j \not= r_j$ : } In this case it suffices to show that
 $$   1 + (1-\eta)^{8} \geq  \sum_{i=1}^N  \langle \u_i,\v_j\rangle^{8}   +
     \sum_{i=1}^N   \langle \u_i,\w_j\rangle^{8}  +
          (2\eta-\eta^2)^4.   $$
Again, as before, we have that for every $1\leq i \leq N,$  $$|\langle \u_i,\w_j\rangle |\leq \lambda
\leq 1-\sqrt{\eta},$$ and $$|\langle \u_i,\v_j\rangle|\leq \lambda+\eta \leq
 1-\sqrt{\eta}+\eta.$$ Thus, it suffices to prove that
$$   1 + (1-\eta)^{8} \geq  (1-\sqrt{\eta}+\eta)^{6} + (1-\sqrt{\eta})^{6}
                               +     (2\eta-\eta^2)^4.   $$
This also holds when $\ \eta \leq 2^{-160}$.

\medskip
\noindent {\bf (Subcase ii) $t_j = r_j$ : } We need to prove
(\ref{allclose-2}).  It suffices to show that
\begin{equation*} \label{allclose-3}
  1 - (1-\eta)^{8} - (2 \eta-\eta^2)^4  \
 \geq \  \sum_{i=1}^N  |\langle \u_i,\w_j\rangle|^{8} - \langle \u_i,\v_j\rangle^{8} |
  = \sum_{i=1}^N | \theta_i^{8} - \mu_i^{8}|
\end{equation*}
where  $\theta_i \defeq |\langle \u_i,\w_j\rangle|, \ \ \mu_i \defeq |\langle \u_i,\v_j\rangle|$.
Clearly,
$$ |\theta_i - \mu_i| \leq
   | \langle \u_i,\v_j\rangle - \langle \u_i,\w_j \rangle | \leq  1-  \langle\v_i,\w_j \rangle  = \eta. $$
Here, we used the assumption that $(\u_i, \v_j, \w_j)$ satisfy the triangle inequality.
Note also that $\max_{1\leq i \leq N} \theta_i = \lambda $ and $\sum_{i=1}^N
 \theta_i^2 = 1$. Let $J\defeq  \{ i \ | \ \theta_i \leq \eta \}$ and
 $I \defeq \{ i \ | \ \theta_i \geq \eta\}$. We have,
 \begin{eqnarray*}
 \sum_{i=1}^N | \theta_i^{8} - \mu_i^{8}|  & \leq & \sum_{i \in J} (\theta_i^{8} +
\mu_i^{8}) + \sum_{i \in I} ((\theta_i+\eta)^{8} - \theta_i^{8})  \\
 & \leq &  (\eta)^{6} + (2 \eta)^{6} + \sum_{i \in I}
          ((\theta_i+\eta)^{8} - \theta_i^{8}).
\end{eqnarray*}
Lemma \ref{lemma:kth-power}, which appears after this proof, implies that the summation on the  last line above  is bounded by
$$\sum_{l=1}^{6} \binom{8}{l}
\lambda^{6-l}\eta^l +9 \eta^{6}.$$
Thus, it suffices to show that
$$  1 - (1-\eta)^{8} - (2 \eta-\eta^2)^4  \
 \geq \  \sum_{l=1}^{6} \binom{8}{l} \lambda^{6-l}
   \eta^l              + (4\eta)^{6}.$$
This is true if
$$ 8\eta - \sum_{l=2}^{8} \binom{8}{l} \eta^l -  (2\eta-\eta^2)^4
 \geq 8 \lambda^{5} \eta + \sum_{l=2}^{8} \binom{8}{l} \eta^l + (4\eta)^{6}.$$
This is true if
 $ 8\eta (1-\lambda^{5}) \geq \eta^2 (2^{8} + 2^{8} + 1 + 4^{8}).$
This is true if $8\eta\sqrt{\eta} \geq \eta^2 \cdot 4^{9},$ which holds
when $\ \eta \leq 2^{-160}.$ Note that we used the fact that $\lambda \leq 1-\sqrt{\eta}.$

\medskip
\noindent {\bf (Case 3)  $1-\sqrt{\eta} \leq \lambda \leq 1-\eta^2$ : }
We have $\langle \v_j,\w_j\rangle = 1-\eta$, $\langle \u_j,\w_j\rangle =\lambda =: 1-\zeta.$
This implies that $\langle \u_j,\v_j\rangle = 1-\delta,$ where by the triangle inequality
$$ \eta \leq \zeta + \delta, \ \ \delta \leq \eta + \zeta, \ \ \zeta \leq
   \eta + \delta. $$
Thus, to prove (\ref{allclose-1.5}), it suffices to show that
\begin{multline*}
 1 + s_j t_j \langle \u_j,\v_j\rangle^{8} \geq
        s_j r_j \langle \u_j,\w_j\rangle^{8} + t_j r_j \langle \v_j,\w_j\rangle^{8}
      +  (2\eta-\eta^2)^s + (2\zeta-\zeta^2)^4 +
        (2\delta-\delta^2)^4.
 \end{multline*}
Depending on signs $s_j, t_j, r_j$, this reduces to proving one of
the three cases:
$$ 1 + (1-\delta)^{8}  \geq
        (1-\zeta)^{8} + (1-\eta)^{8}   +  (2\eta-\eta^2)^4 + (2\zeta-\zeta^2)^4 +
        (2\delta-\delta^2)^4. $$
$$ 1 + (1-\eta)^{8}  \geq
        (1-\zeta)^{8} + (1-\delta)^{8}  + (2\eta-\eta^2)^4 +
 (2\zeta-\zeta^2)^4 +
        (2\delta-\delta^2)^4.   $$
 $$ 1 + (1-\zeta)^{8}  \geq
        (1-\eta)^{8} + (1-\delta)^{8}  + (2\eta-\eta^2)^4 +
 (2\zeta-\zeta^2)^4 +        (2\delta-\delta^2)^4.   $$
We  prove the first case, and the remaining two are proved in a
similar fashion. We have that 
\begin{eqnarray*}
1+(1-\delta)^{8} - (1-\zeta)^{8}- (1-\eta)^{8}
  & \geq  & 1+(1-(\zeta+\eta))^{8} - (1-\zeta)^{8} - (1-\eta)^{8}  \\
 & \geq &   8\cdot 7 \cdot \zeta\eta - \sum_{\begin{subarray}{c} 3 \leq i+j \leq 8 \\ i \geq 1, j \geq 1 \end{subarray}}
    \binom{8}{i+j}\binom{i+j}{i} \zeta^i \eta^{j} \\
 & \geq & 8 \cdot 7 \cdot \zeta \eta - \ 2^{32} \zeta \eta \cdot \max \{\zeta, \eta, \delta\} \\
&  \geq & \min \{\zeta \eta, \eta\delta, \zeta\delta\},
\end{eqnarray*}
provided that $2^{32} \max\{\zeta, \eta, \delta\} \leq 1.$
Thus, it suffices to have
  $$   \min\{\zeta \eta, \eta\delta, \zeta\delta\}
       \geq (2\eta-\eta^2)^4 + (2\zeta-\zeta^2)^4 +
        (2\delta-\delta^2)^4.    $$
This is clearly true if $\zeta, \eta, \delta$ are within a
quadratic factor of each other, and $\eta \leq 2^{-160}$. On the contrary if $\delta < \eta^2,$ since we already have $\delta \leq \eta + \zeta$
from the triangle inequality, it reduces to Case (2) by setting $\eta$ to
$\delta$ and setting $\lambda$ to $1 - \eta.$

\medskip
\noindent {\bf (Case 4) $1-\eta^2 \leq \lambda$ : } This is essentially same
as Case (2). Just interchange $1-\eta$ with $\lambda$ and interchange $\u_i, \v_i$ for
every $i$. This completes the proof of the  lemma.
\end{proof}

\begin{lemma} \label{lemma:kth-power}
Let $\eta, \lambda$  and
$\{\theta_i\}_{i=1}^N$ be non-negative reals, such that
 $\sum_{i=1}^N \theta_i^2 \leq 1,$ and for all $i,$ $\eta \leq
 \theta_i \leq \lambda.$
Then
  $$ \sum_{i=1}^N   ((\theta_i + \eta )^{8}- \theta_i^{8})  \leq
\sum_{l=1}^{6} \binom{8}{l} \lambda^{6-l} \eta^l +
            9 \eta^{6}. $$
\end{lemma}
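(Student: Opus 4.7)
The plan is to expand $(\theta_i+\eta)^8 - \theta_i^8$ using the binomial theorem and then bound $\sum_{i=1}^N \theta_i^{8-l}$ separately for each $1 \leq l \leq 8$. Writing
\[
  \sum_{i=1}^N \bigl((\theta_i+\eta)^8 - \theta_i^8\bigr) \;=\; \sum_{l=1}^{8} \binom{8}{l} \eta^l \sum_{i=1}^N \theta_i^{8-l},
\]
the task reduces to showing that the terms with $1 \leq l \leq 6$ contribute $\binom{8}{l}\lambda^{6-l}\eta^l$ each, while the terms with $l \in \{7,8\}$ together contribute at most $9\eta^6$.

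For $1 \leq l \leq 6$, I would use the factorization $\theta_i^{8-l} = \theta_i^{6-l}\cdot \theta_i^2 \leq \lambda^{6-l}\,\theta_i^2$, combined with the assumption $\sum_i \theta_i^2 \leq 1$. This immediately yields $\sum_{i=1}^N \theta_i^{8-l} \leq \lambda^{6-l}$, giving the main summation on the right-hand side.

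For the two remaining tail cases I would exploit the lower bound $\theta_i \geq \eta$, which converts bounds in $\theta_i^2$ into bounds on $\sum \theta_i$ and on $N$. Specifically, for $l=7$, writing $\theta_i \leq \theta_i^2/\eta$ gives $\sum_i \theta_i \leq 1/\eta$, so the $l=7$ term is bounded by $\binom{8}{7}\eta^7 \cdot (1/\eta) = 8\eta^6$. For $l=8$, the inequality $1 \leq \theta_i^2/\eta^2$ summed over $i$ gives $N \leq 1/\eta^2$, so the $l=8$ contribution is $\binom{8}{8}\eta^8 \cdot N \leq \eta^6$. Adding the two gives the $9\eta^6$ slack term stated in the lemma.

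There is no real obstacle here; the main point is just to recognize that $\sum \theta_i^2 \leq 1$ controls both a weighted $\ell_p$-type sum (for small $l$) and, via $\theta_i \geq \eta$, the cardinality $N$ and $\ell_1$ norm (for $l = 7, 8$). Combining the two regimes gives the claimed bound exactly, with the clean split between the $\lambda^{6-l}$ summation and the $9\eta^6$ tail matching the two distinct mechanisms used.
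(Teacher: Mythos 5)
Your proof is correct and follows essentially the same strategy as the paper: binomial expansion, the factorization $\theta_i^{8-l}=\theta_i^{6-l}\theta_i^2\le \lambda^{6-l}\theta_i^2$ together with $\sum_i\theta_i^2\le 1$ for $l\le 6$, and separate handling of $l=7,8$ via $\theta_i\ge\eta$. The only cosmetic difference is that for the $l=7$ term the paper bounds $\sum_i\theta_i\le\sqrt N\le 1/\eta$ by Cauchy--Schwarz, whereas you obtain $\sum_i\theta_i\le 1/\eta$ directly from $\theta_i\le\theta_i^2/\eta$; both yield the identical bound $8\eta^6$.
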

\begin{proof} Clearly, $N \leq \nfrac{1}{\eta^2}$.
\begin{eqnarray*}
\sum_{i=1}^N (\theta_i+\eta)^{8} - \theta_i^{8}
 & = & \sum_{i=1}^N \sum_{l=1}^{8}  \binom{8}{l}
 \theta_i^{8-l}\eta^l  \\
  & = & \sum_{l=1}^{8-2} \binom{8}{l}
  \sum_{i=1}^N \theta_i^{8-l} \eta^l  + 8\cdot \left(\sum_{i=1}^N
  \theta_i\right) \eta^{7} + N \eta^{8} \\
  & \leq & \sum_{l=1}^{6} \binom{8}{l} \lambda^{6-l}
   \eta^l +
  8\cdot \sqrt{N} \eta^{7} + N \eta^{8} \\
   & \leq & \sum_{l=1}^{6} \binom{8}{l} \lambda^{6-l}
   \eta^l +
            9 \eta^{6}.
\end{eqnarray*}
\end{proof}

\end{document}